\theoremstyle{plain}
\newtheorem{theorem}{Theorem}[section]
\newtheorem{lemma}[theorem]{Lemma}
\theoremstyle{definition}
\newtheorem{remark}[theorem]{Remark}
\numberwithin{equation}{section}
\newcommand{\volg}{1+ u^{2}| \nabla f |_{g}^{2}} %volume form of \bg%
\newcommand{\volbarg}{1-u^{2}| \overline{\nabla} f |_{\overline{g}}^{2}}  %volume form of  g%
\newcommand{\barna}{\overline{\nabla}} %barred nabla - covariant derivative w.r.t. barred g%
\newcommand{\bg}{\overline{g}}  %barred g%
\newcommand{\by}{\overline{Y}}  %barred Y%
\newcommand{\byp}{Y^{\phi}}  % Y^\phi%
\newcommand{\bk}{\overline{k}} %barred k - second fundamental form for t=0%
\newcommand{\be}{\overline{E}}
\newcommand{\bm}{\overline{B}}
\begin{document}

\title[Geometric Inequalities in Asymptotically Hyperboloidal Slices] {Reduction Arguments for Geometric Inequalities Associated With Asymptotically Hyperboloidal Slices}

\author[Cha]{Ye Sle Cha}
\address{Institut f\"ur Mathematik \\
 Freie Universit\"at Berlin \\
 14195 Berlin, Germany}
\email{ycha@zedat.fu-berlin.de}

\author[Khuri]{Marcus Khuri}
\address{Department of Mathematics\\
Stony Brook University\\
Stony Brook, NY 11794, USA}
\email{khuri@math.sunysb.edu}

\author[Sakovich]{Anna Sakovich}
%\address{Mathematical Sciences Research Institute\\Berkeley, CA 94720, USA}
\address{Max-Planck-Institut f\"ur Gravitationsphysik (Albert-Einstein-Institute)\\
 Am M\"uhlenberg 1, 14476 Potsdam, Germany}
\email{sakovich.ann@gmail.com}

\thanks{M. Khuri acknowledges the support of
NSF Grant DMS-1308753.}

\begin{abstract}
We consider several geometric inequalities in general relativity involving mass, area, charge, and angular momentum for asymptotically hyperboloidal initial data. We show how to reduce each one to the known maximal (or time symmetric) case in the asymptotically flat setting, whenever a geometrically motivated system of elliptic equations admits a solution.
\end{abstract}
\maketitle

\section{Introduction}
\label{sec1} \setcounter{equation}{0}
\setcounter{section}{1}

In \cite{SchoenYau}, Schoen and Yau proved the spacetime version of the positive mass theorem for asymptotically flat initial data by utilizing a reduction procedure involving the so called Jang equation \cite{Jang}. For asymptotically hyperboloidal slices of asymptotically flat spacetimes, a similar reduction argument was given \cite{SchoenYau1} (see also \cite{HuangYauZhang}) in which solutions of the Jang equation are required to possess hyperboloidal asymptotics. This type of solution to the Jang equation results in a deformation of the initial data, which transforms the original asymptotically hyperbolic structure into an asymptotically flat structure and preserves the mass up to multiplication by a positive constant. In addition, the Jang equation imparts a positivity property to the scalar curvature of the deformed data. As in \cite{SchoenYau} this yields a conformal change of metric to zero scalar curvature, from which one may conclude nonnegativity of the mass as a consequence of the time symmetric case \cite{SchoenYau0} of the positive mass theorem in the asymptotically flat setting. The existence of solutions to the Jang equation with the desired hyperboloidal asymptotics has recently been established in \cite{Sakovich} (see also \cite{SakovichThesis}).

In this paper we seek to generalize this strategy of transforming asymptotically hyperboloidal data to asymptotically flat data, so that it may be applied to several geometric inequalities motivated by the standard picture of gravitational collapse and the (weak) cosmic censorship conjecture \cite{Choquet-Bruhat,Penrose}. Namely, the inequalities that shall be treated here include the Penrose inequality \cite{Bray,HuiskenIlmanen,Mars}, the Penrose inequality with charge
\cite{Jang1,KhuriWeinsteinYamada,KhuriWeinsteinYamada0,KhuriWeinsteinYamada1},
the positive mass theorem with charge
\cite{BartnikChrusciel,ChruscielReallTod,GHHP}, the mass-angular momentum inequality \cite{ChruscielLiWeinstein,Dain0,SchoenZhou}, the mass-angular momentum-charge inequality \cite{ChruscielCosta,Costa,KhuriWeinstein1,SchoenZhou}, as well as a lower bound for the area of black holes in terms of mass, angular momentum, and charge \cite{DainKhuriWeinsteinYamada}.
In the asymptotically flat setting, the general case of each of these inequalities may be reduced to the known maximal (or time symmetric) case by solving a canonical system of equations specific to each inequality \cite{BrayKhuri1,BrayKhuri2,ChaKhuri1,ChaKhuri2,DisconziKhuri,KhuriWeinstein}. The main equation involved shares a resemblance to the classical Jang equation, and the solution is chosen to vanish at spatial infinity.
In the setting of asymptotically hyperbolic data arising from asymptotically hyperboloidal slices, we will show that these Jang-type equations may be solved with the hyperboloidal asymptotics produced in \cite{SakovichThesis, Sakovich} for the classical Jang equation. Thus, with a similar procedure, all these inequalities for asymptotically hyperboloidal slices are reduced to solving a canonical system of equations.

\section{Notation and Definitions}
\label{sec2} \setcounter{equation}{0}
\setcounter{section}{2}

Consider an initial data set $(M,g,k)$ for the Einstein equations modeling an asymptotically hyperboloidal, spacelike hypersurface, in an asymptotically flat spacetime. This consists of a Riemannian 3-manifold $M$ with asymptotically hyperbolic metric $g$, and asymptotically umbilic extrinsic curvature $k$. We define such an initial data set to be asymptotically hyperboloidal if it possesses an end which is
diffeomorphic to $S^{2}\times [r_{0},\infty)$, and in this region there are coordinates such that
\begin{equation}\label{1}
g=g_{0}+a, \text{ }\text{ }\text{ }\text{ }\text{ }k=g_{0} + b,
\end{equation}
where $g_{0}=\frac{dr^{2}}{1+r^{2}}+r^{2}\sigma$ is the hyperbolic metric with $\sigma$ the round metric on $S^2$, and
\begin{equation}\label{2}
a_{rr}=\frac{\mathbf{m}^{r}}{r^5}+O_{3}(r^{-6}), \text{ }\text{ }\text{ }\text{ }\text{ }\text{ } a_{r\alpha}=O_{3}(r^{-3}), \text{ }\text{ }\text{ }\text{ }\text{ }\text{ } a_{\alpha\beta}=\frac{\mathbf{m}^{g}_{\alpha\beta}}{r}+ O_{3}(r^{-2}),
\end{equation}
%\begin{equation}\label{2.1}
%\widetilde{a}_{\alpha\beta},\widetilde{b}_{\alpha\beta}=O(r^{-2}),\text{ }\text{ }\text{ %}\text{ }\text{ }\partial_{\gamma}\widetilde{a}_{\alpha\beta}=O(r^{-2}),\text{ }\text{ %}\text{ }\text{ }\text{ }\partial_{r}\widetilde{a}_{\alpha\beta}=O(r^{-3}),
%\end{equation}
\begin{equation}\label{3}
b_{rr}=O_{2}(r^{-5}), \text{ }\text{ }\text{ }\text{ }\text{ }\text{ }  b_{r\alpha}=O_{2}(r^{-3}),\text{ }\text{ }\text{ }\text{ }\text{ }\text{ } b_{\alpha\beta}= \frac{\mathbf{m}^{k}_{\alpha \beta}}{r} + O_{2}(r^{-2}),
\end{equation}
with Greek letters denoting indices for coordinates on $S^{2}$.
Here $\mathbf{m}^{g}$, $\mathbf{m}^{k}$ are tensors and $\mathbf{m}^{r}$ is a function, all on $S^{2}$ and independent of $r$. The notation $h=O_{l}(r^{-n})$ asserts that $r^{n+i}|\partial^{i}_{r}\partial^{\gamma}h|\leq C$ for all $i+|\gamma|\leq l$, and also for later use
$h=o_{l}(r^{-n})$ asserts that
$\lim_{r\rightarrow\infty}r^{n+i}\partial_{r}^{i}\partial^{\gamma}h=0$
for all $i+|\gamma|\leq l$. We note that these assumptions can be reformulated, and even weakened in the context of Sections \ref{sec3}-\ref{sec5} below, by using weighted H\"older spaces, see e.g. \cite{DahlSakovich}.

The quantities $\mathbf{m}^{g}$, $\mathbf{m}^{k}$ and $\mathbf{m}^{r}$ encode mass through the formula
\begin{equation}\label{3.1}
m=\frac{1}{16\pi}\int_{S^{2}}\left[Tr_{\sigma}\left( \mathbf{m}^{g} + 2\mathbf{m}^{k} \right)+2\mathbf{m}^{r}\right],
\end{equation}
where the integrand is the so called mass aspect function. Note that in referring to $m$ as mass this is a slight abuse of terminology, as this quantity is physically the total energy, that is, the first component of the energy-momentum vector (see \cite[Definition 1.4]{ChenWangYau}). Definitions of the energy-momentum vector in the case of asymptotically hyperboloidal initial data with more general asymptotic behavior at infinity can be found in \cite{ChruscielJezierskiLeski}, \cite{Michel}.

\begin{remark}
Clearly, the quantity $m$ defined by \eqref{3.1} allows interpretation as mass, that is length of energy-momentum vector, provided that the coordinate chart at infinity is such that the linear momentum vanishes. Such a coordinate chart at infinity is called balanced. Note that if the energy-momentum vector is timelike with respect to a given chart $\Phi$ at infinity then there is an isometry $I$ of the hyperbolic space $(\mathbb{H}^3, g_0)$ such that the chart $I \circ \Phi$ is balanced. This is a consequence of the fact that the energy-momentum vector transforms equivariantly with respect to Lorentz boosts which in turn restrict to (nonlinear) isometries of the hyperbolic space.
\end{remark}

The initial data also satisfy the constraint equations
\begin{equation}\label{4}
	2\mu = R+(Tr_{g} k)^{2}-|k|_{g}^{2},
	\text{ }\text{ }\text{ }\text{ }\text{ }\text{ } J = \operatorname{div}_{g}(k- (Tr_{g} k) g),
\end{equation}
where $\mu$ and $J$ are the energy and momentum density of the matter fields, and $R$ is scalar curvature. Moreover, the dominant energy condition is given by
\begin{equation}\label{5}
\mu \geq |J|_{g}.
\end{equation}

For geometric inequalities involving electromagnetic charge, we will make use of initial data for the Einstein-Maxwell equations $(M,g,k,E,B)$. Here $E$ and $B$ are vector fields representing the induced electric and magnetic field on the slice. Such data will also be referred to as asymptotically hyperboloidal, if in addition to the requirements above the electromagnetic field satisfies
\begin{equation}\label{6}
  E_{r},B_{r} = O(r^{-3}),\text{ }\text{ }\text{ }\text{ }\text{ }E_{\alpha},B_{\alpha}=O(r^{-1})\text{ }\text{ }\text{ }\text{ }\text{ }
  \Rightarrow\text{ }\text{ }\text{ }\text{ }|E|_{g}+|B|_{g}=O(r^{-2}).
\end{equation}
The energy and momentum density of the non-electromagnetic matter fields is given by
\begin{equation}\label{7}
  	2\mu_{EM} = R+(Tr_{g} k)^{2}-|k|_{g}^{2} - 2(|E|_{g}^2+|B|_{g}^{2}),\text{ }\text{ }\text{ }\text{ }\text{ }\text{ } J_{EM} = \operatorname{div}_{g}(k- (Tr_{g} k) g)+2E\times B,
\end{equation}
where $(E\times B)_{i}=\epsilon_{ijl}E^{j}B^{l}$ is the cross product with
$\epsilon$ the volume form of $g$. The quantities $\operatorname{div}_{g}E$ and $\operatorname{div}_{g}B$ are interpreted as the electric and magnetic charge density, and the following inequality will be referred to as the charged dominant energy condition
\begin{equation}\label{10}
   \mu_{EM} \geq |J_{EM}|_{g} + \frac{1}{2}\left(|\operatorname{div}_{g}E|
   +|\operatorname{div}_{g}B|\right).
\end{equation}
Note that
\begin{equation}\label{8}
\mathcal{Q}_{e} = \frac1{4\pi} \int_{S_\infty} g(E,\nu_{g})\, , \qquad
\mathcal{Q}_{b} = \frac1{4\pi} \int_{S_\infty} g(B,\nu_{g})\, ,
\end{equation}
are well-defined in light of the fall-off conditions \eqref{6},
where $S_{\infty}$ indicates the limit as $r\rightarrow\infty$ of integrals over coordinate spheres $S_{r}$, with unit
outer normal $\nu_{g}$. Here $\mathcal{Q}_{e}$ and $\mathcal{Q}_{b}$ denote the total electric and magnetic charge respectively,
and we denote the square of the total charge by $\mathcal{Q}^{2}=\mathcal{Q}_{e}^{2}+\mathcal{Q}_{b}^{2}$.

When the initial data have a boundary $\partial M$, it will consist of an outermost apparent horizon. That is, each boundary component $S\subset\partial M$ satisfies $\theta_{+}(S):=H_{S}+Tr_{S}k=0$ (future horizon) or $\theta_{-}(S):=H_{S}-Tr_{S}k=0$ (past horizon), where $H$ denotes mean curvature with respect to the normal pointing towards null infinity, and
no other apparent horizons are present. Furthermore, in some cases the initial data will have ends which are not asymptotically hyperboloidal, but are rather asymptotically flat or asymptotically cylindrical. We say that an end is asymptotically flat if it is diffeomorphic to $\mathbb{R}^{3}\setminus\mathrm{Ball}$, and in the Cartesian coordinates $x^i$ given by this diffeomorphism the following fall-off conditions hold
\begin{equation}\label{9}
|g_{ij}-\delta_{ij}|+|x||\partial g_{ij}| + |x|^2|\partial\partial g_{ij}|=O(|x|^{-1}),\text{ }\text{ }\text{
}\text{ }
|k_{ij}|+ |E_{i}|+|B_{i}|=O(|x|^{-2})\text{ }\text{ }\text{as}\text{ }\text{
}|x|\rightarrow\infty.
\end{equation}
%\begin{equation}\label{9}
%|(g_{ij}-\delta_{ij})|+|x||\partial g_{ij}|=O(|x|^{-1}),\text{ }\text{ }\text{
%}\text{ }|k_{ij}|=O(|x|^{-2}),\text{ }\text{ }\text{
%}\text{ }|E_{i}|+|B_{i}|=O(|x|^{-2})\text{ }\text{ }\text{as}\text{ }\text{
%}|x|\rightarrow\infty.
%\end{equation}
For such an end, the ADM mass is well-defined and given by
\begin{equation}\label{10.1}
m_{adm}=\frac{1}{16\pi}\int_{S_{\infty}}(\partial_{i}g_{ij}-\partial_{j}g_{ii})\nu^{j},
\end{equation}
where again we abuse terminology as this is physically the energy.
The asymptotics of cylindrical ends is most conveniently described in Brill coordinates, see Section \ref{sec6}.

In each section that follows, we will detail a reduction argument for a different geometric inequality associated with asymptotically hyperboloidal initial data. Hence, each inequality will be reduced to solving a canonical system of equations. Moreover, we will show that the primary (or Jang-type) equation in each system may be solved independently with the desired asymptotics needed for the procedure.

\section{The Penrose Inequality}
\label{sec3} \setcounter{equation}{0}
\setcounter{section}{3}

Well known heuristic arguments of Penrose \cite{Penrose, Penrose1} lead to the Penrose inequality for asymptotically null slices in asymptotically flat
spacetimes. Given asymptotically hyperboloidal initial data $(M,g,k)$ satisfying the dominant energy condition,
the Penrose inequality \cite{Mars} states that
\begin{equation}\label{14}
m\geq\sqrt{\frac{A}{16\pi}}
\end{equation}
where $A$ is the minimum area required to enclose the outermost apparent horizon. We will denote the region outside of the outermost minimal area enclosure of the horizon by $\widetilde{M}$, so that $A=|\partial\widetilde{M}|$.

Consider a graph $\overline{M}=\{t=f(x)\}$ inside the warped product 4-manifold $(M \times \mathbb{R}, g + u^2 dt^2)$, then the induced metric on $\overline{M}$ is given by $\overline{g}=g+u^{2}df^{2}$. Here $u$ is a nonnegative function to be chosen appropriately.
If the generalized Jang equation
\begin{equation}\label{15}
\left(g^{ij}-\frac{u^{2}f^{i}f^{j}}{1+u^{2}|\nabla f|_{g}^{2}}\right)
\left(\frac{u\nabla_{ij}f+u_{i}f_{j}+u_{j}f_{i}}
{\sqrt{1+u^{2}|\nabla f|_{g}^{2}}}-k_{ij}\right)=0
\end{equation}
is satisfied, then $\overline{M}$ is referred to as the Jang surface and the Jang metric $\overline{g}$ obtains a desirable positivity property for its scalar curvature.
%where $\nabla$ denotes covariant differentiation with respect to the metric %$g$, $f_{i}=\partial_{i}f$,
%and $f^{i}=g^{ij}f_{j}$.  This equation is quasi-linear elliptic, and %degenerates when either $\phi=0$ or $f$ blows-up.
In particular, the scalar curvature of the Jang graph \cite{BrayKhuri1,BrayKhuri2} is weakly nonnegative and given by
\begin{equation}\label{15.1}
\overline{R}=2(\mu-J(w))+
|\pi-k|_{\overline{g}}^{2}+2|q|_{\overline{g}}^{2}
-2u^{-1}\operatorname{div}_{\overline{g}}(u q),
\end{equation}
where
$\pi$ is the second fundamental form of $\overline{M}$ in the dual Lorentzian setting $(M \times \mathbb{R}, \overline{g}-u^2 dt^2)$,
and $w$ and $q$ are 1-forms given by
\begin{equation}\label{15.2}
\pi_{ij}=\frac{ u \nabla_{ij}f
+ u_i f_j +  u_j  f_i}{ \sqrt{1 + u^2 |\nabla f|_g^2 }},\text{ }\text{ }\text{ }\text{ }
w_{i}=\frac{u f_{i}}{\sqrt{1+u^{2}|\nabla f|_{g}^{2}}},\text{
}\text{ }\text{ }\text{ }
q_{i}=\frac{u f^{j}}{\sqrt{1+u^{2}|\nabla f|_{g}^{2}}}(\pi_{ij}-k_{ij}).
\end{equation}
The generalized Jang equation was introduced to study the (non-time-symmetric) Penrose inequality in the asymptotically flat case. In this setting $f$ vanishes at spatial infinity, and boundary conditions are imposed on $\partial\widetilde{M}$ to guarantee that the boundary of the Jang surface $\partial\Sigma$ is minimal; these boundary conditions often entail a blow-up of the Jang graph and are described in \cite{BrayKhuri2}, \cite{HanKhuri2}.
The existence, regularity, and blow-up behavior for the generalized Jang equation is studied at length in \cite{HanKhuri1}.
Also in this setting, the warping function $u$
is assumed to vanish on $\partial\widetilde{M}$, and to have an expansion in the asymptotic end of the form
\begin{equation}\label{15.3}
u= 1 + \frac{u_{0}}{r}+O_{2}\left(\frac{1}{r^{2-\varepsilon}}\right),
\end{equation}
where $u_{0}$ is a constant.

In the asymptotically hyperboloidal setting addressed here, the boundary conditions on $\partial\widetilde{M}$ and the expansion \eqref{15.3} will remain unchanged, however the behavior of the Jang graph in the asymptotic end will be completely different. Namely, in analogy with the approach to the positive mass theorem \cite{SchoenYau1}, \cite{Sakovich} we impose the following asymptotics at null infinity
\begin{equation}\label{11}
f(r,\theta,\phi)=\sqrt{1+r^{2}}+\mathcal{A}\log r+\mathcal{B}(\theta,\phi)+\widetilde{f}(r,\theta,\phi),
\end{equation}
where $(\theta,\phi)$ are coordinates on $S^{2}$,
\begin{equation}\label{12}
\mathcal{A}=2m,\text{ }\text{ }\text{ }\text{ }\text{ }\text{ }
\Delta_{\sigma}\mathcal{B}=\frac{1}{2}\left[Tr_{\sigma}( \mathbf{m}^g + 2\mathbf{m}^k )+2\mathbf{m}^r\right]-\frac{1}{8\pi}\int_{S^{2}}\left[Tr_{\sigma}( \mathbf{m}^g + 2\mathbf{m}^k )+2\mathbf{m}^r\right],
\end{equation}
and
\begin{equation}\label{13}
\widetilde{f} = O_{3}(r^{-1+\varepsilon})
%\partial_{\alpha}\widetilde{f}=O(r^{-\varepsilon}),\text{ }\text{ }
%\partial_{r}\widetilde{f}=O(r^{-1-\varepsilon}),\text{ }\text{ }
%\partial_{\alpha}\partial_{\beta}\widetilde{f}=O(r^{-\varepsilon}),\text{ }\text{ }
%\partial_{r}\partial_{\alpha}\widetilde{f}=O(r^{-1-\varepsilon}),\text{ }\text{ }
%\partial_{r}^{2}\widetilde{f}=O(r^{-2-\varepsilon}).
\end{equation}
for any $\varepsilon >0$.

\begin{lemma}\label{lemma1}
If $(M,g,k)$ is asymptotically hyperboloidal and \eqref{15.3}-\eqref{13} are satisfied then the Jang metric $\overline{g}=g+u^{2}df^{2}$ is asymptotically flat, and the mass of the Jang metric is given by $\overline{m}_{adm}=2m+u_{0}$.
\end{lemma}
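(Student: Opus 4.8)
The plan is to find explicit coordinates near infinity in which $\overline{g}$ is a short-range perturbation of the Euclidean metric, and then to read off its mass directly from that perturbation. The key point is that the leading radial profile in \eqref{11} is precisely the one that flattens the hyperbolic metric: with $f_{0}:=\sqrt{1+r^{2}}$ one has $f_{0}'=r/\sqrt{1+r^{2}}$, hence $(g_{0})_{rr}+(f_{0}')^{2}=\tfrac{1}{1+r^{2}}+\tfrac{r^{2}}{1+r^{2}}=1$ and therefore
\[
g_{0}+df_{0}^{2}=dr^{2}+r^{2}\sigma=:\delta ,
\]
the Euclidean metric in polar coordinates $(r,\theta,\phi)$. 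Writing $f=f_{0}+\psi$ with $\psi=\mathcal{A}\log r+\mathcal{B}(\theta,\phi)+\widetilde{f}$ and expanding $df^{2}=df_{0}^{2}+2\,df_{0}\,d\psi+d\psi^{2}$, I would record
\[
\overline{g}=g+u^{2}df^{2}=\delta+e,\qquad e=a+(u^{2}-1)\,df_{0}^{2}+2u^{2}\,df_{0}\,d\psi+u^{2}\,d\psi^{2},
\]
so the task reduces to (i) estimating the components of $e$ in $(r,\theta,\phi)$ and (ii) computing the ADM flux of $e$ against the flat background $\delta$.

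Step (i) is a routine computation. Feeding \eqref{2}, \eqref{15.3} and \eqref{13} (so that $\partial_{r}\widetilde{f}=O_{2}(r^{-2+\varepsilon})$ and $\partial_{\alpha}\widetilde{f}=O_{2}(r^{-1+\varepsilon})$), together with $f_{0}'=1+O(r^{-2})$, into the formula for $e$ yields
\[
e_{rr}=\frac{2(u_{0}+\mathcal{A})}{r}+O_{2}(r^{-2+\varepsilon}),\qquad
e_{r\alpha}=\partial_{\alpha}\mathcal{B}+O_{2}(r^{-1+\varepsilon}),\qquad
e_{\alpha\beta}=\frac{\mathbf{m}^{g}_{\alpha\beta}}{r}+\partial_{\alpha}\mathcal{B}\,\partial_{\beta}\mathcal{B}+O_{2}(r^{-1+\varepsilon}).
\]
Transferring to the Cartesian coordinates $x$ of $\delta$ — in which $dr$ has bounded components while $d\theta^{\alpha}=O(|x|^{-1})$, so that an $O(r^{-1})$ piece of $e_{rr}$, an $O(1)$ piece of $e_{r\alpha}$, and an $O(1)$ piece of $e_{\alpha\beta}$ all become $O(|x|^{-1})$ or smaller — one finds $\overline{g}_{ij}-\delta_{ij}=O(|x|^{-1})$, $\partial\overline{g}=O(|x|^{-2})$, $\partial^{2}\overline{g}=O(|x|^{-3})$, so $\overline{g}$ is asymptotically flat in the sense of \eqref{9}.

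For step (ii) I would use the ADM integral \eqref{10.1} in its coordinate-free form relative to the Euclidean background $\delta$,
\[
\overline{m}_{adm}=\frac{1}{16\pi}\lim_{r\to\infty}\int_{S_{r}}\big(\operatorname{div}_{\delta}e-d\operatorname{tr}_{\delta}e\big)(\partial_{r})\,dA_{\delta},
\]
substitute the Christoffel symbols of $\delta=dr^{2}+r^{2}\sigma$ (in particular $\Gamma^{r}_{\alpha\beta}=-r\sigma_{\alpha\beta}$, $\Gamma^{\alpha}_{r\beta}=r^{-1}\delta^{\alpha}_{\beta}$), and use $dA_{\delta}=r^{2}\,dA_{\sigma}$; after cancelling the $\partial_{r}e_{rr}$ terms this reduces to
\[
16\pi\,\overline{m}_{adm}=\lim_{r\to\infty}\int_{S^{2}}\Big[\operatorname{div}_{\sigma}\!\big(e_{\alpha r}\,d\theta^{\alpha}\big)+2r\,e_{rr}+\tfrac{1}{r}\,\sigma^{\alpha\beta}e_{\alpha\beta}-\sigma^{\alpha\beta}\partial_{r}e_{\alpha\beta}\Big]\,dA_{\sigma}.
\]
The first term integrates to zero over the closed surface $S^{2}$; the third and fourth terms vanish in the limit because $e_{\alpha\beta}=O(1)$ and $\partial_{r}e_{\alpha\beta}=O(r^{-2+\varepsilon})$; and the second term gives $\int_{S^{2}}4(u_{0}+\mathcal{A})\,dA_{\sigma}=16\pi(u_{0}+\mathcal{A})$. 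Since $\mathcal{A}=2m$ by \eqref{12}, this yields $\overline{m}_{adm}=u_{0}+2m$. (Observe that the precise choice of $\mathcal{B}$ in \eqref{12} plays no role here: the $\mathcal{B}$-dependent terms of $e$ are of borderline size $O(1)$ in polar coordinates but are annihilated by the flux integral in the limit.) I expect step (ii) to be the only delicate point — one must use the coordinate-invariant ADM flux with respect to the \emph{correct} flat background in \emph{non-Cartesian} coordinates and then determine which of the several borderline $O(|x|^{-1})$ pieces of $e$ actually survive as $r\to\infty$, verifying that $e_{r\alpha}$ and $e_{\alpha\beta}$ drop out while only $e_{rr}$ contributes; by contrast the flattening identity and the verification of \eqref{9} are entirely routine.
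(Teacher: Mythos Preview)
Your proof is correct and follows essentially the same route as the paper: both compute the ADM flux $(\operatorname{div}_{\delta}\overline{g}-d\operatorname{tr}_{\delta}\overline{g})(\partial_{r})$ in the spherical coordinates $(r,\theta,\phi)$ using the Christoffel symbols of $\delta=dr^{2}+r^{2}\sigma$, and both find that the $\mathcal{B}$-contribution enters only through $\Delta_{\sigma}\mathcal{B}$, which integrates to zero on $S^{2}$. Your organization via the flattening identity $g_{0}+df_{0}^{2}=\delta$ and the explicit perturbation $e=\overline{g}-\delta$ is a bit cleaner than the paper's direct expansion of $\operatorname{div}_{\delta}g$ and $\operatorname{div}_{\delta}(u^{2}df^{2})$ separately (which produces intermediate $\pm 2r^{-1}$ terms that must cancel), but the substance is identical.
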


\begin{proof}
It is clear that the manifold $(\overline{M},\overline{g})$ has an end diffeomorphic to $R^{3} \setminus Ball$, with coordinates $y=(r,\theta,\phi)$ as in \eqref{2}, \eqref{3}. Let $x$ denote the associated Cartesian coordinates,  related to $y$ through the usual (spherical coordinates) transformation. In what follows, $i,j$ are indices for $x$-coordinates and $a,b$ are indices for $y$-coordinates. It follows that
\begin{equation}\label{3.16}
\overline{g}_{ij}=g_{ij}+u^{2}f_{i}f_{j}
=\left(g_{ab}+u^{2}f_{a}f_{b}\right)\frac{\partial y^{a}}
{\partial x^{i}}\frac{\partial y^{b}}
{\partial x^{j}}.
\end{equation}
From \eqref{2}, \eqref{3}, \eqref{15.3}, \eqref{11}, and \eqref{13} we have
\begin{equation}\label{3.17}
g_{rr}+u^{2}f_{r}^{2}=1+O(r^{-1}),\text{ }\text{ }\text{ }\text{ }\text{ }
g_{r\alpha}+u^{2}f_{r}f_{\alpha}=O(1),\text{ }\text{ }\text{ }\text{ }\text{ }
g_{\alpha\beta}+u^{2}f_{\alpha}f_{\beta}
=r^{2}\sigma_{\alpha\beta}+O(1),
\end{equation}
which implies that
\begin{equation}\label{3.18}
\left(g_{ab}+u^{2}f_{a}f_{b}\right)\frac{\partial y^{a}}
{\partial x^{i}}\frac{\partial y^{b}}
{\partial x^{j}}=\delta_{ij}+O(r^{-1})
\end{equation}
since $\frac{\partial r}{\partial x^{i}}=O(1)$, and $\frac{\partial y^{\alpha}}{\partial x^{i}}=O(r^{-1})$. Estimates on the derivatives of \eqref{3.16} may be obtained in a similar fashion, from which \eqref{9} follows.

The mass $\overline{m}_{adm}$ of the Jang metric may be computed as in \cite{Sakovich}. Note that we can write \eqref{10.1} in a coordinate free form as
\begin{equation}
\overline{m}_{adm}=\frac{1}{16\pi}\int_{S_{\infty}}(\operatorname{div}_\delta \overline{g} - d Tr_{\delta} \overline{g}) (\nu),
\end{equation}
where $\delta$ is the Euclidean metric.  It is convenient to compute this integral in the spherical coordinates $y$. In this case we have
\begin{equation}
\accentset{\circ}{\Gamma}^{r}_{rr} = \accentset{\circ}{\Gamma}^\alpha_{rr} = \accentset{\circ}{\Gamma}^r_{\alpha r} = 0, \qquad \accentset{\circ}{\Gamma}^r_{\alpha\beta} = -r \sigma_{\alpha\beta}, \qquad \accentset{\circ}{\Gamma}^\alpha_{ \beta r} = r^{-1} \delta^\alpha_\beta, \qquad \accentset{\circ}{\Gamma}^\alpha_{\beta\gamma} = (\Gamma_{\sigma})^\alpha_{\beta\gamma},
\end{equation}
where $\accentset{\circ}{\Gamma}_{ij}^{l}$ and $(\Gamma_{\sigma})^\alpha_{\beta\gamma}$ are Christoffel symbols for the metrics $\delta$ and $\sigma$ respectively, and
hence
\begin{align}
\begin{split}
(\operatorname{div}_\delta g) (\nu)
& = (\operatorname{div}_\delta g)(\partial_{r}) \\
& = \accentset{\circ}{\nabla}_r g_{rr} + r^{-2} \sigma^{\alpha\beta} \accentset{\circ}{\nabla}_\beta g_{\alpha r} \\
& = \partial_r g_{rr} - 2 \accentset{\circ}{\Gamma}^l_{rr} g_{lr} +  r^{-2} \sigma^{\alpha\beta}(\partial_\beta g_{\alpha r}                                            - \accentset{\circ}{\Gamma}^l_{\alpha\beta} g_{lr}  - \accentset{\circ}{\Gamma}^l_{\beta r} g_{\alpha l} ) \\
& = - r^{-2} \sigma^{\alpha\beta} g_{\alpha \gamma} \accentset{\circ}{\Gamma}^\gamma_{\beta r} + O(r^{-3}) \\
& = -2r^{-1} + O(r^{-3}).
\end{split}
\end{align}
Furthermore
\begin{align}
\begin{split}
\operatorname{div}_\delta (u^2 df^2) (\nu)
& = \operatorname{div}_\delta (u^2 df^2)(\partial_{r}) \\
& = \accentset{\circ}{\nabla}_r (u^2 f_r^2) + r^{-2} \sigma^{\alpha\beta} \accentset{\circ}{\nabla}_\beta(u^2 f_\alpha f_r) \\
& = 2 u u_r f_r^2  + 2 u^2 f_r \accentset{\circ}{\nabla}_{rr} f + r^{-2} \sigma^{\alpha\beta} (2u u_\beta f_\alpha f_r + u^2 f_r \accentset{\circ}{\nabla}_{\alpha\beta} f + u^2 f_\alpha \accentset{\circ}{\nabla}_{\beta r} f),
\end{split}
\end{align}
with
\begin{equation}
 2 u u_r f_r^2 + 2 u^2 f_r \accentset{\circ}{\nabla}_{rr} f= -2 (u_0 + \mathcal{A}) r^{-2} + O(r^{-3+\varepsilon}),
\end{equation}
\begin{equation}
r^{-2} \sigma^{\alpha\beta} (2u u_\beta f_\alpha f_r + u^2 f_\alpha \accentset{\circ}{\nabla}_{\beta r} f) = - u^2 r^{-2} \sigma^{\alpha\beta}f_\alpha \accentset{\circ}{\Gamma}^\gamma_{\beta r} f_\gamma + O(r^{-3}) = O(r^{-3}),
\end{equation}
and
\begin{align}
\begin{split}
 u^2 r^{-2} \sigma^{\alpha\beta} f_r \accentset{\circ}{\nabla}_{\alpha\beta} f
& = u^2 r^{-2} \sigma^{\alpha\beta} f_r \accentset{\circ}{\nabla}_{\alpha\beta} \mathcal{B}
+ u^2 r^{-2} \sigma^{\alpha\beta} f_r \accentset{\circ}{\nabla}_{\alpha\beta} (f - \mathcal{B})\\
& = u^2 r^{-2}f_r\Delta_\sigma \mathcal{B} - u^2 r^{-2} \sigma^{\alpha\beta} f_r \accentset{\circ}{\Gamma}^r _{\alpha\beta} (f - \mathcal{B})_r +  O(r^{-3+\varepsilon})\\
& = r^{-2} \Delta_\sigma \mathcal{B} +  2 u^2 r^{-1} (f_r)^2  +  O(r^{-3+\varepsilon}) \\
& = r^{-2} \Delta_\sigma \mathcal{B} + 2 r^{-1} + 4 (u_0 + \mathcal{A}) r^{-2} + O(r^{-3+\varepsilon}).
 \end{split}
\end{align}
Finally, we have
\begin{align}
\begin{split}
 (d Tr_{\delta} \overline{g}) (\nu) & = \partial_r (\overline{g}_{rr} + r^{-2} \sigma^{\alpha \beta} \overline{g}_{\alpha \beta}) \\
                                                 & = \partial_r(g_{rr} + u^2 f_r^2 + r^{-2} \sigma^{\alpha\beta} (g_{\alpha\beta} + u^2 f_\alpha f_{\beta}))\\
																								 & = \partial_r (u^2 f_r^2) + O(r^{-3})\\
																								 & = -2 (u_0 + \mathcal{A})r^{-2} + O(r^{-3 + \varepsilon}).
\end{split}
\end{align}
Summing up, we conclude that
\begin{equation}
\overline{m}_{adm}=\frac{1}{16\pi}\int_{S_{\infty}}\left[(\Delta_\sigma \mathcal{B} + 4 u_0 + 4 \mathcal{A})r^{-2} +  O(r^{-3+\varepsilon})\right] = u_0 + \mathcal{A} = 2m + u_0.
\end{equation}
\end{proof}

Let $\{\overline{S}_{\tau}\}$ be an inverse mean curvature flow (IMCF) inside the Jang graph starting at the minimal boundary $\overline{S}_{0}=\partial\overline{M}$. If we choose
\begin{equation}\label{16}
u=\sqrt{\frac{|\overline{S}_{\tau}|}{16\pi}}\overline{H}_{\tau},
\end{equation}
where $|\overline{S}_{\tau}|$ and $\overline{H}_{\tau}$ denote area and mean curvature of $\overline{S}_\tau$, then the arguments in \cite{BrayKhuri1}, \cite{BrayKhuri2} imply that
\begin{equation}\label{17}
M_{H}(\infty)-\sqrt{\frac{A}{16\pi}}\geq
M_{H}(\infty)-M_{H}(0)\geq-\frac{1}{8\pi}\int_{\overline{S}_{0}\cup\overline{S}_{\infty}}
u\overline{g}(q,\nu_{\overline{g}})
\end{equation}
where $M_{H}$ denotes Hawking mass and $\nu_{\overline{g}}$ is the unit outer normal with respect to
$\overline{g}$. Notice that by choosing $u$ as in \eqref{16}, the generalized Jang equation is coupled to the inverse mean curvature flow, and we will refer to this set of equations as the Jang-IMCF system. All of this leads to the following theorem, which generalizes the results of
\cite{BrayKhuri1}, \cite{BrayKhuri2} to the asymptotically hyperboloidal case.

\begin{theorem}\label{thm1}
Let $(M, g, k)$ be a $3$-dimensional, asymptotically hyperboloidal initial data set with a
connected outermost apparent horizon boundary, and satisfying the dominant energy condition $\mu\geq|J|$.
If the coupled Jang-IMCF system of equations admits a solution satisfying the asymptotics \eqref{11}-\eqref{13},
with a weak IMCF (in the sense of \cite{HuiskenIlmanen}), and such that the boundary of the Jang surface is minimal, then \eqref{14} holds and if equality is achieved then the initial data
arise from an embedding into the Schwarzschild spacetime.
\end{theorem}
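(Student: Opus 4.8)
The plan is to carry out the reduction exactly as in the asymptotically flat treatment of Bray--Khuri \cite{BrayKhuri1,BrayKhuri2}, feeding in Lemma \ref{lemma1} to control the asymptotics of the Jang metric. By hypothesis we obtain a Jang surface $\overline{M}$ over $\widetilde{M}$ with metric $\overline{g}=g+u^{2}df^{2}$, a warping function $u$ vanishing on $\partial\widetilde{M}$ and having the expansion \eqref{15.3}, a weak IMCF $\{\overline{S}_{\tau}\}$ with $\overline{S}_{0}=\partial\overline{M}$, the coupling \eqref{16}, and $f$ obeying \eqref{11}--\eqref{13}. Since $u=0$ on $\partial\widetilde{M}=\partial\overline{M}$, the metrics $\overline{g}$ and $g$ agree on $T\partial\overline{M}$, so $|\partial\overline{M}|_{\overline{g}}=|\partial\widetilde{M}|_{g}=A$; as $\partial\overline{M}$ is minimal in $\overline{g}$, the Hawking mass of the initial leaf is $M_{H}(\overline{S}_{0})=\sqrt{A/16\pi}$. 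By Lemma \ref{lemma1}, $(\overline{M},\overline{g})$ is asymptotically flat with $\overline{m}_{adm}=2m+u_{0}$.

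Next I would establish the monotonicity. From \eqref{15.2} a direct computation gives $|w|_{\overline{g}}^{2}=u^{2}|\nabla f|_{g}^{2}/(1+u^{2}|\nabla f|_{g}^{2})^{2}\le\tfrac{1}{4}$, so $|w|_{\overline{g}}<1$ and the dominant energy condition yields $\mu-J(w)\ge\mu-|J|_{g}\ge0$; by \eqref{15.1} this renders $\overline{R}$ nonnegative up to the term $-2u^{-1}\operatorname{div}_{\overline{g}}(uq)$. Running Geroch monotonicity of the Hawking mass along the weak IMCF in the sense of \cite{HuiskenIlmanen}, and using the coupling \eqref{16} to integrate the divergence term by parts over the flow, one obtains \eqref{17} exactly as in \cite{BrayKhuri1,BrayKhuri2}. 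The inner boundary term in \eqref{17} vanishes because $u=0$ on $\overline{S}_{0}$, so it remains to handle the outer boundary term and to evaluate $M_{H}(\infty)$.

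For the outer boundary term, the point is that although $f$ grows like $\sqrt{1+r^{2}}$, the asymptotics \eqref{11}--\eqref{13}, \eqref{15.3} combined with the Jang equation \eqref{15} still force enough decay of $\pi-k$, hence of $q$, that $\int_{\overline{S}_{r}}u\,\overline{g}(q,\nu_{\overline{g}})\to0$ as $r\to\infty$; this is checked by an expansion of the type performed in the proof of Lemma \ref{lemma1}. Hence $M_{H}(\infty)\ge\sqrt{A/16\pi}$. Since $(\overline{M},\overline{g})$ is asymptotically flat with $\overline{R}\ge0$ modulo the divergence term, the weak IMCF exhausts $\overline{M}$ by asymptotically round surfaces and $\limsup_{\tau}M_{H}(\overline{S}_{\tau})\le\overline{m}_{adm}$; moreover, evaluating the coupling \eqref{16} on a large leaf $\overline{S}_{\tau}$ (essentially the coordinate sphere $\{r=r_{\tau}\}$), on which $u=1+u_{0}/r_{\tau}+o(r_{\tau}^{-1})$ while $\sqrt{|\overline{S}_{\tau}|/16\pi}\,\overline{H}_{\tau}=1-\overline{m}_{adm}/r_{\tau}+o(r_{\tau}^{-1})$, forces $u_{0}=-\overline{m}_{adm}=-(2m+u_{0})$, i.e.\ $\overline{m}_{adm}=m$. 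Combining, $m=\overline{m}_{adm}\ge M_{H}(\infty)\ge M_{H}(\overline{S}_{0})=\sqrt{A/16\pi}$, which is \eqref{14}.

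For the rigidity statement, equality in \eqref{14} forces equality throughout, so $M_{H}(\overline{S}_{\tau})\equiv\sqrt{A/16\pi}$ along the flow; the rigidity case of the Geroch/Huisken--Ilmanen monotonicity, together with the vanishing of the contribution of the divergence term, then yields $q\equiv0$, $\pi=k$, $\mu=J(w)$, $\overline{R}\equiv0$, and $(\widetilde{M},\overline{g})$ isometric to a spatial Schwarzschild slice. Since $|w|_{\overline{g}}<1$, $\mu=J(w)$ and the dominant energy condition force $\mu=|J|_{g}=0$, so the data is vacuum, and then, as in \cite{BrayKhuri1,BrayKhuri2}, the vanishing of $q$ and of $\pi-k$ identifies $\overline{M}$ with a slice of the Schwarzschild spacetime, exhibiting $(M,g,k)$ as embedded initial data there. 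I expect the main obstacle to be precisely the asymptotic analysis at null infinity --- verifying both that the outer boundary term in \eqref{17} vanishes and that $\overline{m}_{adm}=m$ --- since under the hyperboloidal asymptotics \eqref{11}--\eqref{13} the fall-off of $q$ and the behaviour of $\overline{H}_{\tau}$ genuinely differ from the $f\to0$ regime of the asymptotically flat case and must be recomputed in the manner of Lemma \ref{lemma1}.
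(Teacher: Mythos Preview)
Your approach is essentially that of the paper, but there is a logical misstep in your treatment of the outer boundary term. You claim that the asymptotics \eqref{11}--\eqref{13}, \eqref{15.3} together with the Jang equation \eqref{15} force enough decay of $\pi-k$ so that $\int_{\overline{S}_{r}}u\,\overline{g}(q,\nu_{\overline{g}})\to 0$. This is not so: a direct expansion (carried out in the paper as Lemma~\ref{lemma3}) yields
\[
\lim_{r\to\infty}\int_{\overline{S}_{r}}u\,\overline{g}(q,\nu_{\overline{g}})=-4\pi\bigl(2u_{0}+\mathcal{A}\bigr),
\]
which is a priori nonzero, since $u_{0}$ in \eqref{15.3} is an undetermined constant and $\mathcal{A}=2m$ by \eqref{12}. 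The vanishing requires precisely the relation $u_{0}=-m$, and this relation does \emph{not} follow from the Jang equation or from the asymptotics of $f$; it comes from the coupling \eqref{16} to the IMCF.

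The paper extracts this relation by computing $M_{H}(\infty)$ in two ways: first, using \eqref{16} to rewrite the Hawking mass as
\[
M_{H}(\infty)=\lim_{\tau\to\infty}\sqrt{\tfrac{|\overline{S}_{\tau}|}{16\pi}}\Bigl(1-\tfrac{1}{|\overline{S}_{\tau}|}\!\int_{\overline{S}_{\tau}}u^{2}\Bigr)=-u_{0},
\]
and second, using the asymptotic roundness of the weak IMCF \cite{HuiskenIlmanen,HuiskenIlmanen1} to get $M_{H}(\infty)=\overline{m}_{adm}$; together with Lemma~\ref{lemma1} this gives $u_{0}=-m$. Your derivation of $u_{0}=-\overline{m}_{adm}$ via the pointwise expansion $\sqrt{|\overline{S}_{\tau}|/16\pi}\,\overline{H}_{\tau}=1-\overline{m}_{adm}/r_{\tau}+o(r_{\tau}^{-1})$ is morally the same computation, though stated pointwise it is not quite well posed and really amounts to the integrated statement $M_{H}(\infty)=\overline{m}_{adm}$, which should be justified as such. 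In any case you must establish $u_{0}=-m$ \emph{before} asserting that the outer boundary term vanishes, not after; once the order is corrected, your argument coincides with the paper's.
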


\begin{remark}
This theorem may be generalized to the case of multiple black holes, by coupling the generalized Jang equation to Bray's conformal flow \cite{Bray}. Such a procedure has been described in detail for the asymptotically flat case in \cite{HanKhuri2}.
Moreover, the ``only if" part of the case of equality is not included in the statement above (or any of the theorems in later sections), as a consequence of the difference between the mass and energy at null infinity.
\end{remark}

\begin{proof}
The (weak) IMCF becomes smooth for sufficiently large times, and approximates coordinate spheres in the asymptotically flat end. This result was established by Huisken and Ilmanen \cite{HuiskenIlmanen1} for flows in Euclidean space, and they announced (in the same paper) that  these results hold more generally in the asymptotically flat setting.
From this one may obtain asymptotic expansions for the area and mean curvature of the flow surfaces to show that
\begin{equation}\label{3.19}
M_{H}(\infty)=\overline{m}_{adm},
\end{equation}
and with the help of \eqref{15.3}
\begin{align}\label{18}
\begin{split}
M_{H}(\infty)&=\lim_{\tau\rightarrow\infty}\sqrt{\frac{|\overline{S}_{\tau}|}{16\pi}}
\left(1-\frac{1}{16\pi}\int_{\overline{S}_{\tau}}\overline{H}_{\tau}^{2}\right)\\
&=\lim_{\tau\rightarrow\infty}\sqrt{\frac{|\overline{S}_{\tau}|}{16\pi}}
\left(1-\frac{1}{|\overline{S}_{\tau}|}\int_{\overline{S}_{\tau}}u^{2}\right)\\
&=\lim_{r\rightarrow\infty}\left(\frac{r}{2}+O(1)\right)
\left[1-\left(1+\frac{2u_{0}}{r}+O(r^{-2})\right)\right]\\
&=-u_{0}.
\end{split}
\end{align}
By Lemma \ref{lemma1} $\overline{m}_{adm}=2m+u_{0}$ and so $m=-u_{0}$, or rather $M_{H}(\infty)=m$.

We note that it is only necessary to establish $M_{H}(\infty)\leq m$ in order to achieve \eqref{14}, and this may be proven without \eqref{3.19}. To see this, recall that from \cite{HuiskenIlmanen} we have $M_{H}(\infty)\leq\overline{m}_{adm}$, so \eqref{18} implies that $-u_{0}\leq\overline{m}_{adm}$. Since $\overline{m}_{adm}=2m+u_{0}$, it follows that $-\overline{m}_{adm}\leq-m$. Therefore $-u_{0}=2m-\overline{m}_{adm}\leq m$.

Consider now the boundary terms of \eqref{17}. As in \cite{BrayKhuri1}, \cite{BrayKhuri2} the inner term vanishes since $u=0$ on $\overline{S}_{0}$. Furthermore, it is shown in the Appendix B that the term at null infinity also vanishes as a result of the asymptotics \eqref{15.3}, \eqref{11}, and with the help of \eqref{3.19}.
The desired inequality now follows, and the case of equality may be treated in the same way as in \cite{BrayKhuri1}, \cite{BrayKhuri2}.
\end{proof}

In order to lend further credence to the above procedure we show that solutions to the generalized Jang equation
exist with the desired asymptotics at the horizon and at null infinity. For this we assume that $(g,k)$ takes the form \eqref{1} with $a$ and $b$ as in \eqref{2} and \eqref{3} additionally satisfying $a_{rr} = a_{r\alpha} = 0$; in this case $\mathbf{m}^{r}=0$. We note that given sufficiently regular asymptotically hyperboloidal initial data $(g,k)$ such that $|g - g_0|_{g_{0}} = O(r^{-3})$ and $|k - g_0|_{g_{0}} = O(r^{-3})$, one can perform a change of coordinates at infinity as described in the Appendix A to achieve $a_{rr} = a_{r\alpha} = 0$, and this change of coordinates does not affect the mass aspect function.

\begin{theorem}\label{thm2}
Given a smooth positive function $u$, vanishing on $\partial M$ and satisfying \eqref{15.3}, there exists a smooth solution to the generalized Jang equation \eqref{15}
which blows-up (down) at the future (past) apparent horizon boundary components and also possesses the expansion \eqref{11}.
Moreover, precise asymptotics at the horizon are given as in \cite{HanKhuri1}.
\end{theorem}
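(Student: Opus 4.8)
The plan is to adapt the Schoen--Yau capillarity scheme for the Jang equation \cite{SchoenYau}, in the form developed for the generalized Jang equation with horizon blow-up in \cite{BrayKhuri1,BrayKhuri2,HanKhuri1}, and to combine it with the hyperboloidal asymptotic analysis carried out by Sakovich \cite{Sakovich,SakovichThesis} for the classical Jang equation. The key simplification here, compared with Theorem \ref{thm1}, is that $u$ is \emph{prescribed}: equation \eqref{15} is then a single quasilinear elliptic equation for $f$ of prescribed-mean-curvature type, so the existence proper is routine once the a priori estimates are in hand, and essentially all the work lies in the estimates --- one near the horizon and one at null infinity.

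First I would regularize: for $t\in(0,1)$ and large $\rho$, solve on the compact region $\{r\le\rho\}$ the equation obtained by inserting a term of the form $t\,u f\sqrt{1+u^{2}|\nabla f|_{g}^{2}}$ on the right-hand side of \eqref{15} (exactly the regularization used in \cite{SchoenYau,HanKhuri1}), with Dirichlet data on $\{r=\rho\}$ equal to the model function $\sqrt{1+r^{2}}+2m\log r+\mathcal{B}$, and with a prescribed logarithmic blow-up imposed at the future (resp. down at the past) components of $\partial M$. Standard quasilinear theory yields a solution $f_{t,\rho}$. The heart of the argument is then a set of estimates uniform in $t$ and $\rho$. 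For the $C^{0}$ bound at null infinity one checks, using \eqref{2}, \eqref{3}, \eqref{15.3} and the choice \eqref{12} of $\mathcal{A},\mathcal{B}$, that $f^{\pm}=\sqrt{1+r^{2}}+2m\log r+\mathcal{B}\pm C r^{-1+\varepsilon}$ are super/sub-solutions of the regularized equation for $C$ large and $r$ large, trapping $f_{t,\rho}$; near $\partial M$ only one-sided barriers of logarithmic type are needed, taken verbatim from \cite{HanKhuri1}. The gradient estimate combines the interior maximum-principle estimate for $|\nabla f|_{g}$ of prescribed-mean-curvature type (following Schoen--Yau), the boundary gradient estimate at null infinity supplied by $f^{\pm}$, and the controlled blow-up of $|\nabla f|_{g}$ at $\partial M$ dictated by the horizon barriers; Schauder estimates then give all higher derivatives. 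Letting $\rho\to\infty$ and then $t\to0$ and extracting a subsequence converging in $C^{\infty}_{loc}(M\setminus\partial M)$ produces a smooth solution $f$ of \eqref{15} with the horizon behavior of \cite{HanKhuri1} and satisfying $f=\sqrt{1+r^{2}}+2m\log r+\mathcal{B}+O(r^{-1+\varepsilon})$ at null infinity.

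It then remains to upgrade this crude expansion to the precise one \eqref{11}--\eqref{13}. Here I would write $f=\sqrt{1+r^{2}}+\widetilde{f}$, substitute into \eqref{15}, and linearize about the approximate solution $\sqrt{1+r^{2}}$; after the conformal rescaling that turns the hyperbolic end $(S^{2}\times[r_{0},\infty),g_{0})$ into a half-cylinder, the linearized operator is a controlled perturbation of an operator on $S^{2}\times\mathbb{R}_{+}$ with coefficients independent of the $\mathbb{R}_{+}$ variable, whose indicial roots dictate the admissible asymptotics. Matching the borderline (logarithmic) indicial root forces the coefficient $\mathcal{A}=2m$ through the mass aspect, matching the next order forces the Poisson equation \eqref{12} on $S^{2}$ for $\mathcal{B}$, and weighted elliptic (Schauder) estimates on the end then give $\widetilde{f}-\mathcal{A}\log r-\mathcal{B}=O_{3}(r^{-1+\varepsilon})$; this is precisely the computation of \cite{Sakovich,SakovichThesis}, and the normalizations $a_{rr}=a_{r\alpha}=0$ together with the expansion \eqref{15.3} of $u$ are what ensure that the additional terms in \eqref{15} coming from $a$, $b$ and $u-1$ enter only at a level compatible with this expansion.

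The step I expect to be the main obstacle is the last one: carrying out the linearization at null infinity with enough precision to identify the logarithmic coefficient with $2m$ and to derive the elliptic equation for $\mathcal{B}$, while simultaneously controlling the contributions of the prescribed warping function $u$ to the lower-order terms. The horizon analysis and the basic existence scheme are, by contrast, routine adaptations of \cite{HanKhuri1} and the capillarity method, and in the asymptotically hyperboloidal setting the requisite barriers and weighted estimates at infinity are essentially those already constructed in \cite{Sakovich}.
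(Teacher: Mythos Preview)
Your overall strategy is the paper's: horizon behavior is imported from \cite{HanKhuri1}, and the hyperboloidal asymptotics come from adapting Sakovich's barrier analysis \cite{Sakovich,SakovichThesis} from $u\equiv 1$ to general $u$ satisfying \eqref{15.3}. The technical implementations differ in two places. For the barriers at infinity, the paper does not verify that explicit functions $\sqrt{1+r^{2}}+2m\log r+\mathcal{B}\pm Cr^{-1+\varepsilon}$ are sub/super-solutions; instead it follows Sakovich's $p$-substitution, writing the barrier as $\overline{f}=\zeta(r)+\mathcal{B}(\theta,\phi)$, setting $p=\zeta'\sqrt{1+r^{2}}/\sqrt{1+(1+r^{2})(\zeta')^{2}}\in[-1,1]$, reducing the Jang operator on $\overline{f}$ to an explicit first-order expression in $p$, and defining $p_{\pm}$ as solutions of an ODE initial-value problem with $p_{\pm}(r_{0})=\mp 1$; the expansion \eqref{11} is then \emph{read off} from the ODE solution rather than imposed, and the sub/super-solution inequality holds by construction. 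The paper also pinpoints exactly where the warping factor enters (a correction $-2\overline{m}_{adm}(1-p^{2})/r$ to the auxiliary quantity $\beta$), confirming that Sakovich's ODE analysis survives unchanged. For the derivative estimate \eqref{13}, the paper explicitly warns (in the remark following the theorem) that direct rescaling on the hyperbolic end fails, since dilation is not a homothety of $g_{0}$; its workaround is geometric rather than spectral: re-express the Jang graph as a graph $h$ over the lower-barrier hypersurface $M_{-}$ in Gaussian normal coordinates, note that the induced metric on $M_{-}$ is asymptotically Euclidean, and apply rescaling to the resulting equation for $h$. Your half-cylinder/indicial-root route for this last step is a plausible alternative, but you should be aware that the paper takes a different detour here and flags naive rescaling as a trap.
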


\begin{proof}
The blow-up (blow-down) and asymptotics at the horizon follow directly from the methods of \cite{HanKhuri1}. In order
to obtain the expansion \eqref{11} one may essentially follow the proof in \cite{Sakovich}
(see also \cite{SakovichThesis}), which although was designed for the classical Jang equation (that is when $u=1$), is still valid in this more general case.

The construction of barriers remains essentially the same. The idea is to look for
barriers of the form $\overline{f}(r,\theta,\phi)=\zeta(r) + \mathcal{B}(\theta,\phi)$, where $\mathcal{B}(\theta,\phi)$ is defined by \eqref{12}. In order to determine $\zeta$, it is convenient to use the substitution
\begin{equation}
p(r)=\frac{\zeta'(r)\sqrt{1+r^2}}{\sqrt{1+(1+r^2)\zeta'(r)^2}},
\end{equation}
so that $-1\leq p \leq 1$, and $p= \pm 1$ when $\zeta'=\pm \infty$. We remark that
\begin{align}\label{3.20}
\begin{split}
\beta(r,\theta,\varphi)& =\frac{1+(1+r^2)\zeta'^2}{u^{-2}+|\nabla \overline{f}|_{g}^2}\\
                       & = \frac{1}{1 + (1-p^2)(u^{-2}-1 + |\nabla \mathcal{B}|_g^2)}\\
                       & = 1 - \frac{2 \overline{m}_{adm}}{r} (1-p^2) + O(r^{-2}),
\end{split}
\end{align}
whereas the more simple expansion $\beta=1+O(r^{-2})$ holds in the setting of \cite{SakovichThesis, Sakovich}; in \eqref{3.20} the relation $u_{0}=-\overline{m}_{adm}$ is used, which follows from \eqref{3.19} and \eqref{18}. Nevertheless, a careful
computation shows that for $\overline{f}$ as above we have
\begin{align}
\begin{split}
\frac{\mathcal{F}(\overline{f})}{\beta^{3/2}\sqrt{1 + r^2}(u^{-2} + |\nabla \mathcal{B}|^2_g)}
= & p'  +\frac{2}{r}\left(p-\frac{r}{\sqrt{1+r^2}}\right)-\frac{1-p^2}{\sqrt{1+r^2}}
    -\frac{\mathcal{A} \sqrt{1-p^2}}{r^2\sqrt{1+r^2}}\\
  & + O(r^{-2})\left(\sqrt{\frac{1-p^2}{1+r^2}}-\frac{3p}{r^2} + \frac{2}{r^2} \right) + O(r^{-2}) \left( \sqrt{\frac{1-p^2}{1+r^2}} - \frac{1}{r^2} \right) \\
	& + O(r^{-2}) \left(p-\frac{r}{\sqrt{1+r^2}}\right)  + O(r^{-3}) (1-p^2) + O(r^{-2}) (1-p^2)^2  \\
  & + O(r^{-4})\sqrt{1-p^2} + O(r^{-5}),
\end{split}
\end{align}
where $\mathcal{F}(\overline{f})$ is the left-hand %portion
side of the generalized Jang equation \eqref{15} computed for $f = \overline{f}$. %which does not involve $k$.
Following \cite{SakovichThesis, Sakovich}, we define $p_+$ and $p_-$ to be solutions of the boundary value problems
\begin{equation}%\label{21}
\begin{split}
 p_{\pm}'
  & +\frac{2}{r}\left(p_{\pm}-\frac{r}{\sqrt{1+r^2}}\right)-\frac{1-p_{\pm}^2}{\sqrt{1+r^2}} -\frac{\mathcal{A} \sqrt{1-p_{\pm}^2}}{r^2\sqrt{1+r^2}}\\
  & \pm C_1 r^{-2}\left|\sqrt{\frac{1-p_{\pm}^2}{1+r^2}}-\frac{3p_{\pm}}{r^2} + \frac{2}{r^2} \right| \pm C_2 r^{-2} \left| \sqrt{\frac{1-p_{\pm}^2}{1+r^2}}
	  - \frac{1}{r^2} \right| \pm    C_3 r^{-2}\left|p_{\pm}-\frac{r}{\sqrt{1+r^2}}\right|  \\
	& \pm C_4 r^{-3}(1-p_{\pm}^2) \pm C_5 r^{-2} (1-p_{\pm}^2)^2 \pm C_6 r^{-4}\sqrt{1-p_{\pm}^2} \pm C_7 r^{-5} = 0,
\end{split}
\end{equation}

\begin{equation}%\label{22}
p_{\pm}(r_0)  = \mp 1,
\end{equation}
where $C_i$, $i=1,\ldots,7$, are positive constants. The same analysis as in \cite{SakovichThesis, Sakovich} applies to this system, the properties and the asymptotics of solutions remaining the same. This in turn gives rise to barriers $\overline{f}_\pm$ having the asymptotic expansion \eqref{11}. The rest of the proof proceeds as in \cite{SakovichThesis, Sakovich}.

%The only difficulty is
%making sure that the barriers constructed still work with the lower %order terms $\phi_{i}f_{j}$. However
%since $\phi_{i}=O(r^{-2})$, this should not be a problem. [[Of course %more details are needed here.]]
\end{proof}

\begin{remark}\label{remark001}
Since scalar multiplication is not a homothety for the hyperbolic metric, the estimate \eqref{13} cannot be obtained by applying the rescaling technique directly to the generalized Jang equation \eqref{15}. Instead, in order to show that Lemma \ref{lemma1} and Theorem \ref{thm1} hold for the metric $\overline{g} = g + u^2 df^2$ one may argue as in \cite{Sakovich} where the case $u=1$ was considered. More specifically, from the proof of Theorem \ref{thm2} we know that outside of a compact set the Jang graph $\overline{M}$ lies between the graph $M_-$ of the lower barrier $f_- : M \to \mathbb{R}$ and the graph $M_+$ of the upper barrier $f_+ :M \to \mathbb{R}$, where $f_-$ and $f_+$ have the asymptotic expansions \eqref{11}-\eqref{13}. Combining this with the $C^0$ estimate for the second fundamental form of $\overline{M}$ obtained in \cite[Theorem 2.2]{HanKhuri1} one may show that the asymptotic end of $\overline{M}$ can be viewed as the graph of the function $h : M_- \to \mathbb{R}$ in the Gaussian normal coordinates adapted to $M_- \subset M \times \mathbb{R}$. Recall \cite{BrayKhuri1, BrayKhuri2} that the generalized Jang equation tells us that the mean curvature of $\overline{M}$ is equal to the trace of a certain extension of $k$ to $M \times \mathbb{R}$ over $\overline{M}$. Expressing this in terms of $h: M_- \to \mathbb{R}$ will give a slightly more complicated equation than \eqref{15}, since the ambient metric will no longer have a warped product structure in the described coordinates. However, the induced metric on $M_-$ is asymptotically Euclidean in the sense of \eqref{9}, which allows one to apply the rescaling technique to this equation and thereby derive the analogue of estimate \eqref{13} for the function $h$. Translating this back to the setting in which $\overline{M}$ is expressed as the graph $f: M \to  \mathbb{R}$ then yields the desired estimates.
\end{remark}

\section{The Penrose Inequality with Charge}
\label{sec4} \setcounter{equation}{0}
\setcounter{section}{4}

Let $(M,g,k,E)$ be an initial data set for the Einstein-Maxwell equations as described in Section \ref{sec2},
with $E$ divergence free. For simplicity in this section, we will assume that the magnetic field vanishes $B=0$. %although similar arguments show that the main results hold in case %this field is nontrivial.
We seek a deformation of the initial data to $(\overline{M},\overline{g},\overline{E})$ such that the charged dominant energy condition holds weakly in the time symmetric case, that is $\overline{R}\geq 2|\overline{E}|_{\overline{g}}^{2}$ when integrated against an appropriate test function. Moreover, several other aspects of the geometry should be preserved, namely
\begin{equation}\label{21-0}
\overline{m}_{adm}=m,\text{ }\text{ }\text{ }\text{ }\text{ }\text{ }\overline{\mathcal{Q}}_{e}=\mathcal{Q}_{e}, \text{ }\text{ }\text{ }\text{ }\text{ }\text{ }\operatorname{div}_{\overline{g}}\overline{E}=(1+u^{2}|\nabla f|_{g}^{2})^{-1/2}\operatorname{div}_{g}E,\text{ }\text{ }\text{ }\text{ }\text{ }\text{ }|E|_{g}\geq|\overline{E}|_{\overline{g}}.
\end{equation}
This will be achieved by choosing $\overline{g}=g+u^{2}df^{2}$ where $f$ solves the generalized Jang equation
having the asymptotics and boundary behavior as in Section \ref{sec3}, and also with the same choice of warping factor \eqref{16}. In particular, combining the arguments in the previous section with ideas from \cite{DisconziKhuri} we obtain $\overline{m}_{adm}=m$. Furthermore, by choosing
\begin{equation}\label{22-0}
\overline{E}_i = \frac{E_i + u^2 f_i f^j E_j}{\sqrt{1 + u^2 |\nabla f|^2_g}}
\end{equation}
the last two properties of \eqref{21-0} are satisfied, as is shown in \cite{DisconziKhuri}. We will now show that $\overline{\mathcal{Q}}_{e}=\mathcal{Q}_{e}$. First observe that
\begin{equation}\label{4.1}
|\nabla f|_{g}^{2}\sim r^{2},\text{ }\text{ }\text{ }\text{ }\text{ }\text{ } f_{r}\sim 1, \text{ }\text{ }\text{ }\text{ }\text{ }\text{ }f^{r}\sim r^{2},\text{ }\text{ }\text{ }\text{ }\text{ }\text{ }
f^{\alpha}=O(r^{-2}),
\end{equation}
\begin{equation}\label{4.2}
\nu_{\overline{g}}^{r}\sim 1,\text{ }\text{ }\text{ }\text{ }\text{ }\text{ }
\nu_{\overline{g}}^{\alpha}=O(r^{-2}),\text{ }\text{ }\text{ }\text{ }\text{ }\text{ }
\nu_{g}^{r}\sim r,\text{ }\text{ }\text{ }\text{ }\text{ }\text{ }
\nu_{g}^{\alpha}=O(r^{-4}),
\end{equation}
and by \eqref{6}
\begin{equation}\label{4.3}
f^{j}E_{j}=f^r E_r +O(r^{-3})=O(r^{-1}).
\end{equation}
It follows that
\begin{align}\label{23}
\begin{split}
\overline{E}_{i}\nu_{\overline{g}}^{i}&=\overline{E}_{r}\nu_{\overline{g}}^{r}
+\overline{E}_{\alpha}\nu_{\overline{g}}^{\alpha}\\
&=r^{-1}(E_{r}+u^{2}f_{r}f^{j}E_{j})
+r^{-3}(E_{\alpha}+u^{2}f_{\alpha}f^{j}E_{j})+O(r^{-3})\\
&=rE_{r}+O(r^{-3})\\
&=E_{i}\nu^{i}_{g}+O(r^{-3}),
\end{split}
\end{align}
which yields the desired conclusion.

\begin{theorem}\label{thm3}
Let $(M, g, k, E)$ be a $3$-dimensional, asymptotically hyperboloidal initial data set for the Einstein-Maxwell equations with a
connected outermost apparent horizon boundary, and satisfying the charged dominant energy condition $\mu_{EM}\geq|J_{EM}|$ as well as $\operatorname{div}_{g}E=0$.
If the coupled Jang-IMCF system of equations admits a solution satisfying the asymptotics \eqref{11}-\eqref{13},
with a weak IMCF (in the sense of \cite{HuiskenIlmanen}), and such that the boundary of the Jang surface is minimal, then
\begin{equation}\label{24}
m\geq \sqrt{ \frac{A }{16\pi} } + \sqrt{ \frac{\pi}{A } } \mathcal{Q}^2,
\end{equation}
and if equality is achieved then the initial data
arise from an embedding into the Reissner-Nordstr\"{o}m spacetime.
\end{theorem}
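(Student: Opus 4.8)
The plan is to reproduce, almost verbatim, the reduction in the proof of Theorem \ref{thm1}, with Geroch monotonicity of the Hawking mass replaced by its charged analogue, exactly as in the asymptotically flat treatment of \cite{DisconziKhuri}. First I would fix a solution of the Jang-IMCF system with the asymptotics \eqref{11}-\eqref{13}, minimal Jang boundary, and weak IMCF as hypothesized, and form the deformed data $(\overline{M},\overline{g},\overline{E})$ with $\overline{g}=g+u^{2}df^{2}$, $\overline{E}$ as in \eqref{22-0}, and $u$ coupled to the flow by \eqref{16}. The ingredients already in place are: $\overline{m}_{adm}=m$, which follows from Lemma \ref{lemma1} together with the identity $m=-u_{0}$ whose derivation in Section \ref{sec3} carries over once $M_{H}(\infty)=\overline{m}_{adm}$ is known; $\operatorname{div}_{\overline{g}}\overline{E}=0$ (as $\operatorname{div}_{g}E=0$) and $|E|_{g}\geq|\overline{E}|_{\overline{g}}$ from \eqref{21-0}; and $\overline{\mathcal{Q}}_{e}=\mathcal{Q}_{e}$ from \eqref{23}. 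In addition, from \eqref{15.1} one obtains the weak bound $\overline{R}\geq 2|\overline{E}|_{\overline{g}}^{2}-2u^{-1}\operatorname{div}_{\overline{g}}(uq)$: writing $\mu=\mu_{EM}+|E|_{g}^{2}$ and $J=J_{EM}$ (since $B=0$), the standard estimate $\mu-J(w)\geq\mu-|J|_{g}$ of \cite{BrayKhuri1,BrayKhuri2} and the charged dominant energy condition give $\mu-J(w)\geq|E|_{g}^{2}\geq|\overline{E}|_{\overline{g}}^{2}$, and one discards $|\pi-k|_{\overline{g}}^{2}+2|q|_{\overline{g}}^{2}\geq0$.

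Next I would record two facts about the electric flux through the flow surfaces $\overline{S}_{\tau}$. Since $\overline{E}$ is $\overline{g}$-divergence free on all of $\overline{M}$ and the region enclosed between $\overline{S}_{0}$ and $\overline{S}_{\tau}$ (jump regions included, as they remain inside $\overline{M}$) carries no net flux, the divergence theorem shows that the enclosed charge $q(\tau):=\tfrac{1}{4\pi}\int_{\overline{S}_{\tau}}\overline{g}(\overline{E},\nu_{\overline{g}})$ is independent of $\tau$ and equals $\overline{\mathcal{Q}}_{e}=\mathcal{Q}_{e}$; as $B=0$, $\mathcal{Q}^{2}=\mathcal{Q}_{e}^{2}$. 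By Cauchy-Schwarz, $\int_{\overline{S}_{\tau}}|\overline{E}|_{\overline{g}}^{2}\geq\int_{\overline{S}_{\tau}}\overline{g}(\overline{E},\nu_{\overline{g}})^{2}\geq 16\pi^{2}\mathcal{Q}^{2}/|\overline{S}_{\tau}|$.

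Then I would run the Huisken-Ilmanen monotonicity for the weak IMCF in $(\overline{M},\overline{g})$, keeping the $\int_{\overline{S}_{\tau}}\overline{R}$ term. As in \cite{BrayKhuri1,BrayKhuri2}, the coupling \eqref{16} converts the contribution of $-2u^{-1}\operatorname{div}_{\overline{g}}(uq)$ into precisely the boundary terms $-\tfrac{1}{8\pi}\int_{\overline{S}_{0}\cup\overline{S}_{\infty}}u\overline{g}(q,\nu_{\overline{g}})$ of \eqref{17}, which vanish — the inner term because $u=0$ on $\overline{S}_{0}$, the null-infinity term by the asymptotics \eqref{15.3}, \eqref{11} exactly as in Appendix B. The surviving $2|\overline{E}|_{\overline{g}}^{2}$ contribution to $\int_{\overline{S}_{\tau}}\overline{R}$, together with the Cauchy-Schwarz bound above, promotes Geroch monotonicity of the Hawking mass to monotonicity of the charged Hawking mass
\[
M_{H}^{\mathcal{Q}}(\tau)=\sqrt{\frac{|\overline{S}_{\tau}|}{16\pi}}\left(1+\frac{4\pi\mathcal{Q}^{2}}{|\overline{S}_{\tau}|}-\frac{1}{16\pi}\int_{\overline{S}_{\tau}}\overline{H}_{\tau}^{2}\right),
\]
the precise point being that under IMCF the $\tau$-derivative of the charge term equals $-\sqrt{\pi}\,\mathcal{Q}^{2}/(2\sqrt{|\overline{S}_{\tau}|})$, which is cancelled exactly by the extra $2|\overline{E}|_{\overline{g}}^{2}$ term in the Geroch integrand. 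Since $\overline{S}_{0}=\partial\overline{M}$ is minimal, $M_{H}^{\mathcal{Q}}(0)=\sqrt{A/16\pi}+\sqrt{\pi/A}\,\mathcal{Q}^{2}$; and by the behavior of the flow surfaces in the asymptotically flat end, as in the proof of Theorem \ref{thm1}, $M_{H}^{\mathcal{Q}}(\infty)=M_{H}(\infty)=\overline{m}_{adm}=m$. Monotonicity then yields \eqref{24}. For the rigidity statement, one traces equality back through the chain: it forces $\pi=k$, $q\equiv0$, $|E|_{g}=|\overline{E}|_{\overline{g}}$, and the flow surfaces to be umbilic round spheres on which $\overline{g}(\overline{E},\nu_{\overline{g}})$ is constant and $\overline{E}$ is purely radial; one then concludes as in \cite{DisconziKhuri,BrayKhuri2} that the data arise from an embedding into the Reissner-Nordström spacetime.

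The step I expect to be the main obstacle is the rigorous implementation of this modified monotonicity within the \emph{weak} IMCF framework while correctly handling the divergence term; as in the uncharged case this rests on the choice \eqref{16} and on the precise decay of the Jang graph at null infinity, and verifying that the null-infinity boundary term in \eqref{17} still vanishes under hyperboloidal rather than asymptotically flat asymptotics is exactly the role of Appendix B. The genuinely new pieces — constancy of the enclosed charge, the Cauchy-Schwarz step, and the identification of $M_{H}^{\mathcal{Q}}(0)$ and $M_{H}^{\mathcal{Q}}(\infty)$ — are routine by comparison, once the uncharged reduction of Section \ref{sec3} is in hand.
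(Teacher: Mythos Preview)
Your proposal is correct and follows essentially the same approach as the paper: the paper's proof is a single sentence deferring to \cite{DisconziKhuri} once the properties \eqref{21-0} are in hand, and you have accurately reconstructed what that citation entails (charged Hawking mass monotonicity along the weak IMCF, with the divergence term handled via the coupling \eqref{16} and the boundary integrals controlled exactly as in Section \ref{sec3} and Appendix B). Your expansion of the details---the weak lower bound on $\overline{R}$ via the charged dominant energy condition and $|E|_{g}\geq|\overline{E}|_{\overline{g}}$, constancy of the enclosed charge, the Cauchy--Schwarz step, and the identification of $M_{H}^{\mathcal{Q}}(0)$ and $M_{H}^{\mathcal{Q}}(\infty)$---is precisely the content of \cite{DisconziKhuri} transplanted to the Jang-deformed data.
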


\begin{proof}
Since \eqref{21-0} holds, the theorem follows directly from the arguments in the asymptotically flat case \cite{DisconziKhuri}.
\end{proof}

\begin{remark}
Note that the Jang-IMCF system of equations is exactly the same as in Section \ref{sec3}. Thus, the existence result Theorem \ref{thm2} provides further credence to the above procedure.
Moreover, it should be possible to generalize this result to the case of multiple black holes if an additional area-charge inequality is satisfied by the horizon as in \cite{KhuriWeinsteinYamada1}. This will require a coupling of the generalized Jang equation to the charged conformal flow \cite{KhuriWeinsteinYamada1}.
\end{remark}

\section{The Positive Mass Theorem with Charge}
\label{sec5} \setcounter{equation}{0}
\setcounter{section}{5}

Let $(M,g,k,E)$ be an initial data set for the Einstein-Maxwell equations as described in Section \ref{sec2}. We will assume for convenience, as in the previous section, that the magnetic field vanishes, but here the
electric field $E$ need not be divergence free. Again we seek a deformation of the initial data to $(\overline{M},\overline{g},\overline{E})$, where
$\overline{g}=g+u^{2}df^{2}$ and $\overline{E}$ is given in \eqref{22-0}. As before, $f$ solves the generalized Jang equation having the asymptotics as in Section \ref{sec3}, and boundary behavior (at the horizon) as described in \cite{KhuriWeinstein}. However, the warping factor is chosen differently, and this will be outlined below.
%Note that \eqref{21} holds with this choice of $u$ as long as the %expansion \eqref{19} is valid.

In order to choose $u$, we must describe the appropriate spinors on the Jang surface. Dirac spinors \cite{ParkerTaubes} are sections of the (vector) spinor bundle $\mathcal{S}$ over $\overline{M}$ with structure group $SL(2,\mathbb{C})$.  The (Jang) metric compatible connection on $\mathcal{S}$ is given by
\begin{equation}\label{25}
 \overline{\nabla}_{e_{i}}=e_{i}+\frac{1}{4}\, \overline{\omega}_{ijl}\,
 e_{j}\cdot e_{l}\cdot
\end{equation}
where $\overline{\omega}_{ijl}=\overline{g}(\overline{\nabla}_{e_i} e_j,e_l)$ are connection coefficients associated with an orthonormal frame field
$(e_1,e_2,e_3)$, and %$c: T^* \overline{M}\rightarrow \mathrm{End}(\mathcal{S})$
$\cdot$ indicates
Clifford multiplication. The Einstein-Maxwell spin connection on $\mathcal{S}$, which is relevant for the positive mass theorem with charge \cite{GHHP}, then has the form
\begin{equation}\label{27}
   \nabla_{e_{i}}= \overline{\nabla}_{e_i}
   -\frac{1}{2} \, \overline{E}\cdot e_{i}\cdot e_{0}\cdot
\end{equation}
where $e_0$ is the unit normal to $\overline{M}$ in the (Lorentzian) warped product 4-manifold described in Section \ref{sec3}. Observe that this connection is not metric compatible due to the contribution of the electric field. Let $\Gamma(\mathcal{S})$ be the space of cross-sections, then the Einstein-Maxwell Dirac operator
$\slashed{D} : \Gamma(\mathcal{S})\rightarrow\Gamma(\mathcal{S})$ is  defined by
\begin{equation}\label{28}
  \slashed{D}\psi = \sum_{i=1}^3 e_i\cdot
  \nabla_{e_i} \psi,
\end{equation}
and a spinor $\psi$ on $\overline{M}$ is called harmonic if it satisfies the Dirac equation
\begin{equation}\label{29}
\slashed{D} \psi = 0.
\end{equation}
The Dirac equation is coupled to the generalized Jang equation through the choice
\begin{equation}\label{30}
u=|\psi|^{2}.
\end{equation}

\begin{lemma}\label{lemma2}
Fix a complete asymptotically flat initial data set $(\overline{M},\overline{g},\overline{E})$, with asymptotically cylindrical ends, and satisfying $|\overline{R}|+|\operatorname{div}_{\overline{g}}\overline{E}|=o(r^{-3})$ as $r \rightarrow \infty$.
Let $\psi$ solve the Dirac equation \eqref{29} with $\psi\rightarrow\psi_{0}$ in the asymptotic end, where $\psi_{0}$ is a constant spinor of modulus 1. Then $u$ as defined by \eqref{30} has the asymptotic expansion
\begin{equation}\label{5.1}
u= 1 + \frac{u_{0}}{r}+O_{2}\left(\frac{1}{r^{2-\epsilon}}\right),
\end{equation}
for any $\epsilon>0$. Moreover $u_{0}=-2\overline{m}_{adm}-\overline{\mathcal{Q}}_{e}\langle\psi_{0},e_{0}\cdot\psi_{0}\rangle$.
\end{lemma}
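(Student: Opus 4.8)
The plan is to combine elliptic decay estimates for the Dirac operator \eqref{29} with a Bochner-type identity and a Witten-style boundary integral, in the spirit of the asymptotically flat positive mass theorem with charge \cite{GHHP, KhuriWeinstein}. Using the three-dimensional Clifford identity $\sum_{i=1}^{3}e_i\cdot V\cdot e_i=V$ for vectors $V$, I would first rewrite \eqref{29} as
\begin{equation*}
\overline{\slashed{D}}\psi=\tfrac12\,\overline{E}\cdot e_0\cdot\psi,\qquad \overline{\slashed{D}}:=\sum_{i=1}^{3}e_i\cdot\overline{\nabla}_{e_i},
\end{equation*}
where $\overline{\slashed{D}}$ is the Dirac operator of the metric connection \eqref{25}. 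Since $\overline{g}$ is asymptotically flat in the sense of \eqref{9} and $\overline{E}=O(r^{-2})$, the operator $\overline{\slashed{D}}$ is a perturbation of the Euclidean Dirac operator in the asymptotically flat end, and the difference $\psi-\psi_0$ satisfies an elliptic system whose right-hand side decays one power of $r$ faster than $\psi-\psi_0$. Weighted elliptic estimates (cf. \cite{ParkerTaubes}) then yield $\psi=\psi_0+\Psi_1 r^{-1}+O_2(r^{-2+\epsilon})$, with $\Psi_1=\Psi_1(\theta,\phi)$ a priori, and hence $u=|\psi|^2=1+2\operatorname{Re}\langle\psi_0,\Psi_1\rangle\,r^{-1}+O_2(r^{-2+\epsilon})$.

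The crucial point is that this $r^{-1}$-coefficient is in fact constant. For this I would use the Bochner formula
\begin{equation*}
\Delta_{\overline{g}}|\psi|^2=2\,|\overline{\nabla}\psi|_{\overline{g}}^{2}-2\operatorname{Re}\langle\psi,\overline{\nabla}^{*}\overline{\nabla}\psi\rangle
\end{equation*}
together with the Schr\"odinger--Lichnerowicz identity $\overline{\nabla}^{*}\overline{\nabla}\psi=\overline{\slashed{D}}^{2}\psi-\tfrac14\overline{R}\,\psi$. Differentiating the rewritten Dirac equation once more, $\overline{\slashed{D}}^{2}\psi$ splits into terms that are $O(r^{-4})$ (those containing $\overline{\nabla}\psi$ or the second fundamental form of $\overline{M}$), the term $\tfrac12(\operatorname{div}_{\overline{g}}\overline{E})\,e_0\cdot\psi$, which is $o(r^{-3})$ by hypothesis, and the term $\tfrac12(\operatorname{curl}_{\overline{g}}\overline{E})\cdot e_0\cdot\psi$, which is only $O(r^{-3})$. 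Crucially, Clifford multiplication by a spatial $2$-form composed with $e_0$ is skew-Hermitian, so the curl term contributes nothing to $\operatorname{Re}\langle\psi,\overline{\nabla}^{*}\overline{\nabla}\psi\rangle$; since also $\tfrac14\overline{R}|\psi|^2=o(r^{-3})$ and $|\overline{\nabla}\psi|_{\overline{g}}^{2}=O(r^{-4})$, one gets $\Delta_{\overline{g}}u=o(r^{-3})$, hence $\Delta_{\delta}u=o(r^{-3})$ using the asymptotic flatness and $u=1+O_2(r^{-1})$. On the other hand the expansion above gives $\Delta_{\delta}u=r^{-3}\Delta_{\sigma}\!\big(2\operatorname{Re}\langle\psi_0,\Psi_1\rangle\big)+O(r^{-4+\epsilon})$, so $\Delta_{\sigma}\!\big(2\operatorname{Re}\langle\psi_0,\Psi_1\rangle\big)=0$; the $r^{-1}$-coefficient is therefore a constant $u_0$, which proves \eqref{5.1}.

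It remains to evaluate $u_0$. From the expansion just established, $u_0=-\tfrac{1}{4\pi}\lim_{R\to\infty}\oint_{S_R}\nu(u)\,dA$ over large coordinate spheres with $\overline{g}$-unit outward normal $\nu$. Since $\overline{\nabla}$ preserves the spinor inner product, $\nu(u)=2\operatorname{Re}\langle\psi,\overline{\nabla}_{\nu}\psi\rangle$, and choosing a frame with $e_3=\nu$ gives the Witten decomposition $\operatorname{Re}\langle\psi,\overline{\nabla}_{\nu}\psi\rangle=\operatorname{Re}\big\langle\psi,(\overline{\nabla}_{\nu}+\nu\cdot\overline{\slashed{D}})\psi\big\rangle-\operatorname{Re}\langle\psi,\nu\cdot\overline{\slashed{D}}\psi\rangle$. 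The first surface integral is the classical Witten mass term, whose limit depends only on $\psi\to\psi_0$ and on the asymptotic flatness of $\overline{g}$, so that $2\lim_{R\to\infty}\oint_{S_R}\operatorname{Re}\langle\psi,(\overline{\nabla}_{\nu}+\nu\cdot\overline{\slashed{D}})\psi\rangle\,dA=8\pi\,\overline{m}_{adm}|\psi_0|^2=8\pi\,\overline{m}_{adm}$ (see \cite{GHHP, ParkerTaubes}). For the second, substituting $\overline{\slashed{D}}\psi=\tfrac12\overline{E}\cdot e_0\cdot\psi$ and splitting $\overline{E}$ into its $\nu$-component and its tangential part, the tangential part drops out of the real part because $\nu\cdot e_A\cdot e_0$ is skew-Hermitian for $A=1,2$, while $\nu\cdot\nu=-1$ gives
\begin{equation*}
-2\lim_{R\to\infty}\oint_{S_R}\operatorname{Re}\langle\psi,\nu\cdot\overline{\slashed{D}}\psi\rangle\,dA=\langle\psi_0,e_0\cdot\psi_0\rangle\lim_{R\to\infty}\oint_{S_R}\overline{g}(\overline{E},\nu)\,dA=4\pi\,\overline{\mathcal{Q}}_e\,\langle\psi_0,e_0\cdot\psi_0\rangle
\end{equation*}
by \eqref{8}. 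Adding the two contributions, $-4\pi u_0=8\pi\,\overline{m}_{adm}+4\pi\,\overline{\mathcal{Q}}_e\langle\psi_0,e_0\cdot\psi_0\rangle$, that is $u_0=-2\overline{m}_{adm}-\overline{\mathcal{Q}}_e\langle\psi_0,e_0\cdot\psi_0\rangle$.

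The main obstacle is the constancy of the leading coefficient: this is the step where the hypothesis $|\overline{R}|+|\operatorname{div}_{\overline{g}}\overline{E}|=o(r^{-3})$ is genuinely needed, together with the observation that the borderline $O(r^{-3})$ contribution of $\operatorname{curl}_{\overline{g}}\overline{E}$ is purely imaginary after pairing with $\psi$ and hence invisible to $u=|\psi|^2$. By contrast, the decay estimates in the first step and the evaluation of the boundary integrals in the last step are routine adaptations of the arguments in \cite{ParkerTaubes, GHHP, KhuriWeinstein}, including the verification that all error terms decay at the stated rates.
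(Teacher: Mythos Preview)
Your proposal is correct and takes essentially the same approach as the paper: both arguments compute $\Delta_{\overline{g}}|\psi|^{2}$ via the Bochner/Lichnerowicz identity, use the Parker--Taubes decay $\psi=\psi_0+O(r^{-1+\epsilon/2})$, $|\overline{\nabla}\psi|=O(r^{-2+\epsilon/2})$ together with the hypothesis on $\overline{R}$ and $\operatorname{div}_{\overline{g}}\overline{E}$ to obtain $\Delta_{\overline{g}}u=o(r^{-3})$, and then evaluate $u_0$ by the Witten boundary integral $-4\pi u_0=\int_{S_\infty}2\operatorname{Re}\langle\psi,\overline{\nabla}_{\nu}\psi\rangle$ split as $(\overline{\nabla}_\nu+\nu\cdot\overline{\slashed{D}})$ plus $-\nu\cdot\overline{\slashed{D}}$, with $\overline{\slashed{D}}\psi=\tfrac12\overline{E}\cdot e_0\cdot\psi$. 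The only cosmetic difference is that the paper packages the ``curl term'' observation inside the single identity $\langle e_i\cdot e_j\cdot\overline{\nabla}_i\overline{E}\cdot e_j\cdot e_0\cdot\psi,\psi\rangle+\text{c.c.}=-2(\operatorname{div}_{\overline{g}}\overline{E})\langle\psi,e_0\cdot\psi\rangle$ and then directly invokes a reference for the expansion of $u$, whereas you first posit an angle-dependent $r^{-1}$ coefficient and argue it is constant; the substance is identical. (One small slip: no ``second fundamental form of $\overline{M}$'' enters $\overline{\slashed{D}}^2\psi$ here, since the $t=0$ slice is totally geodesic and $\overline{\nabla}_i e_0=0$; the $O(r^{-4})$ bound on the $\overline{E}\cdot e_0\cdot\overline{\nabla}\psi$ cross terms is all you need.)
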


\begin{proof}
Observe that
\begin{equation}\label{11111}
\Delta_{\overline{g}}|\psi|^{2}=\overline{\nabla}_{e_{i}}\overline{\nabla}_{e_{i}}|\psi|^{2}
-\overline{\nabla}_{\overline{\nabla}_{e_{i}}e_{i}}|\psi|^{2}
=\langle\overline{\nabla}_{e_{i},e_{i}}^{2}\psi,\psi\rangle
+\langle\psi,\overline{\nabla}_{e_{i},e_{i}}^{2}\psi\rangle
+2|\overline{\nabla}\psi|^{2},
\end{equation}
where $\overline{\nabla}_{e_{i},e_{i}}^{2}=\overline{\nabla}_{e_{i}}\overline{\nabla}_{e_{i}}
-\overline{\nabla}_{\overline{\nabla}_{e_{i}}e_{i}}$ is the connection Laplacian. In what follows, calculations will be performed at a point where the orthonormal frame has been chosen such that $\overline{\nabla}_{e_{i}}e_{j}=0$; note that we also have $\overline{\nabla}_{e_{i}}e_{0}=0$ as the $t=0$ slice is totally geodesic in the Lorentzian setting of the Jang deformation. Moreover, for simplicity $\overline{\nabla}_{e_{i}}$ will be denoted by $\overline{\nabla}_{i}$. If $A_{i}=\frac{1}{2}\overline{E}\cdot e_{i}\cdot e_{0}$ then
\begin{align}
\begin{split}
0=\slashed{D}^{2}\psi=&e_{i}\cdot\left(\overline{\nabla}_{i}-A_{i}\cdot\right)
e_{j}\cdot\left(\overline{\nabla}_{j}-A_{j}\cdot\right)\psi\\
=&e_{i}\cdot\overline{\nabla}_{i}\left(e_{j}\cdot\overline{\nabla}_{j}\psi\right)
-e_{i}\cdot\overline{\nabla}_{i}\left(e_{j}\cdot A_{j}\cdot\psi\right)
-e_{i}\cdot A_{i}\cdot e_{j}\cdot\overline{\nabla}_{j}\psi
+e_{i}\cdot A_{i}\cdot e_{j}\cdot A_{j}\cdot\psi\\
=&-\overline{\nabla}_{i}\overline{\nabla}_{i}\psi+\frac{1}{4}\overline{R}\psi
-e_{i}\cdot e_{j}\cdot A_{j}\cdot\overline{\nabla}_{i}\psi
-\frac{1}{2}e_{i}\cdot e_{j}\cdot\overline{\nabla}_{i}\overline{E}\cdot
e_{j}\cdot e_{0}\cdot\psi,
\end{split}
\end{align}
after applying the Lichnerowicz-Weitzenb\"{o}ck formula. It follows that
\begin{align}
\begin{split}
\langle\overline{\nabla}_{i}\overline{\nabla}_{i}\psi,\psi\rangle
+\langle\psi,\overline{\nabla}_{i}\overline{\nabla}_{i}\psi\rangle
=&\frac{1}{2}\overline{R}|\psi|^{2}
-\langle e_{i}\cdot e_{j}\cdot A_{j}\cdot\overline{\nabla}_{i}\psi,\psi\rangle
-\langle\psi,e_{i}\cdot e_{j}\cdot A_{j}\cdot\overline{\nabla}_{i}\psi\rangle\\
&-\frac{1}{2}(\langle e_{i}\cdot e_{j}\cdot\overline{\nabla}_{i}\overline{E}\cdot
e_{j}\cdot e_{0}\cdot\psi,\psi\rangle
+\langle\psi,e_{i}\cdot e_{j}\cdot\overline{\nabla}_{i}\overline{E}\cdot
e_{j}\cdot e_{0}\cdot\psi\rangle).
\end{split}
\end{align}
Moreover a computation shows that
\begin{equation}
\langle e_{i}\cdot e_{j}\cdot\overline{\nabla}_{i}\overline{E}\cdot
e_{j}\cdot e_{0}\cdot\psi,\psi\rangle
+\langle\psi,e_{i}\cdot e_{j}\cdot\overline{\nabla}_{i}\overline{E}\cdot
e_{j}\cdot e_{0}\cdot\psi\rangle
=-2(\operatorname{div}_{\overline{g}}\overline{E})\langle\psi,e_{0}\cdot\psi\rangle,
\end{equation}
and hence
\begin{equation}\label{u equation}
\Delta_{\overline{g}}|\psi|^{2}=2|\overline{\nabla}\psi|^{2}
+\frac{1}{2}\overline{R}|\psi|^{2}
-\langle e_{i}\cdot e_{j}\cdot A_{j}\cdot\overline{\nabla}_{i}\psi,\psi\rangle
-\langle\psi,e_{i}\cdot e_{j}\cdot A_{j}\cdot\overline{\nabla}_{i}\psi\rangle
+(\operatorname{div}_{\overline{g}}\overline{E})\langle\psi,e_{0}\cdot\psi\rangle.
\end{equation}

According to \cite{ParkerTaubes}, in the asymptotically flat end $\psi=\psi_{0}+O(r^{-1+\epsilon/2})$ and $|\overline{\nabla}\psi|=O(r^{-2+\epsilon/2})$ for all $\epsilon>0$. This, combined with the assumption $|\overline{R}|+|\operatorname{div}_{\overline{g}}\overline{E}|=o(r^{-3})$ implies that
\begin{equation}
\Delta_{\overline{g}}|\psi|^{2}=o(r^{-3}).
\end{equation}
It follows that $u$ has the desired expansion (see for instance \cite{SmithWeinstein}).

Lastly, we compute the value of $u_{0}$. Observe that \eqref{11111} may be rewritten as
\begin{equation}
\Delta_{\overline{g}}u=\operatorname{div}_{\overline{g}}
\left(\langle\overline{\nabla}_{\bullet}\psi,\psi\rangle\right)
+\operatorname{div}_{\overline{g}}
\left(\langle\psi,\overline{\nabla}_{\bullet}\psi\rangle\right).
\end{equation}
Integrating by parts produces
\begin{align}\label{000000}
\begin{split}
-4\pi u_{0}=&\int_{\overline{S}_{\infty}}\langle\overline{\nabla}_{\nu_{\overline{g}}}\psi,\psi\rangle
+\langle\psi,\overline{\nabla}_{\nu_{\overline{g}}}\psi\rangle\\
=&\int_{\overline{S}_{\infty}}\langle\overline{\nabla}_{\nu_{\overline{g}}}\psi
+\nu_{\overline{g}}\cdot\overline{\slashed{D}}\psi,\psi\rangle
+\langle\psi,\overline{\nabla}_{\nu_{\overline{g}}}\psi
+\nu_{\overline{g}}\cdot\overline{\slashed{D}}\psi\rangle\\
&-\int_{\overline{S}_{\infty}}\left(\langle
\nu_{\overline{g}}\cdot\overline{\slashed{D}}\psi,\psi\rangle
+\langle\psi,
\nu_{\overline{g}}\cdot\overline{\slashed{D}}\psi\rangle\right)\\
=&8\pi\overline{m}_{adm}-\int_{\overline{S}_{\infty}}\left(\langle
\nu_{\overline{g}}\cdot\overline{\slashed{D}}\psi,\psi\rangle
+\langle\psi,
\nu_{\overline{g}}\cdot\overline{\slashed{D}}\psi\rangle\right).
\end{split}
\end{align}
Note that the asymptotically cylindrical ends do not contribute to a boundary integral, since in these regions $u$ decays exponentially. Furthermore
\begin{equation}
\overline{\slashed{D}}\psi=e_{i}\cdot\overline{\nabla}_{i}\psi
=e_{i}\cdot(\nabla_{i}+A_{i}\cdot)\psi=e_{i}\cdot A_{i}\cdot\psi,
\end{equation}
so that
\begin{equation}\label{123450}
\langle
\nu_{\overline{g}}\cdot\overline{\slashed{D}}\psi,\psi\rangle
+\langle\psi,
\nu_{\overline{g}}\cdot\overline{\slashed{D}}\psi\rangle=
-\overline{g}(\overline{E},\nu_{\overline{g}})\langle\psi,e_{0}\cdot\psi\rangle.
\end{equation}
The desired result follows by combining \eqref{000000} and \eqref{123450}.
\end{proof}

\begin{theorem}\label{thm4}
Let $(M, g, k, E)$ be a $3$-dimensional, asymptotically hyperboloidal initial data set for the Einstein-Maxwell equations with an
outermost apparent horizon boundary, and satisfying the charged dominant energy condition \eqref{10}.
If the coupled Dirac-Jang system of equations admits a solution satisfying the asymptotics \eqref{11}-\eqref{13}, \eqref{5.1}, and such that the Jang surface possesses an asymptotically cylindrical neck over the horizon, then
\begin{equation}\label{24}
m\geq|\mathcal{Q}|,
\end{equation}
and if equality is achieved then the initial data arise from an embedding into the Majumdar-Papapetrou spacetime.
\end{theorem}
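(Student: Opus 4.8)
\medskip

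The plan is to deform $(M,g,k,E)$ to $(\overline{M},\overline{g},\overline{E})$ by the coupled Dirac--Jang system \eqref{15}, \eqref{29}, \eqref{30}, to verify that the deformed data meets the hypotheses of the asymptotically flat positive mass theorem with charge on a manifold with cylindrical ends \cite{GHHP, KhuriWeinstein}, and then to extract \eqref{24} by inserting the mass identities of Lemmas \ref{lemma1} and \ref{lemma2} into the conclusion of that theorem. Throughout, $(f,u,\psi)$ denotes a solution of the system with $u=|\psi|^{2}$, $\psi\to\psi_{0}$ of modulus $1$ in the asymptotic end, $f$ satisfying \eqref{11}--\eqref{13} at null infinity, and an asymptotically cylindrical neck over each horizon component.

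First I would record what the hypotheses give. Since \eqref{5.1} is of the same form as \eqref{15.3}, Lemma \ref{lemma1} shows that $(\overline{M},\overline{g})$ is asymptotically flat with $\overline{m}_{adm}=2m+u_{0}$, and by assumption it carries asymptotically cylindrical ends. The field $\overline{E}$ of \eqref{22-0} satisfies $\operatorname{div}_{\overline{g}}\overline{E}=(1+u^{2}|\nabla f|_{g}^{2})^{-1/2}\operatorname{div}_{g}E$ and $|E|_{g}\ge|\overline{E}|_{\overline{g}}$ as in \cite{DisconziKhuri}, while the computation \eqref{23}, which uses only the fall-off \eqref{6}, gives $\overline{\mathcal{Q}}_{e}=\mathcal{Q}_{e}$. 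From \eqref{2}, \eqref{3}, \eqref{6}, \eqref{11}--\eqref{13} one checks that $|\overline{R}|+|\operatorname{div}_{\overline{g}}\overline{E}|=o(r^{-3})$ at null infinity, so Lemma \ref{lemma2} applies and yields $u_{0}=-2\overline{m}_{adm}-\mathcal{Q}_{e}\langle\psi_{0},e_{0}\cdot\psi_{0}\rangle$; together with $\overline{m}_{adm}=2m+u_{0}$ this gives $3\overline{m}_{adm}=2m-\mathcal{Q}_{e}\langle\psi_{0},e_{0}\cdot\psi_{0}\rangle$.

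Next I would run the spinorial Witten-type argument for the Einstein--Maxwell Dirac operator $\slashed{D}$ of \eqref{28} on the Jang surface. Integrating the associated Weitzenb\"{o}ck identity over $\overline{M}$---which specializes to the identity for $\Delta_{\overline{g}}|\psi|^{2}$ obtained in the proof of Lemma \ref{lemma2}---the boundary contribution at null infinity is controlled by \eqref{5.1} and by $\overline{\mathcal{Q}}_{e}$, while the cylindrical necks contribute nothing since $u=|\psi|^{2}$ and $\overline{\nabla}\psi$ decay exponentially there. Substituting the Jang scalar curvature formula \eqref{15.1}, the term involving $\operatorname{div}_{\overline{g}}(uq)$ becomes, after multiplication by $u$, a pure divergence whose boundary integrals vanish---at null infinity by the asymptotics, exactly as in the proof of Theorem \ref{thm1}, and on the necks since $u\equiv0$ there. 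The remaining bulk integrand is pointwise nonnegative by the charged dominant energy condition \eqref{10} (with $B=0$), combined with $|E|_{g}\ge|\overline{E}|_{\overline{g}}$, the above divergence identity for $\overline{E}$, and the elementary bounds $|w|_{\overline{g}}<1$, $|J|_{\overline{g}}\le|J|_{g}$, $|\langle\psi,e_{0}\cdot\psi\rangle|\le|\psi|^{2}$. The resulting inequality, as in \cite{GHHP, KhuriWeinstein}, is $\overline{m}_{adm}+\overline{\mathcal{Q}}_{e}\langle\psi_{0},e_{0}\cdot\psi_{0}\rangle\ge0$. Inserting $3\overline{m}_{adm}=2m-\mathcal{Q}_{e}\langle\psi_{0},e_{0}\cdot\psi_{0}\rangle$ gives $m\ge-\mathcal{Q}_{e}\langle\psi_{0},e_{0}\cdot\psi_{0}\rangle$, and choosing $\psi_{0}$ so that $\langle\psi_{0},e_{0}\cdot\psi_{0}\rangle$ has modulus $1$ and the sign opposite to $\mathcal{Q}_{e}$ yields $m\ge|\mathcal{Q}_{e}|=|\mathcal{Q}|$ since $B=0$, which is \eqref{24}.

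The hard part will be the nonnegativity step: showing that the charged dominant energy condition is preserved \emph{weakly in the time-symmetric sense} under the Dirac--Jang deformation, i.e. that after substituting \eqref{15.1} and completing the square against the Einstein--Maxwell connection \eqref{27} the bulk integrand is genuinely pointwise nonnegative. The numerical coefficients must conspire so that the term $\tfrac{1}{2}|\operatorname{div}_{g}E|$ present in \eqref{10} absorbs both the pure electric contributions and the sign-indefinite term $\operatorname{div}_{\overline{g}}\overline{E}\,\langle\psi,e_{0}\cdot\psi\rangle$. The accompanying boundary-term analysis is also delicate---the $o(r^{-3})$ decay required by Lemma \ref{lemma2}, the vanishing of the $\operatorname{div}_{\overline{g}}(uq)$ boundary integral at null infinity, and the exponential decay of $u$ along the cylindrical neck---but all of this is carried out in the asymptotically flat setting in \cite{KhuriWeinstein}, and the role of the asymptotics \eqref{11}--\eqref{13}, \eqref{5.1} is precisely that the same computations transfer verbatim. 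Finally, for the equality case in \eqref{24} I would observe that every inequality above must be saturated, in particular $\nabla\psi\equiv0$ for the connection \eqref{27}; the existence of a parallel spinor for the Einstein--Maxwell connection then forces $(\overline{M},\overline{g},\overline{E})$, and hence the original data, to arise from an embedding into the Majumdar--Papapetrou spacetime, as in \cite{GHHP, KhuriWeinstein}.
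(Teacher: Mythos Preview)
Your argument contains a genuine gap at the step where you claim the boundary integral of $\operatorname{div}_{\overline{g}}(uq)$ vanishes at null infinity ``exactly as in the proof of Theorem \ref{thm1}.'' In Theorem \ref{thm1} this integral vanishes only because the IMCF coupling \eqref{16} forces $m=\overline{m}_{adm}$ (via \eqref{3.19}, \eqref{18}); there is no analogous mechanism in the Dirac--Jang system. Lemma \ref{lemma3} in Appendix B computes this integral explicitly as $-4\pi(2u_{0}+\mathcal{A})=-8\pi(u_{0}+m)$ and states that it vanishes \emph{only} when $m=\overline{m}_{adm}$, which you have no right to assume here.

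The paper's proof does not discard this term. The Weitzenb\"{o}ck argument yields \eqref{5.3} with the boundary contribution retained on the left; Lemma \ref{lemma3} then evaluates it, giving $u_{0}+m\leq\overline{m}_{adm}+\mathcal{Q}_{e}\langle\psi_{0},e_{0}\cdot\psi_{0}\rangle$, and this combines with Lemma \ref{lemma1} alone to yield $m\geq|\mathcal{Q}_{e}|$. Lemma \ref{lemma2} is not invoked. Your detour through Lemma \ref{lemma2} happens, by an algebraic accident, to compensate numerically for the wrongly discarded boundary term (the identities of Lemmas \ref{lemma1} and \ref{lemma2} force $u_{0}+m$ and $\overline{m}_{adm}+\mathcal{Q}_{e}\langle\psi_{0},e_{0}\cdot\psi_{0}\rangle$ to be proportional), so your final inequality is correct; but the intermediate statement $\overline{m}_{adm}+\overline{\mathcal{Q}}_{e}\langle\psi_{0},e_{0}\cdot\psi_{0}\rangle\geq 0$ is not what the integrated identity actually gives, and your justification for it is invalid. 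A secondary issue: the decay $|\overline{R}|+|\operatorname{div}_{\overline{g}}\overline{E}|=o(r^{-3})$ needed for Lemma \ref{lemma2} does not follow from the stated asymptotics alone; see the Remark following the theorem, where an extra hypothesis on $\mu+|J|_{g}$ is discussed and the divergence term in \eqref{15.1} is noted to have only borderline fall-off.
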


\begin{proof}
We have that \eqref{21-0} is valid, except for the equality of masses. Thus, we may follow the arguments
in \cite{KhuriWeinstein} and use the Lichnerowicz-Weitzenb\"{o}ck formula to obtain
\begin{equation}\label{5.3}
-\frac{1}{2}\int_{\overline{S}_{\infty}}u\overline{g}(q,\nu_{\overline{g}})\leq \int_{\overline{M}}|\nabla\psi|^{2}+\frac{1}{4}\langle\psi,
\mathcal{R}\cdot\psi\rangle=4\pi(\overline{m}_{adm}+\overline{\mathcal{Q}}_{e}\langle\psi_{0},
e_{0}\cdot\psi_{0}\rangle),
\end{equation}
where $\mathcal{R}=\overline{R}-2|\overline{E}|_{\overline{g}}^{2}
-(\operatorname{div}_{\overline{g}}\overline{E})e_{0}$. Note that there is no interior boundary integral on the left-hand side of \eqref{5.3}; as in Lemma \ref{lemma2} this is due to the fact that the Jang surface has asymptotically cylindrical ends, which yields exponential decay of $u$ in these regions.
By Lemma \ref{lemma3} and $\overline{\mathcal{Q}}_{e}=\mathcal{Q}_{e}$, \eqref{5.3} becomes
\begin{equation}
u_{0}+m\leq\overline{m}_{adm}-|\mathcal{Q}_{e}|.
\end{equation}
This produces the desired result, since $\overline{m}_{adm}=2m+u_{0}$ according to Lemma \ref{lemma1}. Lastly, the case of equality is treated the same way as in \cite{KhuriWeinstein}.
\end{proof}

\begin{remark}
It should be pointed out that slices of the Majumdar-Papapetrou spacetime which agree with a $t=const$ slice near the horizon do not fall under the hypotheses of this theorem, as such initial data possess asymptotically cylindrical ends. This is related to the fact that in order to obtain asymptotically cylindrical ends in the Jang surface, when the initial data have an apparent horizon boundary, use of blow-up solutions of the generalized Jang equation is required. Thus, more general types of boundary behavior for the generalized Jang equation are needed, if this theorem is to allow
initial data with asymptotically cylindrical ends. Note also that the existence result Theorem \ref{thm2} holds in this setting, and provides further credence to the above procedure.

In order to satisfy the hypotheses of Lemma \ref{lemma2} concerning the fall-off of the scalar curvature and divergence of the electric field, it may be necessary to assume $\mu+|J|_{g}=O(r^{-3-\varepsilon})$ for $\varepsilon>0$. This extra condition together with the charged dominant energy condition \eqref{10} and the relation between divergences in \eqref{21-0} implies that $|\operatorname{div}_{\overline{g}}\overline{E}|=O(r^{-3-\varepsilon})$. Moreover, this condition also guarantees that $|\overline{R}|=O(r^{-3-\varepsilon})$ modulo a divergence term in light of the formula \eqref{15.1}. Due to the fall-off of the divergence term, $u_{0}$ in the expansion \eqref{5.1} may be a nonconstant function on the 2-sphere. This function can be determined from asymptotic expansions as follows. A computation shows that
\begin{equation}
\overline{R}=\frac{2\Delta_{\sigma}(\mathcal{B}-u_{0})}{r^{3}}+O(r^{-3-\varepsilon})
\end{equation}
where $\mathcal{B}$ is given in \eqref{12}, and so from \eqref{u equation}
\begin{equation}
\Delta_{\overline{g}}u=\frac{\Delta_{\sigma}(\mathcal{B}-u_{0})}{r^{3}}+O(r^{-3-\varepsilon}).
\end{equation}
It now follows from the expansion \eqref{5.1} that
\begin{equation}
\Delta_{\sigma}u_{0}=\Delta_{\sigma}(\mathcal{B}-u_{0}),
\end{equation}
and hence $2u_{0}=\mathcal{B}+const$. Lastly, it should be added that all the arguments of the current section go through with this version of $u_{0}$ after making suitable modifications. Ultimately, the validity of the expansions will only be determined once the coupled system is solved. Here we are simply pointing out that the canonical expansions are consistent with one another.
\end{remark}

\section{The Mass-Angular Momentum Inequality}
\label{sec6} \setcounter{equation}{0}
\setcounter{section}{6}

Let $(M,g,k)$ be an initial data set as described in Section \ref{sec2} with
\begin{equation} \label{31}
b_{rr} = \frac{b^{(5)}_{rr} (\theta, \phi)}{r^{5}} + O_{2}(r^{-6})
\end{equation}
where $b^{(5)}_{rr} (\theta, \phi)$ is a function on $S^{2}$.
In this section we assume further that $M$ is simply connected and that there are two ends, one denoted $M_{end}^{+}$ which is asymptotically hyperboloidal, and the other $M_{end}^{-}$ which is either asymptotically flat or asymptotically cylindrical.  It is also assumed that the initial data are axisymmetric; without this assumption the mass-angular momentum inequality is no longer generally valid \cite{HuangSchoenWang}. By this we mean that
there is a subgroup
isomorphic to $U(1)$ contained within the group of isometries of the Riemannian manifold $(M,g)$, and that all quantities associated with the initial data are invariant under the $U(1)$ action. In particular, if $\eta=\partial_{\phi}$ denotes the Killing field which generates the symmetry, then
\begin{equation}\label{32}
\mathfrak{L}_{\eta}g=\mathfrak{L}_{\eta}k=0,
\end{equation}
where $\mathfrak{L}_{\eta}$ is Lie differentiation.
In the current setting, simple connectivity and axial symmetry imply \cite{Chrusciel} that $M$ is diffeomorphic to $\mathbb{R}^{3}\setminus\{0\}$, where the origin represents a black hole and the neighboring geometry has the structure of an asymptotically flat or cylindrical end.
The mass of the asymptotically hyperboloidal end will be denoted by $m$ as usual, and the angular momentum is defined by
\begin{equation}\label{33}
\mathcal{J}=\frac{1}{8\pi}\int_{S}(k_{ij}-(Tr_{g} k)g_{ij})\nu^{i}_{g}\eta^{j},
\end{equation}
where $S$ is any surface enclosing the origin, with unit outer normal $\nu_{g}$. In order for this to be well-defined (independent of the choice of $S$), it is assumed that
\begin{equation}\label{34}
J_{i}\eta^{i}=0,
\end{equation}
which yields conservation of angular momentum \cite{DainKhuriWeinsteinYamada}; observe that \eqref{33} is equivalent to the limit definition which is also valid in the asymptotically flat setting
\begin{equation}\label{33.1}
\mathcal{J}=\frac{1}{8\pi}\int_{S_{\infty}}(k_{ij}-(Tr_{g} k)g_{ij})\nu^{i}_{g}\eta^{j}.
\end{equation}
The mass-angular momentum inequality states that
$m\geq\sqrt{|\mathcal{J}|}$.

We seek a deformation of the initial data $(M,g,k)\rightarrow(\overline{M},\overline{g},\overline{k})$ such that the manifolds are diffeomorphic $M\cong\overline{M}$, the geometry of the end $M_{end}^{-}$ is preserved while the end $M_{end}^{+}$ becomes asymptotically flat, and
\begin{equation}\label{36}
\overline{m}_{adm}=m,\text{ }\text{ }\text{ }\text{ }\text{ }\text{ }\overline{\mathcal{J}}=\mathcal{J},\text{ }\text{ }\text{ }\text{ }\text{ }\text{ }\overline{J}(\eta)=0,\text{ }\text{ }\text{ }\text{ }\text{ }\text{ }Tr_{\overline{g}}\overline{k}=0,\text{ }\text{ }\text{ }\text{ }\text{ }\text{ }
\overline{R}\geq|\overline{k}|_{\overline{g}}^{2}\text{ }\text{ }\text{ weakly}.
\end{equation}
Here $\overline{J}$ and $\overline{\mathcal{J}}$ are the momentum density and angular momentum of the new data; the later is well-defined as in \eqref{33} since \eqref{34} holds after the deformation. With intuition from other Jang-type reduction procedures, and the fact that the Kerr spacetime is stationary, we search for a graph inside a stationary 4-manifold
\begin{equation}\label{37}
\overline{M}=\{t=f(x)\}\subset(M\times\mathbb{R}, g+2Y_{i}dx^{i}dt+(u^{2}-|Y|_{\overline{g}}^{2}) dt^{2}),
\end{equation}
where all quantities are independent of $t$ and are axisymmetric
\begin{equation}\label{38}
\mathfrak{L}_{\eta}f=\mathfrak{L}_{\eta}u=\mathfrak{L}_{\eta}Y=0.
\end{equation}
Let $\overline{g}$ be the induced metric on the graph and $\overline{k}$ be the second fundamental form of the $t=0$ slice in the dual Lorentzian setting \cite{ChaKhuri1}, that is
\begin{equation}\label{39}
\overline{g}_{ij}=g_{ij}+f_{i}Y_{j}+f_{j}Y_{i}+(u^{2}-|Y|_{\overline{g}}^{2}) f_{i}f_{j},\text{ }\text{ }\text{ }\text{ }\text{ }\text{ }\text{ }
\overline{k}_{ij}=\frac{1}{2u}\left(\overline{\nabla}_{i}Y_{j}+\overline{\nabla}_{j}Y_{i}\right),
\end{equation}
where $\overline{\nabla}$ is the Levi-Civita connection with respect to $\overline{g}$. Moreover, the
structure of the Kerr spacetime suggests that we make the following simplifying ansatz that $Y$ has a single component
\begin{equation}\label{41}
\overline{Y}^{i}\partial_{i}:=\overline{g}^{ij}Y_{j}\partial_{i}=Y^{\phi}\partial_{\phi}.
\end{equation}
Thus, the deformation is defined by three functions $(u,Y^{\phi},f)$. In addition, \eqref{41} implies \cite{ChaKhuri1} that $\overline{g}$ is Riemannian and $Tr_{\overline{g}}\overline{k}=0$.

We will now show how to choose the three functions $(u,Y^{\phi},f)$. First, in order to have a well-defined angular momentum, and the  existence of a twist potential which is needed to apply techniques from the maximal case \cite{Dain0}, \eqref{34} must hold for the new data, or equivalently
\begin{equation}\label{42}
\operatorname{div}_{\overline{g}}\overline{k}(\eta)=0.
\end{equation}
As is shown in \cite{ChaKhuri1}, this is a linear elliptic equation for $Y^{\phi}$ (if $u$ is independent of $Y^{\phi}$), which has a unique bounded solution if the $r^{-3}$-fall-off rate is prescribed at $M_{end}^{+}$. Therefore we will require the expansion%\footnote{The %notation $h=o_{l}(r^{-a})$ asserts that
%$\lim_{r\rightarrow\infty}r^{a+j}\partial^{j}h=0$
%for all $j\leq l$, and
%$h=O_{l}(r^{-a})$ asserts that $r^{a+j}|\partial^{j}h|\leq C$ for all %$j\leq l$.}
\begin{equation}\label{43}
Y^{\phi}=-\frac{2\mathcal{J}}{r^{3}}+o_{2}(r^{-\frac{7}{2}})\text{ }\text{ }\text{ as }\text{ }\text{ }r\rightarrow\infty,
\end{equation}
which also ensures that $\overline{\mathcal{J}}=\mathcal{J}$. Requiring the solution to be bounded implies the following asymptotics at the other end
\begin{equation}\label{00133}
Y^{\phi}=\mathcal{Y}+O_{1}(r^{5})\text{ }\text{ }\text{ in asymptotically flat }\text{ }\text{ }M_{end}^{-},
\end{equation}
\begin{equation}\label{00134}
Y^{\phi}=\mathcal{Y}+O_{1}(r)\text{ }\text{ }\text{ in asymptotically cylindrical }\text{ }\text{ }M_{end}^{-},
\end{equation}
where $\mathcal{Y}$ is a constant determined by the data. Here $r=\sqrt{\rho^{2}+z^{2}}$ is the `Euclidean distance' from the origin in a Brill (or cylindrical) coordinate system $(\rho,\phi,z)$ in this end \cite{Chrusciel}.

Next we choose $f$ to satisfy the Jang-type equation
\begin{equation}\label{48.1}
g^{ij}\left(\frac{u\nabla_{ij}f+u_{i}f_{j}+u_{j}f_{i}}{\sqrt{1+u^{2}|\nabla f|_{g}^{2}}}-k_{ij}\right)=0\text{ }\text{ }\text{ }\text{ }\text{ }\Leftrightarrow\text{ }\text{ }\text{ }\text{ }\text{ }
\operatorname{div}_{g}(u^{2}\nabla f)=u(Tr_{g}k)\sqrt{1+u^{2}|\nabla f|_{g}^{2}}.
\end{equation}
As with the deformations of previous sections, the purpose of this equation is to impart positivity properties to the
scalar curvature. In particular, it is shown in \cite{ChaKhuri1} that
\begin{equation}\label{49}
\overline{R}-|\overline{k}|_{\overline{g}}^{2}= 2(\mu-J(v))+|k-\pi|_{g}^{2}+2u^{-1}\operatorname{div}_{\overline{g}}(uQ),
\end{equation}
where
\begin{equation}\label{45}
\pi_{ij}=\frac{u\nabla_{ij}f+u_{i}f_{j}+u_{j}f_{i}+\frac{1}{2}(g_{i\phi}Y^{\phi}_{,j}+g_{j\phi}Y^{\phi}_{,i})}{\sqrt{1+u^{2}|\nabla f|_{g}^{2}}}
\end{equation}
is the second fundamental form of the graph in the Lorentzian setting,
\begin{equation}\label{46}
v^{i}=\frac{uf^{i}}{\sqrt{1+u^{2}|\nabla f|_{g}^{2}}},\text{ }\text{ }\text{ }\text{ }\text{ }\text{ }\text{ }w^{i}=\frac{uf^{i}+u^{-1}\overline{Y}^{i}}{\sqrt{1+u^{2}|\nabla f|_{g}^{2}}},
\end{equation}
and
\begin{equation}\label{47}
Q_{i}=\overline{Y}^{j}\overline{\nabla}_{ij}f-u\overline{g}^{jl}f_{l}\overline{k}_{ij}+w^{j}(k-\pi)_{ij}+uf_{i}w^{l}w^{j}(k-\pi)_{lj}\sqrt{1+u^{2}|\nabla f|_{g}^{2}}.
\end{equation}
If the dominant energy condition is valid, it follows that $\overline{R}\geq|\overline{k}|_{\overline{g}}^{2}$ weakly in the sense that this inequality holds after multiplying by $u$ and integrating by parts (it will be shown in Appendix B that the boundary terms vanish).

Before explaining how to choose $u$, we will record the asymptotics which allow an appropriate solution of the equation \eqref{48.1}, namely
\begin{equation}\label{52}
u= 1 + \frac{\mathcal{C}_{1}}{r} + \frac{\mathcal{C}_{2}(\theta, \phi)}{r^{2}} + \frac{\mathcal{C}_{3}(\theta, \phi)}{r^{3}} + O_{2}(r^{-4})\text{ }\text{ as }\text{ }r\rightarrow\infty\text{ }\text{ in }\text{ }\text{ }M_{end}^{+},
\end{equation}
where $\mathcal{C}_{1} = -\overline{m}_{adm}$ and $\mathcal{C}_{2}, \mathcal{C}_{3}$ are functions on the sphere $S^{2}$. At the other end $M_{end}^{-}$ the asymptotics are required to be
\begin{equation}\label{53}
u=r^{2}+o_{1}(r^{\frac{5}{2}})\text{ }\text{ }\text{ as }\text{ }\text{ }r\rightarrow 0\text{ }\text{ }\text{ in asymptotically flat}\text{ }\text{ }M_{end}^{-},
\end{equation}
\begin{equation}\label{54}
u=r+o_{1}(r^{\frac{3}{2}})\text{ }\text{ }\text{ as }\text{ }\text{ }r\rightarrow 0\text{ }\text{ }\text{ in asymptotically cylindrical}\text{ }\text{ }M_{end}^{-}.
\end{equation}

In order to facilitate the construction of sub and supersolutions for  equation \eqref{48.1}, asymptotics for $f$ will be imposed which are more detailed than those in \eqref{11}-\eqref{13}. In particular
\begin{equation}\label{013}
f(r, \theta, \phi)=\sqrt{1+r^{2}} + \mathcal{A}\log{r} + \mathcal{B}(\theta, \phi)+\frac{\mathcal{D}_{1}(\theta,\phi)}{r} + \frac{\mathcal{D}_{2}(\theta, \phi)}{r^{2}} + \widetilde{f}(r, \theta, \phi)\text{ }\text{ as }\text{ }r\rightarrow\infty\text{ }\text{ in }\text{ }M_{end}^{+},
\end{equation}
where $\mathcal{A}$, $\mathcal{B}$ are given in \eqref{12} and $\mathcal{D}_{1}$, $\mathcal{D}_{2}$ functions on $S^{2}$ such that
\begin{align} \label{014-1}
\begin{split}
\mathcal{D}_{1} (\theta, \phi) =& \mathcal{A} \mathcal{C}_{1} - \mathcal{C}_{1}^{2} + 2 \mathcal{C}_{2}  \\ %= -3\overline{m}^{2} + 2C_{2}%
6\mathcal{D}_{2} (\theta, \phi) =& b^{(5)}_{rr}-2\mathbf{m}^{r} + \frac{1}{8\pi}\int_{S^{2}}\left[Tr_{\sigma}( \mathbf{m}^{g} + 2\mathbf{m}^{k})+2\mathbf{m}^{r}\right] \\
&+ 6\mathcal{C}_{3}-2\mathcal{C}_{1}\mathcal{D}_{1} + 4\mathcal{A} \mathcal{C}_{2} -2\mathcal{A} \mathcal{C}_{1}^{2} -6\mathcal{C}_{1}\mathcal{C}_{2} +2\mathcal{C}_{1}^{3}
%=\frac{1}{6} \left( \textbf{b} + \frac{3Tr_{\sigma} \textbf{m}}{2} + 6C_{3} +18 \overline{m} C_{2} -12 \overline{m}^{3} \right)%
\end{split}
\end{align}
and
\begin{equation} \label{014}
\widetilde{f}=O_{2}(r^{-3}).
\end{equation}
On the other end the following asymptotics will be imposed
\begin{equation}\label{50}
r^{-1}|\nabla f|_{g}+r^{-2}|\nabla^{2} f|_{g}\leq C\text{ }\text{ }\text{ in asymptotically flat }\text{ }\text{ }M_{end}^{-},
\end{equation}
\begin{equation}\label{51}
|\nabla f|_{g}+|\nabla^{2} f|_{g}\leq Cr^{\frac{1}{2}}\text{ }\text{ }\text{ in asymptotically cylindrical }\text{ }\text{ }M_{end}^{-}.
\end{equation}

\begin{lemma}\label{lemma31}
If $M_{end}^{+}$ is asymptotically hyperboloidal and \eqref{43}, \eqref{52}, \eqref{013}, and \eqref{014} are satisfied then the data set $(\overline{M}_{end}^{\!\text{ }+},\overline{g},\overline{k})$ is asymptotically flat. Furthermore, the mass is given by $\overline{m}_{adm}=2m+\mathcal{C}_{1}$.
\end{lemma}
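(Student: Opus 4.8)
The plan is to mimic the proof of Lemma \ref{lemma1}, with the additional terms coming from the twist one-form $Y$ tracked carefully. First I would observe that, exactly as in \eqref{3.16}, the metric $\overline{g}$ in the $y=(r,\theta,\phi)$ coordinates is obtained from the tensor $g_{ab}+f_aY_b+f_bY_a+(u^2-|Y|_{\overline{g}}^2)f_af_b$ by the usual spherical-to-Cartesian change of frame, so the asymptotic flatness claim reduces to verifying the analogue of \eqref{3.17}. Using the ansatz \eqref{41}, only $Y_\phi$ is nonzero and by \eqref{43} it is $O(r^{-3})$ in the coordinate frame, hence $O(r^{-3})\cdot r = O(r^{-2})$ as a component of $Y$ relative to $\partial_r,\partial_\theta,\partial_\phi$; together with \eqref{4.1}-type estimates $f_r\sim 1$, $f_\alpha=O(1)$ (from \eqref{013}), and $|Y|_{\overline g}^2=g^{\phi\phi}(Y_\phi)^2=O(r^{-8})$, the cross terms $f_iY_j$ and the $|Y|_{\overline g}^2 f_if_j$ terms are lower order. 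One then gets $g_{rr}+\dots=1+O(r^{-1})$, $g_{r\alpha}+\dots=O(1)$, $g_{\alpha\beta}+\dots=r^2\sigma_{\alpha\beta}+O(1)$, and the conclusion $\overline{g}_{ij}=\delta_{ij}+O(r^{-1})$ follows as in \eqref{3.18}, with the derivative bounds handled identically. This establishes that $(\overline M^{+}_{end},\overline g,\overline k)$ satisfies \eqref{9}, i.e. is asymptotically flat.

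For the mass computation I would again write $\overline{m}_{adm}=\frac{1}{16\pi}\int_{S_\infty}(\operatorname{div}_\delta\overline{g}-d\,Tr_\delta\overline{g})(\partial_r)$ in spherical coordinates using the Christoffel symbols $\accentset{\circ}{\Gamma}$ listed in Lemma \ref{lemma1}. The metric now splits as $\overline g = g + (u^2 df^2) + (f\otimes Y + Y\otimes f) - |Y|_{\overline g}^2 df^2$. The first two pieces contribute exactly the terms already computed in Lemma \ref{lemma1}, whose sum gave $u_0+\mathcal A = 2m+u_0$ there — except that here the warping expansion is \eqref{52} with $\mathcal C_1 = -\overline m_{adm}$ rather than $u_0$, and $f$ carries the extra expansion terms $\mathcal D_1/r + \mathcal D_2/r^2$ from \eqref{013}. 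I would show that the $\mathcal D_1,\mathcal D_2,\mathcal C_2,\mathcal C_3,\mathbf m^r, b^{(5)}_{rr}$ contributions to the $r^{-2}$ coefficient of the integrand organize into a pure Laplacian $\Delta_\sigma(\text{something})$ plus terms that integrate to zero over $S^2$ — this is precisely what the algebraic identities \eqref{014-1} are designed to guarantee. The remaining genuinely new contributions are those involving $Y$: since $Y_\phi = g_{\phi\phi}Y^\phi \sim r^2\sigma_{\phi\phi}\cdot(-2\mathcal J/r^3) = O(r^{-1})$, the term $f\otimes Y$ has components of size $f_r Y_\phi = O(r^{-1})$, and differentiating and taking divergence against $\partial_r$ produces $O(r^{-2})$ terms; however these are proportional to $\mathcal J$ and, crucially, $Y_\phi$ has no $r^{-1}$-angular piece that survives integration (the leading term is $\propto \sigma_{\phi\phi}/r$, and $\sigma^{\alpha\beta}$-tracing against a single $\phi$-index with the $o_2(r^{-7/2})$ error gives a contribution that is $o(r^{-2})$ after the trace, or integrates to zero by axisymmetry). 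So the $Y$ terms and the $|Y|_{\overline g}^2$ term do not contribute to the mass.

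Assembling these, $\int_{S_\infty}$ of the integrand equals $\frac{1}{16\pi}\int_{S^2}[\Delta_\sigma(\cdots) + 4\mathcal C_1 + 4\mathcal A]r^{-2}\cdot r^2\,d\sigma$ up to $o(1)$, the Laplacian integrates to zero, and one gets $\overline m_{adm} = \mathcal C_1 + \mathcal A$. Since $\mathcal A = 2m$ by \eqref{12} and $\mathcal C_1 = -\overline m_{adm}$ is being treated as a prescribed constant in \eqref{52}, this reads $\overline m_{adm} = 2m + \mathcal C_1$, which is the claim. (One should note that the relation $\mathcal C_1 = -\overline m_{adm}$ makes this an a posteriori consistency statement, exactly as the $u_0 = -\overline m_{adm}$ relation was used in \eqref{3.20}; the lemma is simply asserting $\overline m_{adm} - \mathcal C_1 = 2m$.) The main obstacle I anticipate is the bookkeeping in the $r^{-2}$ coefficient of $(\operatorname{div}_\delta\overline g - d\,Tr_\delta\overline g)(\partial_r)$: one must expand $u$ to order $r^{-3}$ and $f$ to order $r^{-2}$, keep the $\accentset{\circ}{\nabla}_{\alpha\beta}f$ cross terms (as in the long display in Lemma \ref{lemma1} that produced the $r^{-2}\Delta_\sigma\mathcal B + 2r^{-1} + 4(u_0+\mathcal A)r^{-2}$ line), and verify that the non-divergence, non-constant residue vanishes precisely because of \eqref{014-1} and the defining equation \eqref{12} for $\mathcal B$ — checking that the definitions of $\mathcal D_1$ and $\mathcal D_2$ are exactly the ones that make this work is the delicate part, and the $Y$-terms must be shown to be harmlessly higher order throughout.
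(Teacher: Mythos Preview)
Your overall strategy---reduce to Lemma~\ref{lemma1} and show that the $Y$-contributions are lower order---is exactly what the paper does, and your estimates on $Y_\phi=g_{\phi\phi}Y^\phi=O(r^{-1})$, $|Y|^2_{\overline g}=O(r^{-4})$, and the cross terms $f_iY_j$ are correct. The paper's proof is accordingly brief: from \eqref{0025}--\eqref{0026} one has $\overline g_{ij}=g_{ij}+u^2 f_if_j+O(r^{-2})$, and then Lemma~\ref{lemma1} applies verbatim with $u_0$ replaced by $\mathcal{C}_1$.

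Where you go astray is in the mass computation: the identities \eqref{014-1} play \emph{no role} in this lemma. The refined expansions \eqref{52} and \eqref{013}--\eqref{014} are simply special cases of \eqref{15.3} and \eqref{11}--\eqref{13} (with $u_0=\mathcal C_1$ and any $\varepsilon>0$), so the entire calculation in the proof of Lemma~\ref{lemma1} carries over unchanged. Concretely, the extra terms $\mathcal{C}_2/r^2$, $\mathcal{C}_3/r^3$ in $u$ and $\mathcal{D}_1/r$, $\mathcal{D}_2/r^2$ in $f$ contribute only at order $O(r^{-3})$ to the integrand $(\operatorname{div}_\delta\overline g - d\,Tr_\delta\overline g)(\partial_r)$ and are already absorbed in the $O(r^{-3+\varepsilon})$ errors there; there is nothing to ``organize into a Laplacian via \eqref{014-1}.'' (For instance, $\partial_r(\mathcal D_1/r)=O(r^{-2})$ enters $f_r$ only where $f_r$ is multiplied by something of size $r^{-2}$, and $\Delta_\sigma\mathcal D_1/r$ enters the angular Hessian term at order $r^{-3}$.) The specific values of $\mathcal D_1,\mathcal D_2$ in \eqref{014-1} are chosen for entirely different reasons: they make the inhomogeneity \eqref{019} small enough to construct barriers in Theorem~\ref{thm6}, and they force $(k-\pi)_{rr}=o(r^{-5})$ so that the boundary integral in Lemma~\ref{lemma4} vanishes. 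Note also that $b^{(5)}_{rr}$ is a coefficient in the expansion \eqref{31} of $k$, not of $g$, $u$, or $f$, and hence never enters $\overline g$ or its ADM mass.
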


\begin{proof}
Asymptotic flatness follows from Lemma \ref{lemma1} and the fact that $Y^{\phi}=O(r^{-3})$. More precisely, the computations from Lemma 2.1 of \cite{ChaKhuri1} yield
\begin{equation}\label{0025}
Y_{\phi}=g_{\phi\phi}Y^{\phi},\text{ }\text{ }\text{ }\text{ } Y_{i}=\overline{g}_{ij}Y^{j}=\overline{g}_{i\phi}Y^{\phi}
=(g_{i\phi}+f_{i}Y_{\phi})Y^{\phi}
=(g_{i\phi}+f_{i}g_{\phi\phi}Y^{\phi})Y^{\phi},
\end{equation}
and
\begin{equation}\label{0026}
\overline{g}_{ij}=g_{ij}+(f_{i}g_{j\phi}+f_{j}g_{i\phi})Y^{\phi}
+(u^{2}+g_{\phi\phi}(Y^{\phi})^{2})f_{i}f_{j},
\end{equation}
so that
\begin{equation}\label{0001}
\overline{g}_{ij}
=g_{ij}+u^{2}f_{i}f_{j}+O(r^{-2})
=\delta_{ij}+O(r^{-1}),\text{ }\text{ }\text{ }\text{ }\text{ }\text{ }
\overline{k}_{ij}=O(r^{-2}).
\end{equation}
Estimates on the derivatives of $\overline{g}_{ij}$ may be obtained in a straightforward way.
%, and therefore \eqref{9} holds.
Moreover, the formula for the mass follows from the expansion \eqref{52} and a similar computation to that in the proof of Lemma \ref{lemma1}.  %Note : To be precise, \overline{m} = A + C_{1}%
\end{proof}

We now show how to choose $u$. In light of Lemma \ref{lemma31}, and the fact that the deformed
data are simply connected and axially symmetric, the results of \cite{Chrusciel} (see also \cite{Sokolowsky}) apply to yield a global Brill coordinate system $(\overline{\rho},\phi,\overline{z})$ such that
\begin{equation}\label{55}
\overline{g}=e^{-2\overline{U}+2\overline{\gamma}}
(d\overline{\rho}^{2}+d\overline{z}^{2})
+\overline{\rho}^{2}e^{-2\overline{U}}
(d\phi+A_{\overline{\rho}}d\overline{\rho}
+A_{\overline{z}}d\overline{z})^{2}.
\end{equation}
Note that for this one may need to assume some additional regularity of the initial data. The asymptotics of the asymptotically flat end $\overline{M}_{end}^{\!\text{ }+}$ are given by
\begin{equation}\label{0017}
\overline{U}=O_{1}(r^{-1}),\text{ }\text{ }\text{ }\text{ }\overline{\gamma}=o_{1}(r^{-1}),\text{ }\text{ }\text{
}\text{ }A_{\overline{\rho}}=\overline{\rho} o_{1}(r^{-\frac{5}{2}}),\text{ }\text{ }\text{ }\text{ }A_{\overline{z}}=o_{1}(r^{-\frac{3}{2}}),
\end{equation}
and the asymptotics for the other end $\overline{M}_{end}^{\!\text{ }-}$ depend on whether it is asymptotically flat or asymptotically cylindrical in the following way
\begin{equation}\label{0018}
\overline{U}=2\log r+o_{1}(r^{\frac{1}{2}}),\text{ }\text{ }\text{ }\text{ }\overline{\gamma}=o_{1}(r^{\frac{1}{2}}),\text{
}\text{ }\text{ }\text{ }A_{\overline{\rho}}=\overline{\rho} o_{1}(r^{\frac{1}{2}}),\text{ }\text{ }\text{ }\text{
}A_{\overline{z}}=o_{1}(r^{\frac{3}{2}}),
\end{equation}
\begin{equation}\label{0019}
\overline{U}=\log r+o_{1}(r^{\frac{1}{2}}),\text{ }\text{ }\text{ }\text{ }\overline{\gamma}=o_{1}(r^{\frac{1}{2}}),\text{
}\text{ }\text{ }\text{ }A_{\overline{\rho}}=\overline{\rho} o_{1}(r^{\frac{1}{2}}),\text{ }\text{ }\text{ }\text{
}A_{\overline{z}}=o_{1}(r^{\frac{3}{2}}),
\end{equation}
respectively. By following the arguments in \cite{ChaKhuri1} we find
\begin{equation}\label{56}
\overline{m}_{adm}-\mathcal{M}(\overline{U},\overline{\omega})
\geq\frac{1}{8\pi}\int_{\overline{M}}\frac{e^{\overline{U}}}{u}
\operatorname{div}_{\overline{g}}(uQ),
\end{equation}
where $\overline{\omega}$ is the so-called twist potential function and
\begin{equation}\label{57}
\mathcal{M}(\overline{U},\overline{\omega})
=\frac{1}{32\pi}\int_{\mathbb{R}^{3}}4|\partial \overline{U}|^{2}+\frac{e^{4\overline{U}}}{\overline{\rho}^{4}}
|\partial\overline{\omega}|^{2}.
\end{equation}
This suggests that we choose
\begin{equation}\label{58}
u=e^{\overline{U}}.
\end{equation}
Moreover, if $u$ has an expansion of the form \eqref{52} then it must hold that $\mathcal{C}_{1} = -\overline{m}_{adm}$ as desired, since with the asymptotics \eqref{0017} the mass is given by (see \cite{Chrusciel})
\begin{equation} \label{58-2}
\overline{m}_{adm} = \frac{1}{4\pi}  \int_{S_{\infty}} \partial_{r}\overline{U}.
\end{equation}

\begin{theorem}\label{thm7}
Let $(M,g,k)$ be a smooth, simply connected, axially symmetric initial data set satisfying the dominant energy condition $\mu\geq|J|$ and condition \eqref{34}, and with two ends, one designated asymptotically hyperboloidal and the other either asymptotically flat or asymptotically cylindrical. If the system of equations \eqref{42}, \eqref{48.1}, \eqref{58} admits a smooth solution $(u,Y^{\phi},f)$ satisfying the asymptotics described above, then
\begin{equation}\label{59}
m\geq\sqrt{|\mathcal{J}|}
\end{equation}
and if equality is achieved then the initial data arise from an embedding into the extreme Kerr spacetime.
\end{theorem}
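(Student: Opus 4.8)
The plan is to run, in the stationary setting of Section \ref{sec6}, the same reduction scheme used for Theorems \ref{thm1}, \ref{thm3}, and \ref{thm4}. Suppose $(u,Y^{\phi},f)$ is a smooth solution of the system \eqref{42}, \eqref{48.1}, \eqref{58}, with the asymptotics \eqref{43}, \eqref{52}, \eqref{013}, \eqref{014} at $M_{end}^{+}$ and \eqref{00133}, \eqref{00134}, \eqref{53}, \eqref{54}, \eqref{50}, \eqref{51} at $M_{end}^{-}$, and form the deformed data $(\overline{M},\overline{g},\overline{k})$ via \eqref{37}--\eqref{41}. The single-component ansatz \eqref{41} gives that $\overline{g}$ is Riemannian and $Tr_{\overline{g}}\overline{k}=0$, while \eqref{42} gives $\overline{J}(\eta)=0$, so the angular momentum of the deformed data is well defined, and the fall-off \eqref{43} yields $\overline{\mathcal{J}}=\mathcal{J}$. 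By Lemma \ref{lemma31} the end $\overline{M}_{end}^{+}$ is asymptotically flat with $\overline{m}_{adm}=2m+\mathcal{C}_{1}$, while $\overline{M}_{end}^{-}$ keeps its asymptotically flat or cylindrical structure. Since $u=e^{\overline{U}}$ has the expansion \eqref{52}, the mass formula \eqref{58-2} forces $\mathcal{C}_{1}=-\overline{m}_{adm}$, which combined with $\overline{m}_{adm}=2m+\mathcal{C}_{1}$ gives $\overline{m}_{adm}=m$. Thus all of \eqref{36} is in force.

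Next I would establish the key inequality $\overline{m}_{adm}\geq\mathcal{M}(\overline{U},\overline{\omega})$. From \eqref{49} and the dominant energy condition $\mu\geq|J|$ one obtains $\overline{R}\geq|\overline{k}|_{\overline{g}}^{2}$ weakly, i.e. after multiplication by $u$ and integration by parts, the boundary terms at $\overline{S}_{\infty}$ and at the inner end being shown to vanish in Appendix B; at $M_{end}^{+}$ this uses the refined expansions \eqref{013}, \eqref{014-1} for $f$ and \eqref{52} for $u$, and at $M_{end}^{-}$ it uses the behaviour \eqref{53}, \eqref{54} of $u$. Carrying out the computation of \cite{ChaKhuri1} that leads to \eqref{56} and inserting $u=e^{\overline{U}}$, the right-hand side of \eqref{56} reduces to $\tfrac{1}{8\pi}\int_{\overline{M}}\operatorname{div}_{\overline{g}}(uQ)$, a pure boundary term which vanishes by the same Appendix B estimates, so $\overline{m}_{adm}\geq\mathcal{M}(\overline{U},\overline{\omega})$. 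Here $\overline{\omega}$ is the twist potential of the deformed data; it exists because, by the structure results of \cite{Chrusciel}, the deformed data are smooth, simply connected, axisymmetric, admit the global Brill chart \eqref{55}, and satisfy $\overline{J}(\eta)=0$.

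It then remains to invoke the known maximal, axisymmetric, asymptotically flat case. The deformed data satisfy all the hypotheses of the lower bound for the Brill mass functional established in \cite{Dain0}, \cite{ChruscielLiWeinstein}, \cite{SchoenZhou}, which gives $\mathcal{M}(\overline{U},\overline{\omega})\geq\sqrt{|\overline{\mathcal{J}}|}$. Chaining the inequalities gives $m=\overline{m}_{adm}\geq\mathcal{M}(\overline{U},\overline{\omega})\geq\sqrt{|\overline{\mathcal{J}}|}=\sqrt{|\mathcal{J}|}$, which is \eqref{59}. If equality holds then every inequality above is an equality; in particular $\mathcal{M}(\overline{U},\overline{\omega})=\sqrt{|\overline{\mathcal{J}}|}$, so the rigidity part of the maximal case identifies $(\overline{U},\overline{\omega})$ with the harmonic map data of extreme Kerr, while \eqref{49} forces $\mu-J(v)=0$, $k=\pi$, and $Q\equiv0$, so that unwinding the deformation exactly as in \cite{ChaKhuri1} shows the original $(M,g,k)$ arises from an embedding into the extreme Kerr spacetime.

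The main obstacle is verifying that the flux integral $\int_{\overline{M}}\operatorname{div}_{\overline{g}}(uQ)$ vanishes at the asymptotically hyperboloidal end. Because $f$ contains the term $\mathcal{A}\log r$ with $\mathcal{A}=2m$ and the leading parts of $u$, $Y^{\phi}$, and $f$ enter $Q$ through \eqref{45}--\eqref{47} at borderline decay rates, one must push the expansions to the order dictated by \eqref{52}, \eqref{013}, and the compatibility relations \eqref{014-1} in order to see the would-be $O(r^{-2})$ flux density cancel upon integration over $S^{2}$; this is precisely the delicate computation deferred to Appendix B. A secondary technical point is that obtaining the global Brill chart \eqref{55} for the deformed data, and matching its asymptotics \eqref{0017}--\eqref{0019} to those of $(M,g,k)$, may require the mild additional regularity of the initial data alluded to after \eqref{55}.
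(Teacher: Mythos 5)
Your proposal is correct and follows essentially the same route as the paper: deform to asymptotically flat maximal data via the stationary Jang-type system, verify the preservation of mass and angular momentum (using Lemma~\ref{lemma31} together with~\eqref{58-2}), apply~\eqref{56} with $u=e^{\overline{U}}$ and the vanishing of the flux integral to obtain $\overline{m}_{adm}\geq\mathcal{M}(\overline{U},\overline{\omega})$, and then invoke Dain/Schoen--Zhou and the rigidity argument of~\cite{ChaKhuri1}. The only small inaccuracy is that the vanishing of the inner boundary term is attributed in the paper to the arguments of~\cite{ChaKhuri1} (using $\overline{\mathcal{J}}=\mathcal{J}$) rather than to Appendix~B, which handles only the boundary term at the asymptotically hyperboloidal end.
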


\begin{proof}
In light of \eqref{58}, an application of the divergence theorem yields boundary terms on the right-hand side of \eqref{56}. It is shown in \cite{ChaKhuri1} that the inner boundary integral vanishes as a consequence of $\overline{\mathcal{J}}=\mathcal{J}$, and in Appendix B it is shown that the outer (at spatial infinity) boundary integral vanishes. Therefore  $\overline{m}_{adm}\geq\mathcal{M}(\overline{U},\overline{\omega})$.
Furthermore, Dain \cite{Dain0} as well as Schoen and Zhou \cite{SchoenZhou} have shown that
$\mathcal{M}(\overline{U},\overline{\omega})\geq\sqrt{|\overline{\mathcal{J}|}}$. Thus, according to \eqref{36} the desired inequality holds.
Also, the case of equality may be treated directly from the arguments in \cite{ChaKhuri1}.
\end{proof}

In order to lend further credence to the above procedure we show that solutions to equation \eqref{48.1}
exist with the desired asymptotics. Note that the equation \eqref{42}, for $Y^{\phi}$, has already been shown to be uniquely solvable with the desired asymptotics \eqref{43} in \cite{ChaKhuri1}, given $u$ and $f$. %As with Theorem \ref{thm2}, we assume here that $(g,k)$ takes the form %\eqref{1} with $a$ and $b$ as in \eqref{2}-\eqref{3} additionally satisfying $a_{rr} = %a_{r\alpha} = 0$ and $b_{rr} = O(r^{-5})$; in this case $\mathbf{m}^{r}=0$.

\begin{theorem}\label{thm6}
Given a smooth positive function $u$ satisfying \eqref{52}-\eqref{54}, and smooth function $Y^{\phi}$ satisfying \eqref{43}-\eqref{00134}, there exists a smooth  solution $f$ to equation \eqref{48.1} satisfying \eqref{013}, \eqref{014-1}, \eqref{50}, and \eqref{51}.
\end{theorem}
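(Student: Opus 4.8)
The plan is to mimic the barrier construction of Theorem~\ref{thm2} — and hence of \cite{Sakovich,SakovichThesis} — but pushed to one higher order of precision in order to capture the extra terms $\mathcal{D}_1/r$ and $\mathcal{D}_2/r^2$ demanded by \eqref{013}, \eqref{014-1}. First I would treat the two ends separately. At the asymptotically flat or asymptotically cylindrical end $M_{end}^-$, the equation \eqref{48.1} in the form $\operatorname{div}_g(u^2\nabla f)=u(Tr_g k)\sqrt{1+u^2|\nabla f|_g^2}$ is, after using the prescribed blow-up rates \eqref{53}, \eqref{54} for $u$ and the fall-off \eqref{31} for $k$, exactly of the type analyzed in \cite{ChaKhuri1} for the maximal mass-angular-momentum reduction; the sub/supersolutions constructed there (or their minor variants) give \eqref{50}, \eqref{51}. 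So the genuinely new work is at the hyperboloidal end $M_{end}^+$.

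There I would look for barriers of the form $\overline{f}_\pm(r,\theta,\phi)=\zeta_\pm(r)+\mathcal{B}(\theta,\phi)+\mathcal{D}_1(\theta,\phi)r^{-1}+\mathcal{D}_2(\theta,\phi)r^{-2}$, with $\mathcal{B}$ as in \eqref{12} and $\mathcal{D}_1,\mathcal{D}_2$ forced to be the specific functions in \eqref{014-1}. Substituting this ansatz into $\mathcal{F}(\overline{f})$ (the left side of \eqref{48.1}, or equivalently of \eqref{15} with $Y=0$ since the Jang-type operator here is the $Y$-free one) and expanding in powers of $r^{-1}$, the leading terms reproduce the ODE for $p$ already obtained in the proof of Theorem~\ref{thm2}; the point of choosing $\mathcal{D}_1,\mathcal{D}_2$ as in \eqref{014-1} is precisely to kill the $O(r^{-1})$ and $O(r^{-2})$ obstructions coming from the angular Laplacian terms $\Delta_\sigma\mathcal{B}$, $\Delta_\sigma\mathcal{D}_1$ and from the $a_{rr}$, $b_{rr}$ contributions (this is where $b^{(5)}_{rr}$ and $\mathbf{m}^r$ enter). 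After that cancellation one is left, as in \eqref{3.20}–\eqref{3.21}, with an ODE for the profile $p(r)=\zeta'\sqrt{1+r^2}/\sqrt{1+(1+r^2)\zeta'^2}$ whose right-hand side is a sum of terms each of size $O(r^{-2})$ times a factor that vanishes on the target asymptotics; defining $p_\pm$ by the same $\pm C_i$-perturbed boundary value problems as in Theorem~\ref{thm2} (now with enough constants to dominate the new error terms), the analysis of \cite{Sakovich,SakovichThesis} applies verbatim and yields $\zeta_\pm$, hence barriers $\overline{f}_\pm$ with the expansion \eqref{013}, \eqref{014}.

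With ordered sub- and supersolutions in hand at both ends, I would invoke the existence and interior-estimates machinery of \cite{HanKhuri1} (and the capillarity-regularization scheme used there and in \cite{Sakovich}): solve a regularized equation on an exhaustion by compact regions, use the barriers for uniform $C^0$ bounds and the geometric structure for local gradient and higher estimates, and pass to the limit to obtain a smooth global $f$ trapped between $\overline{f}_-$ and $\overline{f}_+$. Since the barriers already carry the expansion \eqref{013}, a squeeze argument — together with the rescaling/Gaussian-normal-coordinate argument of Remark~\ref{remark001}, which upgrades a crude between-the-barriers bound to the sharp remainder estimate $\widetilde f=O_2(r^{-3})$ in \eqref{014} — gives the stated asymptotics for $f$; at $M_{end}^-$ the bounds \eqref{50}, \eqref{51} follow directly from the barriers there.

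The main obstacle is the higher-order matching at $M_{end}^+$: one must verify that the algebraic identities \eqref{014-1} are exactly the solvability (Fredholm) conditions that make the $O(r^{-1})$ and $O(r^{-2})$ coefficients in the expansion of $\mathcal{F}(\overline{f})$ vanish, so that the residual ODE for $p$ has the same favorable structure as in \cite{Sakovich}. This is a somewhat delicate computation — it is sensitive to the precise fall-off \eqref{2}, \eqref{3}, \eqref{31}, to the expansion \eqref{52} of $u$ (through $\mathcal{C}_1,\mathcal{C}_2,\mathcal{C}_3$), and to cross terms between $\log r$, $\mathcal{B}$, and the $r^{-1}$, $r^{-2}$ corrections — but it is purely a refinement of the computation already carried out one order lower in the proof of Theorem~\ref{thm2}, and once it is in place the remainder of the argument is a routine adaptation of \cite{HanKhuri1,Sakovich,SakovichThesis,ChaKhuri1}.
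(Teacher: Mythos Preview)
Your outline is sound and would succeed, but it takes a different technical route from the paper. You propose to keep the radial profile $\zeta(r)$ as an unknown and run the nonlinear $p$-substitution of Theorem~\ref{thm2} and \cite{Sakovich,SakovichThesis} one order higher, solving the $\pm C_i$-perturbed ODEs for $p_\pm$. The paper instead exploits the fact that \eqref{48.1} is structurally simpler than the full generalized Jang equation \eqref{15}: the leading coefficient in \eqref{48.1} is the bare $g^{ij}$, not the projected $g^{ij}-u^{2}f^{i}f^{j}/(1+u^{2}|\nabla f|^{2})$, so your parenthetical identifying the two operators is not quite accurate. Because of this simpler structure, the paper subtracts off the \emph{entire} formal expansion $f_{0}=\sqrt{1+r^{2}}+\mathcal{A}\log r+\mathcal{B}+\mathcal{D}_{1}/r+\mathcal{D}_{2}/r^{2}$ at once, writes $h=f-f_{0}$, and checks by direct computation that the inhomogeneous residual $\Delta_{g}f_{0}+2\langle\nabla u/u,\nabla f_{0}\rangle-(Tr_{g}k)\sqrt{u^{-2}+|\nabla f_{0}|_{g}^{2}}$ is $O(r^{-3})$; the identities \eqref{014-1} for $\mathcal{D}_{1},\mathcal{D}_{2}$ are exactly the cancellation conditions you anticipated. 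The nonlinear term is then controlled by the crude bound $|\sqrt{u^{-2}+|\nabla f_{0}+\nabla h|^{2}}-\sqrt{u^{-2}+|\nabla f_{0}|^{2}}|\leq|\nabla h|_{g}$, and the resulting inequality for a radial $h$ is dominated by a \emph{linear} first-order ODE in $\zeta=\sqrt{1+r^{2}}\,h'$, whose solution is obtained by an explicit integrating factor rather than by the nonlinear phase-plane analysis of \cite{Sakovich}. Your approach would work but is heavier; the paper's buys a shorter barrier construction at the cost of a longer preliminary expansion. One small overreach: the theorem as stated does not claim \eqref{014}, and the paper's own remark after the proof explicitly notes that the $O_{2}(r^{-3})$ estimate for $\widetilde{f}$ is \emph{not} established here, so your appeal to Remark~\ref{remark001} goes beyond what is required and beyond what the paper asserts.
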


\begin{proof}
This result may be proven in the same fashion as the analogous result  \cite{ChaKhuri1} in the asymptotically flat case, once suitable sub and supersolutions of \eqref{48.1} are constructed.

Let $f_0$ be a smooth function defined on $M$ such that
\begin{equation} \label{017}
f_{0}=\sqrt{1+r^{2}} + \mathcal{A}\log{r} + \mathcal{B}(\theta, \phi) + \frac{\mathcal{D}_{1}(\theta,\phi)}{r} + \frac{\mathcal{D}_{2}(\theta, \phi)}{r^{2}}  \text{ }\text{ }\text{ }\text{ in }\text{ }\text{ } \text{ }M_{end}^{+},
\end{equation}
where $\mathcal{A}$, $\mathcal{B}$, $\mathcal{D}_{1}$, and $\mathcal{D}_{2}$ are given in \eqref{12} and \eqref{014-1}, and such that $f_{0}=0$ on $M_{end}^{-}$. If $f$ is a solution of \eqref{48.1} and $h=f-f_0$ then
\begin{align}\label{48.3}
\begin{split}
 & \Delta_{g} h  + 2 \left \langle \frac{\nabla u}{u}, \nabla h \right\rangle - (Tr_g k)\left(\sqrt{u^{-2} + |\nabla f_0 + \nabla h|^2 _g} - \sqrt{u^{-2} + |\nabla f_0|^2 _g}\right) \\ &  \quad  + \left( \Delta_{g} f_0  + 2 \left \langle \frac{\nabla u}{u}, \nabla f_0 \right \rangle - (Tr_g k) \sqrt{u^{-2} + |\nabla f_0 |^2 _g} \right) = 0.
\end{split}
\end{align}

The expressions for the coefficients of \eqref{48.3} in $M_{end}^-$ can be found in \cite{ChaKhuri1}. As for $M_{end}^+$, with the help of
\begin{align}
\begin{split}
&g^{rr} = (1+r^{2}) - \frac{\mathbf{m}^{r}}{r} + O(r^{-2}), \text{ }\text{ }\text{ }\text{ }\text{ }\text{ }\text{ }\text{ }\text{ }\text{ }
g^{r \alpha} = - a_{r \alpha} \sigma^{\alpha \alpha} \left(1 + \frac{1}{r^{2}}\right) + O(r^{-6} ), \text{ }\text{ }\text{ } \\
&g^{\alpha \alpha} = \sigma^{\alpha \alpha}\left(\frac{1}{r^{2}} - \frac{\sigma^{\alpha\alpha}\mathbf{m}^{g}_{\alpha\alpha}}{r^{5}} \right)+ O(r^{-6}), \text{ }\text{ }\text{ }\text{ }\text{ }\text{ }\text{ } g^{\alpha\beta} = O(r^{-5}),\text{ }\text{ }\text{ }\text{ }\text{ }\text{ }\alpha\neq\beta,
\end{split}
\end{align}
and
\begin{equation}
\Gamma^{\alpha}_{rr} = O(r^{-5}),\text{ }\text{ } \text{ }\text{ }\text{ }\text{ }
\Gamma^{r}_{r \alpha} = O(r^{-2}), \text{ }\text{ }\text{ }\text{ }\text{ }\text{ }
\Gamma^{\beta}_{r \alpha} = O(r^{-1}),\text{ }\text{ }\text{ }\text{ }\text{ }\text{ }
\Gamma^{\gamma}_{\alpha \beta} = (\Gamma_{\sigma})^{\gamma}_{\alpha \beta} + O(r^{-3}),
\end{equation}
the asymptotics for the terms in the second line of \eqref{48.3} may be computed as follows
\begin{align} \label{018-1}
\begin{split}
\Delta_{g} f_{0}
=& g^{rr}\left(\partial_{r}^{2}f_{0}-\Gamma^{r}_{rr} \partial_{r}f_{0}\right) - g^{\alpha \beta}\Gamma_{\alpha \beta}^{r}\partial_{r}f_{0} + \frac{\Delta_{\sigma}\mathcal{B}}{r^{2}} + O(r^{-3}) \\
=& g^{rr}\partial_{r}^{2}f_{0} + \frac{g^{rr}\partial_{r}f_{0}}{2}\left( \frac{2r}{1+r^{2}} + \frac{4}{r} - \frac{3Tr_{\sigma}\textbf{m}^{g}}{r^{4}}+\frac{3\mathbf{m}^{r}}{r^{4}} \right) \\
&+ \frac{1}{2r^{2}}\left(\left[Tr_{\sigma}\left( \mathbf{m}^{g} + 2\mathbf{m}^{k} \right)+2\mathbf{m}^{r}\right]-\frac{1}{4\pi}\int_{S^{2}}\left[Tr_{\sigma}\left( \mathbf{m}^{g} + 2\mathbf{m}^{k} \right)+2\mathbf{m}^{r}\right] \right) + O(r^{-3}) \\
=& 3 \sqrt{1+r^{2}} + 2\mathcal{A} -\frac{\mathcal{D}_{1}}{r} + \frac{\mathcal{A}}{r^{2}}
-\frac{Tr_{\sigma}\textbf{m}^{g}}{r^{2}} + \frac{Tr_{\sigma}\mathbf{m}^{k}}{r^{2}}
-\frac{\mathbf{m}^{r}}{2r^{2}}\\
&
-\frac{1}{8\pi r^{2}}\int_{S^{2}}\left[Tr_{\sigma}\left( \mathbf{m}^{g} + 2\mathbf{m}^{k} \right)
+2\mathbf{m}^{r}\right]+ O(r^{-3}),
\end{split}
\end{align}
\begin{align} \label{018-2}
\begin{split}
2 \left\langle \frac{\nabla u}{u}, \nabla f_{0}\right \rangle
=& \frac{2g^{rr}}{u}\partial_{r}u\partial_{r}f_{0} + O(r^{-3})\\
=& -2\mathcal{C}_{1} + \frac{2\mathcal{C}_{1}^{2}-4\mathcal{C}_{2}-2\mathcal{A}\mathcal{C}_{1}}{r} \\
&+ \frac{2\mathcal{C}_{1}\mathcal{D}_{1}-\mathcal{C}_{1}
+6\mathcal{C}_{1}\mathcal{C}_{2}-6\mathcal{C}_{3}
-4\mathcal{A}\mathcal{C}_{2}+2\mathcal{A}\mathcal{C}_{1}^{2}
-2\mathcal{C}_{1}^{3}}{r^{2}}  + O(r^{-3}).
%&= 2 \overline{m} + \frac{6 \overline{m}^{2}-4C_{2}}{r} + \frac{\overline{m}-18\overline{m}C_{2}-6C_{3}+12\overline{m}^{3}}{r^{2}} + O(r^{-3})%
\end{split}
\end{align}
Moreover
\begin{align} \label{0018-3}
\begin{split}
\sqrt{u^{-2}+ |\nabla f_{0}|^{2}}
=&
\left( g^{rr}(\partial_{r}f_{0})^{2}+ \left( 1 - \frac{2\mathcal{C}_{1}}{r}\right)  + O(r^{-2}) \right)^{1/2} \\
=&
\left(r + \mathcal{A} + \frac{1-2\mathcal{D}_{1}}{2r} -\frac{2\mathcal{C}_{1}+4\mathcal{D}_{2}+\mathbf{m}^{r}}{2r^{2}} + O(r^{-3}) \right),
\end{split}
\end{align}
so that
\begin{align} \label{018-3}
\begin{split}
&(Tr_{g}k)\sqrt{u^{-2}+ |\nabla f_{0}|^{2}} \\
=& \left(3 + \frac{b_{rr}^{(5)}+Tr_{\sigma}\left(\mathbf{m}^{k} -\textbf{m}^{g}\right)-\mathbf{m}^{r}}{r^{3}}+ O(r^{-4}) \right)\\
&\cdot
\left(r + \mathcal{A} + \frac{1-2\mathcal{D}_{1}}{2r} -\frac{2\mathcal{C}_{1}+4\mathcal{D}_{2}+\mathbf{m}^{r}}{2r^{2}} + O(r^{-3}) \right) \\
=& 3r + 3\mathcal{A} + \frac{3-6\mathcal{D}_{1}}{2r} + \frac{b_{rr}^{(5)}+Tr_{\sigma}\left(\mathbf{m}^{k} -\textbf{m}^{g}\right)-\tfrac{5}{2}\mathbf{m}^{r}-3\mathcal{C}_{1}-6\mathcal{D}_{2}}{r^{2}} + O(r^{-3}).
\end{split}
\end{align}
Therefore, by definition of $\mathcal{A}$, $\mathcal{C}_{1}$,  $\mathcal{D}_{1}$, $\mathcal{D}_{2}$ it follows that
\begin{equation} \label{019}
\Delta_{g} f_{0} + 2 \left\langle \frac{\nabla u}{u}, \nabla f_{0} \right\rangle - (Tr_{g}k)\sqrt{u^{-2}+ |\nabla f_{0}|_{g}^{2}}  = O(r^{-3}).
\end{equation}
Observe also that
\begin{align}
 \begin{split}
 \left|\sqrt{u^{-2} + |\nabla f_0 + \nabla h|^2 _g} - \sqrt{u^{-2} + |\nabla f_0|^2 _g} \right|
 & = \frac{\left| |\nabla f_0 + \nabla h|^2 _g - |\nabla f_0|^2 _g\right|}{\sqrt{u^{-2} + |\nabla f_0 + \nabla h|^2 _g} + \sqrt{u^{-2} + |\nabla f_0|^2 _g}} \\
 & \leq \frac{\left| \left \langle \nabla h, 2 \nabla f_0 + \nabla h \right \rangle \right|}{|\nabla f_0 + \nabla h| _g + |\nabla f_0| _g} \\
 & \leq \frac{|2 \nabla f_0 + \nabla h| _g |\nabla h|_g}{|2 \nabla f_0 + \nabla h| _g} \\
 & = |\nabla h|_g.
 \end{split}
\end{align}

In order to construct radial sub and supersolutions of \eqref{48.3}, we first note that if $h_{0} = h_{0}(r)$, then
\begin{align}
 \begin{split}
  &\Delta_{g} h_{0}  + 2 \left \langle \frac{\nabla u}{u}, \nabla h_{0} \right \rangle\\
  = &g^{rr}h_{0}'' - g^{rr} \left( \Gamma^{r}_{rr} + 2(g^{rr})^{-1}g^{r \alpha}\Gamma^{r}_{r \alpha} + (g^{rr})^{-1} g^{\alpha \beta}\Gamma^{r}_{\alpha \beta} -  \frac{2\partial_{r} u}{u} -\frac{2g^{r\alpha}\partial_{\alpha}u}{ug^{rr}}\right) h_{0}' \\
=& \frac{g^{rr}}{\sqrt{1+r^2}} \left[ \zeta' -
\left( \frac{r}{1+r^{2}}+ \Gamma^{r}_{rr} + 2(g^{rr})^{-1}g^{r \alpha}\Gamma^{r}_{r \alpha} + (g^{rr})^{-1} g^{\alpha \beta}\Gamma^{r}_{\alpha \beta} -  \frac{2\partial_{r} u}{u} -\frac{2g^{r\alpha}\partial_{\alpha}u}{ug^{rr}}\right)\zeta \right],
 \end{split}
\end{align}
where $\zeta = \sqrt{1 + r^{2}} h_{0}'$.
Further let $\varTheta(r)$, $\varLambda(r)>0$, $\varUpsilon(r)>0$ be bounded radial functions such that
\begin{equation}
 \frac{\sqrt{1+r^{2}}\left|\Delta_{g} f_0  + 2 \left \langle \frac{\nabla u}{u}, \nabla f_0 \right \rangle - (Tr_g k) \sqrt{u^{-2} + |\nabla f_0 |^2 _g}\right|}{g^{rr}}\leq \varUpsilon,
\end{equation}
and
\begin{equation}
\left|\left( \frac{r}{1+r^{2}}+ \Gamma^{r}_{rr} + 2(g^{rr})^{-1}g^{r \alpha}\Gamma^{r}_{r \alpha} + (g^{rr})^{-1} g^{\alpha \beta}\Gamma^{r}_{\alpha \beta} -  \frac{2\partial_{r} u}{u} -\frac{2g^{r\alpha}\partial_{\alpha}u}{ug^{rr}}\right)+ \frac{|Tr_g k|}{\sqrt{g^{rr}}} + \varTheta \right| \leq \varLambda,
\end{equation}
where
\begin{equation}
\varTheta (r)= -\frac{1}{r}+\frac{2\overline{m}_{adm}}{r^2} + O(r^{-3}),\text{ }\text{ }\text{ }\text{ }\text{ }\text{ }\varLambda(r) = O(r^{-3}), \text{ }\text{ }\text{ }\text{ }\text{ }\text{ }\varUpsilon(r)=O(r^{-4})\text{ }\text{ }\text{ }\text{ in }\text{ }\text{ }\text{ }M_{end}^{+}.
\end{equation}
Similar to \cite{ChaKhuri1}, we (radially) extend these functions to $M \setminus M^+_{end}$ in an appropriate way and define a supersolution $h_+$ of \eqref{48.3} by
\begin{equation}
 h_+(r) =  - \int_r ^\infty \frac{\zeta_+(s)}{\sqrt{1+s^2}} ds,
\end{equation}
where
\begin{equation}
\zeta_+(r)= - e^{-\int_0^r (\varTheta (s) - \varLambda(s))ds} \int_0^r \varUpsilon(s) e^{\int_0^s (\varTheta (t) - \varLambda(t))dt} ds \leq 0
\end{equation}
is the solution of the ordinary differential equation
\begin{equation}\label{54.1}
\zeta_+' + (\varTheta - \varLambda) \zeta_+ + \varUpsilon = 0.
\end{equation}
It is straightforward to check that $h_+$ has the desired asymptotics. Furthermore,
%and define $\zeta_+ \leq 0$ to be the radial solution of the ordinary differential equation
%\begin{equation}\label{54.1}
%\zeta_+' + (\varTheta - \varLambda) \zeta_+ + \varUpsilon = 0
%\end{equation}
%such that $\zeta_+(0)=0$, that is \textcolor{red}{[[Anna: corrected typo in the formula]]}
%\begin{equation}
%\zeta_+(r)= - e^{-\int_0^r (\varTheta (s) - \varLambda(s))ds} \int_0^r \varUpsilon(s) e^{\int_0^s (\varTheta (t) - \varLambda(t))dt} ds.
%\end{equation}
%Since $\zeta_+$ is bounded, $\varUpsilon(r)=O(r^{-4})$, and $\varTheta(r) - \varLambda(r)=-\frac{1}{r} + O(r^{-2})$, it is straightforward to check using \eqref{54.1} that $\zeta_+=O(r^{-3})$ in $M_{end}^{+}$.
%We are now in a position to define a supersolution $h_+$ of \eqref{48.3} by
%\begin{equation}
%h_+(r) =  - \int_r ^\infty \frac{\zeta_+(s)}{\sqrt{1+s^2}} ds.
%\end{equation}
since $\zeta_+ \leq 0$ it follows that
\begin{align}
\begin{split}
   &\Delta_{g} h_+  + 2 \left \langle \frac{\nabla u}{u}, \nabla h_+ \right\rangle - (Tr_g k)\left(\sqrt{u^{-2} + |\nabla f_0 + \nabla h_+|^2 _g} - \sqrt{u^{-2} + |\nabla f_0|^2 _g}\right) \\ &  \quad  + \left( \Delta_{g} f_0  + 2 \left \langle \frac{\nabla u}{u}, f_0 \right \rangle - (Tr_g k) \sqrt{u^{-2} + |\nabla f_0 |^2 _g} \right)\\
   \leq & \Delta_{g} h_+  + 2 \left \langle \frac{\nabla u}{u}, \nabla h_+ \right\rangle + |Tr_g k||\nabla h_{+}|_g + \frac{g^{rr}\varUpsilon}{\sqrt{1+r^{2}}}\\
   =& \frac{g^{rr}}{\sqrt{1+r^2}} \left[ \zeta_+' -
\left( \frac{r}{1+r^{2}}+ \Gamma^{r}_{rr} + 2(g^{rr})^{-1}g^{r \alpha}\Gamma^{r}_{r \alpha}\right)\zeta_+ \right]\\
& -\frac{g^{rr}}{\sqrt{1+r^2}} \left[\left( (g^{rr})^{-1} g^{\alpha \beta}\Gamma^{r}_{\alpha \beta} -  \frac{2\partial_{r} u}{u}
  -\frac{2g^{r\alpha}\partial_{\alpha}u}{ug^{rr}}+ \frac{|Tr_g k|}{\sqrt{g^{rr}}} \right)\zeta_+ - \varUpsilon\right] \\
   \leq & \frac{g^{rr}}{\sqrt{1+r^2}}  (\zeta_+ ' + (\varTheta - \varLambda) \zeta_+ + \varUpsilon) \\
   =&0.
\end{split}
\end{align}
Consequently, $f_+ = f_0 + h_+$ is a supersolution of \eqref{48.1} which possesses the expansion \eqref{013}. Finally, since $\zeta_- = - \zeta_+$ is a solution of the ordinary differential equation
\begin{equation}
\eta_-' + (\varTheta - \varLambda) \eta_- - \varUpsilon = 0,
\end{equation}
one may similarly check that $f_- = f_0 - h_+ $ is a subsolution of \eqref{48.1}, with the desired asymptotic behavior in $M_{end}^{+}$, and such that $f_- \leq f_+$. %Note that although $h_{\pm}$ are sub and supersolutions of \eqref{48.3} on $M_{end}^{+}$, they may be extended (radially) to have this property globally as in \cite{ChaKhuri1}. The functions $f_{\pm}$ are then globally defined sub and supersolutions of \eqref{48.1} from which the existence result may be derived.
\end{proof}

\begin{remark}
It should be the case that the solutions produced in Theorem \ref{thm6} also satisfy \eqref{014}, which is important when applying the results of \cite{Chrusciel} to obtain Brill coordinates for the deformed data. However, the estimates needed for \eqref{014} are not clearly derived from the typical rescaling argument found in \cite{ChaKhuri1}. Rather, it is likely that these estimates may be derived from the techniques in \cite{Sakovich} (see Remark \ref{remark001}).
\end{remark}

\section{The Mass-Angular Momentum-Charge Inequality}
\label{sec7} \setcounter{equation}{0}
\setcounter{section}{7}

Let $(M, g, k, E, B)$ be an axisymmetric initial data set for the Einstein-Maxwell equations, with the same assumptions on $(M,g,k)$ as in Section \ref{sec6}, except that \eqref{34} is replaced by
\begin{equation}\label{7.1}
J_{EM}^{i}\eta_{i}=0.
\end{equation}
Axisymmetry entails that $\mathfrak{L}_{\eta} E=\mathfrak{L}_{\eta} B=0$, and in addition it will be required that $E$, $B$ are divergence free.
%\begin{equation} \label{60}
%\end{equation}
The asymptotics for $(E,B)$ in the asymptotically hyperboloidal end $M_{end}^{+}$ are given in \eqref{6}, whereas the asymptotics in $M_{end}^{-}$ are given separately in the asymptotically flat and asymptotically cylindrical cases by
\begin{equation}\label{618.1}
E_{i}=O_{1}(1),\text{ }\text{ }\text{ }\text{ }E_{\phi}=O_{1}(r),\text{ }\text{ }\text{ }\text{ }
B_{i}=O_{1}(1),\text{ }\text{ }\text{ }\text{ }B_{\phi}=O_{1}(r),
\text{ }\text{ }\text{ }i=\rho,z,
\end{equation}
and
\begin{equation}\label{618.2}
E_{i}=O_{1}(r^{-1}),\text{ }\text{ }\text{ }\text{ }E_{\phi}=O_{l}(1),\text{ }\text{ }\text{ }\text{ }
B_{i}=O_{1}(r^{-1}),\text{ }\text{ }\text{ }\text{ }B_{\phi}=O_{1}(1),
\text{ }\text{ }\text{ }i=\rho,z,
\end{equation}
where $(\rho,\phi,z)$ are Brill coordinates in this end with $r=\sqrt{\rho^{2}+z^{2}}$. In this setting the angular momentum is defined to be
\begin{equation}\label{7.2}
\mathcal{J}=\frac{1}{8\pi}\int_{S}(k_{ij}-(Tr_{g} k)g_{ij})\nu^{i}_{g}\eta^{j}-\frac{1}{4\pi}\int_{S}\psi_{B} E_{i}\nu_{g}^{i},
\end{equation}
where $S$ is any surface enclosing the origin with unit outer normal $\nu_{g}$, and $\psi_{B}$ is the potential for the magnetic field (see \cite{DainKhuriWeinsteinYamada} and \cite{KhuriWeinstein1}). The condition \eqref{7.1} ensures that this is well-defined \cite{DainKhuriWeinsteinYamada}. Moreover since $\psi_{B}=O(r^{-1})$
as $r\rightarrow\infty$ we have that
\begin{equation}
\int_{S_{\infty}}\psi_{B} E_{i}\nu_{g}^{i}=0,
\end{equation}
and hence \eqref{7.2} agrees with the limit definition in \eqref{33.1}.
The mass-angular momentum-charge inequality states that
\begin{equation}\label{615}
m^2\geq\frac{\mathcal{Q}^{2}+\sqrt{\mathcal{Q}^{4}+4\mathcal{J}^{2}}}{2}.
\end{equation}

We seek a deformation of the initial data $(M,g,k,E,B)\rightarrow(\overline{M},\overline{g},\overline{k},\overline{E},\overline{B})$ such that $M\cong\overline{M}$, $\overline{M}_{end}^{+}$ is asymptotically flat, and
\begin{equation}\label{621}
\overline{m}_{adm}=m,\text{ }\text{ }\text{ }\text{ }\text{ }\overline{\mathcal{J}}=\mathcal{J},\text{ }\text{ }\text{ }\text{ }\text{ }Tr_{\overline{g}}\overline{k}=0,\text{ }\text{ }\text{ }\text{ }\text{ }
\overline{R}\geq|\overline{k}|_{\overline{g}}^{2}
+2(|\overline{E}|_{\overline{g}}^{2}+|\overline{B}|^{2}_{\overline{g}})\text{ }\text{ weakly,}
\end{equation}
\begin{equation}\label{622}
\operatorname{div}_{\overline{g}}\overline{E}
=\operatorname{div}_{\overline{g}}\overline{B}=0,\text{ }\text{ }\text{ }\text{ }\text{ }\text{ }\overline{J}_{EM}(\eta)=0,\text{ }\text{ }\text{ }\text{ }\text{ }\text{ }\overline{\mathcal{Q}}_{e}=\mathcal{Q}_{e},\text{ }\text{ }\text{ }\text{ }\text{ }\text{ }\overline{\mathcal{Q}}_{b}=\mathcal{Q}_{b},
\end{equation}
where $\overline{J}_{EM}$ is the momentum density minus the electromagnetic contribution of the new data. The structure of the deformation will be the same as that in the previous section. In particular, $\overline{g}$ and $\overline{k}$ are given by \eqref{39} and $Y$ satisfies \eqref{41}. It follows that the new data are again determined by three functions $(u,Y^{\phi},f)$, and $Tr_{\overline{g}}\overline{k}=0$. The functions $u$ and $f$ are chosen according to \eqref{58} and \eqref{48.1}, with the asymptotics \eqref{52}-\eqref{54} and \eqref{013}-\eqref{51}. The function $Y^{\phi}$ is chosen here to satisfy a slightly different equation, namely
\begin{equation}\label{630}
div_{\overline{g}}\overline{k}(\eta)+2\overline{E}\times\overline{B}(\eta)=0,
\end{equation}
but will keep the same asymptotics \eqref{43}-\eqref{00134}. This equation is equivalent to $\overline{J}_{EM}(\eta)=0$, and guarantees the existence of a charged twist potential \cite{ChaKhuri2} in addition to a well-defined angular momentum by \eqref{7.2}. As in the previous section, we then have $\overline{m}_{adm}=m$ and $\overline{\mathcal{J}}=\mathcal{J}$.

The deformation of the electromagnetic field will follow the construction in \cite{ChaKhuri2}. Let $(e_{1}, e_{2}, e_{3}=|\eta|^{-1}\eta)$ be an orthonormal frame for $(M,g)$, and set
%\begin{equation}\label{645}
%\widetilde{E}(e_{i})=E(e_{i}), \text{ }\text{ }\text{ }\text{ }\text{ %}\text{ }\text{ }  \widetilde{B}(e_{i})= B(e_{i}) \text{ }\text{ }\text{ %}\text{ }\text{ }\text{ for }\text{ }\text{ }\text{ }\text{ }\text{ } i %= 1,2,
%\end{equation}
%\begin{equation}\label{646}
%\widetilde{E}(e_{3})=v \times B(e_{3}),\text{ }\text{ }\text{ }\text{ %}\text{ }\text{ }\text{ }\text{ }
%\widetilde{B}(e_{3})=-v \times E(e_{3}),
%\end{equation}
%where $v = \frac{u \nabla f}{\sqrt{1+ u^{2}|\nabla f|_{g}^{2}}}$. And %then, we choose $(\overline{E}, \overline{B})$ as
\begin{equation}\label{665}
\overline{E}(e_{i})=\frac{E(e_{i})}{\sqrt{\volg}},\text{ }\text{ }\text{ }\text{ }\text{ }\overline{B}(e_{i})=\frac{B(e_{i})}{\sqrt{\volg}}\text{ }\text{ }\text{ }\text{ for }\text{ }\text{ }\text{ }i=1,2,\text{ }\text{ }\text{ }\text{ }\text{ }\overline{E}(e_{3})=\overline{B}(e_{3})=0.
\end{equation}
%\begin{equation}\label{665.1}
%\overline{E}(e_{3})=\overline{B}(e_{3})=0.
%\end{equation}
Then it follows from \cite{ChaKhuri2} that
\begin{align}\label{694}
\begin{split}
\overline{R}-|\overline{k}|_{\overline{g}}^{2}-2(|\overline{E}|_{\overline{g}}^{2}+|\overline{B}|_{\overline{g}}^{2})
=& 2(\mu_{EM}-J_{EM}(v))+|k-\pi|_{g}^{2}+2u^{-1}\operatorname{div}_{\overline{g}}(uQ)\\
&+2\left(E(e_{3})- v\times B(e_{3})\right)^{2}  + 2\left(B(e_{3})+ v\times E(e_{3})\right)^{2},
\end{split}
\end{align}
and
\begin{equation}\label{695}
\operatorname{div}_{\bg}\be
= \frac{\operatorname{div}_{g} E}{\sqrt{\volg}}=0, \text{ }\text{ }\text{ }\text{ }\text{ }\text{ }\text{ }\text{ }
\operatorname{div}_{\bg}\bm = \frac{\operatorname{div}_{g} B}{\sqrt{\volg}}=0.
\end{equation}
Moreover a similar computation as in \eqref{23} shows that
\begin{equation} \label{696}
\overline{E}_{i}\nu_{\overline{g}}^{i} =E_{i}\nu_{g}^{i}+O(r^{-3}),
\end{equation}
which ensures that $\overline{\mathcal{Q}}_{e}=\mathcal{Q}_{e}$ and
$\overline{\mathcal{Q}}_{b}=\mathcal{Q}_{b}$.

\begin{theorem}\label{thm8}
Let $(M,g,k,E,B)$ be a smooth, simply connected, axially symmetric initial data set satisfying the charged dominant energy condition $\mu_{EM}\geq|J_{EM}|$ and condition \eqref{7.1}, and with two ends, one designated asymptotically hyperboloidal and the other either asymptotically flat or asymptotically cylindrical. If the system of equations \eqref{630}, \eqref{48.1}, \eqref{58} admits a smooth solution $(u,Y^{\phi},f)$ satisfying the asymptotics described above, then
\begin{equation}\label{59}
m^2\geq\frac{\mathcal{Q}^{2}+\sqrt{\mathcal{Q}^{4}+4\mathcal{J}^{2}}}{2},
\end{equation}
and if equality is achieved then the initial data arise from an embedding into the extreme Kerr-Newman spacetime.
\end{theorem}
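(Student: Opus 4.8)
\emph{Proof proposal.} The plan is to mimic the reduction of Section~\ref{sec6} word for word, replacing the vacuum harmonic map energy of \cite{Dain0} by its charged counterpart from \cite{ChaKhuri2}, \cite{SchoenZhou}, \cite{ChruscielCosta}, \cite{Costa}. First I would invoke the hypothesis to obtain a solution $(u,Y^{\phi},f)$ of the system \eqref{630}, \eqref{48.1}, \eqref{58} with the prescribed asymptotics, and form the deformed data $(\overline{M},\overline{g},\overline{k},\overline{E},\overline{B})$ using \eqref{39}, \eqref{41} and the electromagnetic deformation \eqref{665}. Since $\overline{E}$ and $\overline{B}$ inherit the $O(r^{-2})$ fall-off, the proof of Lemma~\ref{lemma31} applies unchanged and gives that $\overline{M}_{end}^{+}$ is asymptotically flat with $\overline{m}_{adm}=2m+\mathcal{C}_{1}$; because $u=e^{\overline{U}}$ has the expansion \eqref{52}, formula \eqref{58-2} forces $\mathcal{C}_{1}=-\overline{m}_{adm}$, whence $\overline{m}_{adm}=m$. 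The expansion \eqref{43} yields $\overline{\mathcal{J}}=\mathcal{J}$, the estimate \eqref{696} yields $\overline{\mathcal{Q}}_{e}=\mathcal{Q}_{e}$ and $\overline{\mathcal{Q}}_{b}=\mathcal{Q}_{b}$, the ansatz \eqref{41} gives $Tr_{\overline{g}}\overline{k}=0$, and \eqref{695} gives $\operatorname{div}_{\overline{g}}\overline{E}=\operatorname{div}_{\overline{g}}\overline{B}=0$; together with $\overline{J}_{EM}(\eta)=0$ (equivalent to \eqref{630}) this establishes all of \eqref{621}, \eqref{622} except for the weak energy inequality.

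Next I would derive the weak charged dominant energy condition from the positivity formula \eqref{694}. Observe that $|v|_{g}^{2}=u^{2}|\nabla f|_{g}^{2}\,(1+u^{2}|\nabla f|_{g}^{2})^{-1}<1$, so $\mu_{EM}-J_{EM}(v)\geq\mu_{EM}-|J_{EM}|_{g}\geq0$ by the charged dominant energy condition; discarding this matter term together with $|k-\pi|_{g}^{2}$ and the two nonnegative squares in \eqref{694} leaves the pointwise inequality
\[
\overline{R}-|\overline{k}|_{\overline{g}}^{2}-2(|\overline{E}|_{\overline{g}}^{2}+|\overline{B}|_{\overline{g}}^{2})\geq 2u^{-1}\operatorname{div}_{\overline{g}}(uQ),
\]
so that $\overline{R}\geq|\overline{k}|_{\overline{g}}^{2}+2(|\overline{E}|_{\overline{g}}^{2}+|\overline{B}|_{\overline{g}}^{2})$ holds weakly once the integral of the right-hand side is shown to have no net boundary contribution.

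Then, following \cite{ChaKhuri2}, I would pass to the global Brill chart \eqref{55}, which exists by \cite{Chrusciel} since $\overline{M}$ is simply connected, axisymmetric, and asymptotically flat (with the asymptotics of \eqref{0017}--\eqref{0019}), integrate \eqref{694} against $e^{\overline{U}}/u=1$, and introduce the charged twist potential together with the electric and magnetic scalar potentials to arrive at
\[
\overline{m}_{adm}-\mathcal{M}\geq\frac{1}{8\pi}\int_{\overline{M}}\frac{e^{\overline{U}}}{u}\operatorname{div}_{\overline{g}}(uQ),
\]
where $\mathcal{M}$ is the charged harmonic map energy of the deformed data. By the divergence theorem the right-hand side reduces to boundary integrals: the inner one vanishes as in \cite{ChaKhuri2} as a consequence of $\overline{\mathcal{J}}=\mathcal{J}$, $\overline{\mathcal{Q}}_{e}=\mathcal{Q}_{e}$, $\overline{\mathcal{Q}}_{b}=\mathcal{Q}_{b}$, and the one at spatial infinity vanishes by the computation in Appendix B, using the asymptotics \eqref{013}--\eqref{51} and \eqref{52}. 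Hence $\overline{m}_{adm}\geq\mathcal{M}$, and the maximal case bound of Schoen--Zhou \cite{SchoenZhou} (see also \cite{ChruscielCosta}, \cite{Costa}) gives $\mathcal{M}\geq\sqrt{(\overline{\mathcal{Q}}^{2}+\sqrt{\overline{\mathcal{Q}}^{4}+4\overline{\mathcal{J}}^{2}})/2}$; combined with $\overline{m}_{adm}=m$, $\overline{\mathcal{J}}=\mathcal{J}$, $\overline{\mathcal{Q}}=\mathcal{Q}$ this gives \eqref{615}. For the rigidity statement, equality propagates back through each inequality: it forces $\operatorname{div}_{\overline{g}}(uQ)\equiv0$, $\mu_{EM}=|J_{EM}|_{g}$, vanishing of the squares in \eqref{694}, and saturation of the maximal case, which by \cite{ChaKhuri2} identifies the deformed data with extreme Kerr--Newman and hence the original data as a slice of that spacetime.

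The step I expect to be the main obstacle is the vanishing of the boundary term at null infinity. Unlike in the asymptotically flat reduction of \cite{ChaKhuri2}, here the Jang graph blows up like $\sqrt{1+r^{2}}$, so the cancellations in the integral of $e^{\overline{U}}u^{-1}\operatorname{div}_{\overline{g}}(uQ)$ over $\overline{S}_{\infty}$ are genuinely more delicate and rely on the precise coefficients in the expansions \eqref{013}, \eqref{014-1}, and \eqref{52}; establishing them carefully is exactly the content of Appendix B.
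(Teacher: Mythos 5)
Your proposal follows the same route as the paper's proof, which simply invokes the asymptotically flat reduction of \cite{ChaKhuri2} in light of the deformation properties \eqref{621}, \eqref{622} and defers the one genuinely new ingredient, the vanishing of the boundary term at null infinity, to Appendix~B (Lemma~\ref{lemma4}). You have accurately filled in the steps the paper leaves implicit in its citation, including the correct observations that $e^{\overline{U}}/u=1$ by \eqref{58}, that $|v|_g<1$ makes the matter term nonnegative, and that the charged harmonic map energy bound of \cite{SchoenZhou,ChruscielCosta} then closes the argument.
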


\begin{proof}
As in the proof of Theorem \ref{thm7}, we can apply the arguments from the asymptotically flat case \cite{ChaKhuri2} with only minor modifications, in light of \eqref{621} and \eqref{622}.
The only difference arises from the boundary integral at null infinity, which is shown to vanish in  Appendix B.
\end{proof}

\begin{remark}
Note that the system of equations associated with Theorem \ref{thm8} is exactly the same as that of Theorem \ref{thm7}, save for a minor (lower order) modification in the equation for $Y^{\phi}$ \eqref{630}. Thus, each equation may be solved independently with the appropriate asymptotics, lending further support to the above procedure.
\end{remark}

\section{Lower Bounds for Area in Terms of Mass, Angular Momentum, and Charge}
\label{sec8} \setcounter{equation}{0}
\setcounter{section}{8}

Here we point out another application of the reduction procedure presented in the previous section. Let $(M,g,k,E,B)$ be as in Theorem \ref{thm8}, then heuristic physical arguments \cite{DainKhuriWeinsteinYamada} lead to the inequality
\begin{equation}\label{7153}
\frac{A_{min}}{8\pi}\geq m^2-\frac{\mathcal{Q}^2}{2}
-\sqrt{\left(m^2-\frac{\mathcal{Q}^2}{2}\right)^2
-\frac{\mathcal{Q}^4}{4}-\mathcal{J}^2}
\end{equation}
where $A_{min}$ is the minimum area required to enclose $M_{end}^{-}$.
In \cite{DainKhuriWeinsteinYamada} this has been proven in the maximal case when $M_{end}^{+}$ is asymptotically flat and $m$ represents the ADM mass. The proof follows directly from the mass-angular momentum-charge inequality \cite{ChruscielCosta}, and the area-angular momentum-charge inequality \cite{ClementJaramilloReiris}. In the general (non-maximal) case, the area-angular momentum-charge inequality has been established when $A_{min}$ is replaced by the area of a stable, axisymmetric, marginally outer trapped surface \cite{ClementJaramilloReiris}. For the mass-angular momentum-charge inequality, we have shown how to reduce the case of an asymptotically hyperboloidal end to that of an asymptotically flat end, modulo the problem of solving a coupled system of elliptic equations. Therefore a lower bound for area analogous to \eqref{7153} may also be reduced to the same problem, by combining Theorem \ref{thm8} above with the proof of a Theorem 2.5 in \cite{DainKhuriWeinsteinYamada}.

\begin{theorem}\label{thm9}
Let $(M,g,k,E,B)$ be a smooth, simply connected, axially symmetric initial data set satisfying the charged dominant energy condition $\mu_{EM}\geq|J_{EM}|$ and condition \eqref{7.1}, and with two ends, one designated asymptotically hyperboloidal and the other either asymptotically flat or asymptotically cylindrical. If the data possesses a stable axially symmetric marginally outer trapped surface with area $A$, and the system of equations \eqref{630}, \eqref{48.1}, \eqref{58} admits a smooth solution $(u,Y^{\phi},f)$ satisfying the asymptotics \eqref{43}-\eqref{00134}, \eqref{52}-\eqref{54}, and \eqref{013}-\eqref{51} then
\begin{equation}\label{154}
\frac{A}{8\pi}\geq m^2-\frac{\mathcal{Q}^2}{2}
-\sqrt{\left(m^2-\frac{\mathcal{Q}^2}{2}\right)^2
-\frac{\mathcal{Q}^4}{4}-\mathcal{J}^2},
\end{equation}
and if equality is achieved then the initial data arise from an embedding into the extreme Kerr-Newman spacetime.
\end{theorem}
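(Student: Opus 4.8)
The plan is to reduce \eqref{154} to two results that are already available, and then to combine them by an elementary algebraic step, exactly as in the proof of Theorem 2.5 of \cite{DainKhuriWeinsteinYamada}. The two ingredients are the mass--angular momentum--charge inequality of Theorem \ref{thm8}, and the area--angular momentum--charge inequality for a stable, axially symmetric, marginally outer trapped surface established in \cite{ClementJaramilloReiris}. The latter is proved directly, with no maximality or asymptotic hypothesis, so the asymptotically hyperboloidal structure of $M_{end}^{+}$ is irrelevant to its application; in particular, no Jang-type deformation of the data is needed for this part of the argument. Since the hypotheses imposed here on $(M,g,k,E,B)$ and on the solvability of the system \eqref{630}, \eqref{48.1}, \eqref{58} are precisely those of Theorem \ref{thm8}, that theorem applies and yields $m^{2}\geq\tfrac{1}{2}\bigl(\mathcal{Q}^{2}+\sqrt{\mathcal{Q}^{4}+4\mathcal{J}^{2}}\bigr)$.

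Next I would record the relevant conservation laws. Condition \eqref{7.1}, together with the axisymmetry of $E$ and $B$, gives conservation of the angular momentum \eqref{7.2}, so that its value over the given marginally outer trapped surface $S$ coincides with the value in the limit \eqref{33.1}, namely $\mathcal{J}$. Likewise $\operatorname{div}_{g}E=\operatorname{div}_{g}B=0$, combined with the fact that the region bounded by $S$ and the end $M_{end}^{+}$ contains no other ends, forces the total charge of $S$, computed as in \eqref{8}, to equal $\mathcal{Q}$ by the divergence theorem. Hence the inequality of \cite{ClementJaramilloReiris}, applied to $S$, reads $A/(8\pi)\geq\sqrt{\mathcal{J}^{2}+\mathcal{Q}^{4}/4}$.

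The remaining step is purely algebraic. Set $x:=m^{2}-\mathcal{Q}^{2}/2$ and $c:=\mathcal{J}^{2}+\mathcal{Q}^{4}/4\geq0$. The mass inequality above is equivalent to $x\geq\sqrt{c}$, so the radicand in \eqref{154} is nonnegative and the right-hand side of \eqref{154} equals $x-\sqrt{x^{2}-c}$. The map $t\mapsto t-\sqrt{t^{2}-c}$ is nonincreasing on $[\sqrt{c},\infty)$ and equals $\sqrt{c}$ at $t=\sqrt{c}$, whence $x-\sqrt{x^{2}-c}\leq\sqrt{c}$; together with $A/(8\pi)\geq\sqrt{c}$ from the previous paragraph this gives \eqref{154}. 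For the rigidity statement, equality in \eqref{154} forces equality throughout the chain $A/(8\pi)\geq\sqrt{c}\geq x-\sqrt{x^{2}-c}$; in particular $x-\sqrt{x^{2}-c}=\sqrt{c}$, which is possible only if $x=\sqrt{c}$, i.e.\ equality is attained in the mass--angular momentum--charge inequality of Theorem \ref{thm8}, and by that theorem the data then arise from an embedding into the extreme Kerr--Newman spacetime.

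I do not expect a genuine analytic obstacle here, since Theorem \ref{thm8} already contains all of the hard work (the Jang-type deformation and the reduction to the maximal case in the asymptotically flat setting). The points that do require care are essentially bookkeeping: checking that the hypotheses of both Theorem \ref{thm8} and the area inequality of \cite{ClementJaramilloReiris} are met under the present assumptions on the data and on $S$, and verifying the conservation laws that identify the angular momentum and charge of $S$ with the totals $\mathcal{J}$ and $\mathcal{Q}$. One should also double-check the precise normalization of the area inequality in \cite{ClementJaramilloReiris}, namely that it is indeed $A/(8\pi)\geq\sqrt{\mathcal{J}^{2}+\mathcal{Q}^{4}/4}$, so that it dovetails exactly with the algebraic bound $x-\sqrt{x^{2}-c}\leq\sqrt{c}$ used above.
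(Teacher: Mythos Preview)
Your proposal is correct and follows exactly the approach indicated by the paper: the paragraph preceding Theorem~\ref{thm9} states that the result is obtained by combining Theorem~\ref{thm8} with the proof of Theorem~2.5 in \cite{DainKhuriWeinsteinYamada}, which is precisely the combination of the mass--angular momentum--charge inequality with the area--angular momentum--charge inequality of \cite{ClementJaramilloReiris} that you carry out. The algebraic step and the treatment of rigidity you give are the ones used in \cite{DainKhuriWeinsteinYamada}.
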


\section{Appendix A: Adjustment of Asymptotically Hyperboloidal Initial Data}
\label{sec9} \setcounter{equation}{0}
\setcounter{section}{9}

In this appendix we will show that given asymptotically hyperboloidal initial data $(M,g,k)$ as described in Section \ref{sec2}, one can perform a change of coordinates at infinity to achieve $a_{rr} = a_{r\alpha} = 0$ without affecting the mass aspect function. Following \cite{CortierDahlGicquaud} we refer to this change of coordinates as an `adjustment'. We note that adjustment is a standard procedure which becomes relevant when considering deformations of asymptotically hyperbolic conformally compact metrics or their evolution under geometric flows, see for example the two references mentioned below.

Let $g$ be as in Section \ref{sec2}, so that it takes the asymptotics form
\begin{equation}
 \left( \frac{1}{1+r^2} + \frac{\mathbf{m}^r}{r^5} + O(r^{-6}) \right)dr^2 + \left(\frac{2\mathbf{a}_\alpha}{r^3} + O(r^{-4})\right) dr dy^{\alpha} + \left( r^2 \sigma_{\alpha\beta} + \frac{\mathbf{m}^g_{\alpha\beta}}{r} + O(r^{-2})\right) dy^\alpha dy^\beta,
\end{equation}
where $\mathbf{a}$ is an $r$-independent 1-form on $S^2$. Using the substitution
\begin{equation}
r = \sinh^{-1} \varrho,
\end{equation}
we bring $g$ to the conformally compact form
\begin{equation}
\sinh^{-2}\! \varrho \left[ (1+ \mathbf{m}^r \varrho^3\! + O(\varrho^4)) d\varrho^2 \!+ \left(2\mathbf{a}_\alpha \varrho^3\!  + O(\varrho^4) \right) d\varrho dy^{\alpha}\! +\! \left( \sigma_{\alpha\beta} + \mathbf{m}^g_{\alpha\beta} \varrho^3\! + O(\varrho^4)\right) dy^\alpha dy^\beta \right].
\end{equation}

Now, similar to \cite[Section III]{BalehowskyWoolgar}, we apply the coordinate transformation
\begin{equation}
x^\alpha = y^\alpha + \frac{\varrho}{4} \sigma^{\alpha\beta} \mathbf{a}_\beta
\end{equation}
to obtain
\begin{equation}
\sinh^{-2} \varrho \left[ (1+ \mathbf{m}^r \varrho^3 + O(\varrho^4)) d\varrho^2 + O(\varrho^4) d\varrho dx^{\alpha} + \left( \sigma_{\alpha\beta} + \mathbf{m}^g_{\alpha\beta} \varrho^3 + O(\varrho^4)\right) dx^\alpha dx^\beta \right].
\end{equation}
A further substitution
\begin{equation}
\varrho' = \varrho + \frac{\mathbf{m}^r}{6} \varrho^3
\end{equation}
yields
\begin{equation}
\sinh^{-2}\! \varrho' \left[ (1 + O(\varrho'^4)) d\varrho^2\! + O(\varrho'^4) d\varrho' dx^{\alpha}\! + \left( \sigma_{\alpha\beta} + \left(\mathbf{m}^g_{\alpha\beta}\! + \tfrac{1}{3}\mathbf{m}^r \sigma_{\alpha\beta} \right)\varrho'^3\! + O(\varrho'^4)\right) dx^\alpha dx^\beta \right].
\end{equation}
We are now in a position to change the conformal gauge as described in \cite[Section 3.2.1]{AnderssonCaiGalloway}. This gives
\begin{equation}
g = \sinh^{-2} \tilde{\varrho} \left[ d\tilde{\varrho}^2 + \left( \sigma_{\alpha\beta} + \left(\mathbf{m}^g_{\alpha\beta} + \tfrac{1}{3}\mathbf{m}^r \sigma_{\alpha\beta} \right)\tilde{\varrho}^3 + O(\tilde{\varrho}^4)\right) dx^\alpha dx^\beta \right],
\end{equation}
for $\tilde{\varrho} = \varrho' + O(\varrho'^5)$.

To bring $g$ back to the initial form we set $\tilde{r} = \sinh^{-1} \tilde{\rho}$, thereby obtaining
\begin{equation}
g = \frac{d\tilde{r}^2}{1+\tilde{r}^2} + \left( \tilde{r}^2 \sigma_{\alpha\beta} + \frac{\mathbf{m}^g_{\alpha\beta} + \tfrac{1}{3}\mathbf{m}^r \sigma_{\alpha\beta} }{\tilde{r}} + O(\tilde{r}^{-2})\right) dx^\alpha dx^\beta.
\end{equation}
Finally, it is straightforward to check that the described change of coordinates results in
 \begin{equation}
k = \left( \frac{1}{1+\tilde{r}^2} + O(\tilde{r}^{-5}) \right)d\tilde{r}^2 + O(\tilde{r}^{-3}) d\tilde{r} dx^{\alpha} + \left( \tilde{r}^2 \sigma_{\alpha\beta} + \frac{\mathbf{m}^k_{\alpha\beta}+ \tfrac{1}{3}\mathbf{m}^r \sigma_{\alpha\beta}}{\tilde{r}} + O(\tilde{r}^{-2})\right) dx^\alpha dx^\beta,
\end{equation}
thus the mass aspect function remains unchanged.

\begin{remark}
Note that in \cite[Section 3.2.1]{AnderssonCaiGalloway} it is assumed that the metric has smooth conformal compactification. In general, the application of the conformal gauge change will result in loss of regularity of the initial data by one derivative, see e.g. \cite[Lemma 5.1]{Lee}.
\end{remark}

\section{Appendix B: Boundary Integrals}
\label{sec10} \setcounter{equation}{0}
\setcounter{section}{10}

In this section we will show that boundary integrals arising from \eqref{17} and \eqref{56} vanish.
\begin{lemma}\label{lemma3}
Under the hypotheses and notation of Theorem \ref{thm1}
\begin{equation}\label{010101}
 \lim_{\tau\rightarrow\infty}\int_{\overline{S}_{\tau}} u \overline{g}(q,\nu_{\overline{g}}) = -4\pi(2u_{0}+\mathcal{A}).
\end{equation}
In particular, this vanishes when $m=\overline{m}_{adm}$.
\end{lemma}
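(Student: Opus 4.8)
The plan is to evaluate the integrand $u\,\overline{g}(q,\nu_{\overline{g}})$ to order $O(r^{-2})$ in the asymptotically flat end of the Jang graph and then integrate over large surfaces, following the scheme of \cite{BrayKhuri1,BrayKhuri2} but now tracking the slower decay produced by the hyperboloidal behavior of $f$. As a preliminary reduction I would replace the flow surfaces $\overline{S}_{\tau}$ by coordinate spheres: by the Huisken--Ilmanen regularity already invoked in the proof of Theorem \ref{thm1}, for large $\tau$ the surfaces $\overline{S}_{\tau}$ are smooth and, after rescaling by their radius, converge to round spheres in the $y=(r,\theta,\phi)$ chart of Lemma \ref{lemma1}, with $|\overline{S}_{\tau}|=4\pi r(\tau)^{2}(1+o(1))$ and $r(\tau)\to\infty$. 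Since the computation below shows the integrand is $O(r^{-2})$ with a leading coefficient that is constant on $S^{2}$, while the area element of a coordinate sphere $S_{r}$ equals $r^{2}\,dV_{\sigma}(1+O(r^{-2}))$, the difference between integrating over $\overline{S}_{\tau}$ and over $S_{r(\tau)}$ is $o(1)$; thus it suffices to compute $\lim_{r\to\infty}\int_{S_{r}}u\,\overline{g}(q,\nu_{\overline{g}})$.

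The core of the argument is the asymptotic analysis. From \eqref{2}, \eqref{3}, \eqref{15.3} and \eqref{11}--\eqref{13} one obtains $f_{r}=1+\mathcal{A}r^{-1}+O(r^{-2+\varepsilon})$, $f_{\alpha}=O(1)$, $f^{r}=r^{2}+\mathcal{A}r+O(r^{\varepsilon})$, $f^{\alpha}=O(r^{-2})$, hence $1+u^{2}|\nabla f|_{g}^{2}=r^{2}+2(\mathcal{A}+u_{0})r+O(r^{\varepsilon})$; it follows that the coefficient $\xi^{j}:=uf^{j}(1+u^{2}|\nabla f|_{g}^{2})^{-1/2}$ appearing in $q_{i}=\xi^{j}(\pi_{ij}-k_{ij})$ satisfies $\xi^{r}=r+O(r^{-1+\varepsilon})$, $\xi^{\alpha}=O(r^{-3})$, together with $\nu_{\overline{g}}^{r}=1+O(r^{-1})$, $\nu_{\overline{g}}^{\alpha}=O(r^{-2})$ and $u=1+O(r^{-1})$. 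The crux is the behavior of $\pi_{rr}-k_{rr}$: expanding $\nabla_{rr}f=\partial_{r}^{2}f-\Gamma^{r}_{rr}f_{r}-\Gamma^{\alpha}_{rr}f_{\alpha}$ with $\partial_{r}^{2}f=-\mathcal{A}r^{-2}+O(r^{-3+\varepsilon})$ and $\Gamma^{r}_{rr}=-r^{-1}+O(r^{-3})$, the $r^{-2}$ terms cancel and $\nabla_{rr}f=r^{-1}+O(r^{-3+\varepsilon})$; combined with $2u_{r}f_{r}=-2u_{0}r^{-2}+O(r^{-3+\varepsilon})$ and the denominator expansion this gives $\pi_{rr}=r^{-2}-(\mathcal{A}+2u_{0})r^{-3}+O(r^{-4+\varepsilon})$, whereas $k_{rr}=(1+r^{2})^{-1}+O(r^{-5})=r^{-2}+O(r^{-4})$ carries no $r^{-3}$ term, so
\begin{equation*}
\pi_{rr}-k_{rr}=-(\mathcal{A}+2u_{0})\,r^{-3}+O(r^{-4+\varepsilon}).
\end{equation*}
A coarser version of the same bookkeeping gives $\pi_{r\alpha}-k_{r\alpha}=O(r^{-2})$ and $\pi_{\alpha\beta}-k_{\alpha\beta}=O(r^{\varepsilon})$, so the mixed and tangential contributions to $q$ are negligible ($\xi^{\alpha}(\pi_{r\alpha}-k_{r\alpha})=O(r^{-5})$, $q_{\alpha}\nu_{\overline{g}}^{\alpha}=O(r^{-3+\varepsilon})$), and therefore
\begin{equation*}
u\,\overline{g}(q,\nu_{\overline{g}})=u\,\xi^{r}(\pi_{rr}-k_{rr})\,\nu_{\overline{g}}^{r}+O(r^{-3+\varepsilon})=-\frac{\mathcal{A}+2u_{0}}{r^{2}}+O(r^{-3+\varepsilon}).
\end{equation*}

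Integrating over $S_{r}$ against $r^{2}\,dV_{\sigma}(1+O(r^{-2}))$ and using that $\mathcal{A}=2m$ and $u_{0}$ are constants yields $\int_{S_{r}}u\,\overline{g}(q,\nu_{\overline{g}})=-4\pi(2u_{0}+\mathcal{A})+O(r^{-1+\varepsilon})$, which with the first step gives \eqref{010101}. For the concluding remark, $m=\overline{m}_{adm}$ together with $\overline{m}_{adm}=2m+u_{0}$ from Lemma \ref{lemma1} forces $u_{0}=-m$, whence $2u_{0}+\mathcal{A}=-2m+2m=0$ and the limit vanishes. I expect the main obstacle to be purely computational: one must expand $f$, $u$, $\overline{g}$, and the Christoffel symbols of $g$ to high enough order to exhibit the cancellation of the $r^{-1}$ and $r^{-2}$ terms in $\nabla_{rr}f$, and hence in $\pi_{rr}-k_{rr}$, and to verify that every mixed, tangential, and lower-order term really integrates to $o(1)$ over $S_{r}$ — this is the same type of estimate carried out in Lemma \ref{lemma1} and can be imported here with only minor changes.
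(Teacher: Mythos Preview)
Your proposal is correct and follows essentially the same approach as the paper: reduce to coordinate spheres via Huisken--Ilmanen regularity, expand $\pi_{rr}-k_{rr}$ to find the leading term $-(\mathcal{A}+2u_{0})r^{-3}$, verify the mixed and tangential contributions are lower order, and integrate. The only cosmetic difference is that the paper works in the $g$-orthonormal frame $(e_{r},e_{\theta},e_{\phi})$ and uses an explicit formula for $\nu_{\overline{g}}$ from \cite{ChaKhuri1} to rewrite the surface integral over $S_{r}$ with the $g$-area form (equation \eqref{03}), whereas you work directly in the coordinate basis and integrate against the $\overline{g}$-area form; since the two area forms differ by the factor $\sqrt{1+u^{2}|\nabla_{S}f|^{2}}=1+O(r^{-2})$, both bookkeeping schemes yield the same leading term.
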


\begin{proof}
As explained in the proof of Theorem \ref{thm1}, the results in \cite{HuiskenIlmanen1} allow for an approximation by coordinate spheres $\overline{S}_{r}$ in the asymptotically flat end of $(\overline{M},\overline{g})$, so that
\begin{equation}
\lim_{\tau\rightarrow\infty} \int_{\overline{S}_{\tau}} u \overline{g}(q,\nu_{\overline{g}})
 =\lim_{r\rightarrow\infty}\int_{\overline{S}_{r}} u \overline{g}(q,\nu_{\overline{g}}).
\end{equation}
Let $S_{r}$ be coordinate spheres and $(e_{r},e_{\theta},e_{\phi})$ be an orthonormal frame (associated with cylindrical coordinates) in the asymptotic end of $(M,g)$. Then calculations in Appendix C of \cite{ChaKhuri1} yield
\begin{equation} \label{01}
\nu_{\overline{g}} = \sqrt{\frac{1 + u^{2}|\nabla_{S}f|^{2}}{1+u^{2}|\nabla f|_{g}^{2}}}
\left(e_{r} - \frac{u^{2}e_{r}(f)e_{\theta}(f)}{1+ u^{2}|\nabla_{S}f|^{2}} e_{\theta}
- \frac{u^{2}e_{r}(f)e_{\phi}(f)}{1+u^{2}|\nabla_{S}f|^{2}} e_{\phi} \right),
\end{equation}
where $\nabla_{S}$ denotes covariant differentiation on $S_{r}$. Since the area forms of $\overline{S}_{r}$ and $S_{r}$ differ by a factor of $\sqrt{1 + u^{2}|\nabla_{S}f|^{2}}$, it follows that
\begin{equation} \label{03}
\int_{\overline{S}_{r}} u \overline{g}(q,\nu_{\overline{g}})
= \int_{S_{r}}  \frac{u(1 + u^{2}|\nabla_{S}f|^{2})}{\sqrt{1+u^{2}|\nabla f|_{g}^{2}}}
\left(q(e_{r}) - \frac{u^{2}e_{r}(f)}{1+ u^{2}|\nabla_{S}f|^{2}} q(\nabla_{S}f)\right).
\end{equation}

We will now compute the asymptotics for each term in \eqref{03}. Recall the definition of $q$ in \eqref{15.2}, and observe that
%For $\alpha, \beta = \theta, \phi$, as $r \to \infty$, recall that
%\begin{equation} \label{04}
%k_{rr} = \frac{1}{1+r^{2}} + O(r^{-5}), \quad k_{r\alpha}= O(r^{-3}), \quad %k_{\alpha \beta}= r^{2}\sigma_{\alpha \beta} + O(r^{-1}),
%\end{equation}
%\begin{equation}\label{04-1}
%f(r, \theta, \phi)=\sqrt{1+r^{2}} + A\log{r} + + B(\theta, \phi)+\widetilde{f}%(r, \theta, \phi),
%\end{equation}
%\begin{equation}\label{06}
%\partial_{\alpha}\widetilde{f}=O(r^{-\epsilon}), \quad \partial_{r}%\widetilde{f}=O(r^{-1-\epsilon}),
%\quad \partial_{\alpha}\partial_{r}\widetilde{f}=O(r^{-\epsilon}), \quad %\partial_{r}^{2}\widetilde{f} = O(r^{-2-\epsilon}),
%\end{equation}
%and
%\begin{equation}\label{06-2}
%\phi = 1 + \frac{C}{r} + O(r^{-2}).
%\end{equation}
\begin{equation}\label{07}
\begin{split}
\pi_{rr}&= \frac{u \nabla_{rr}f + 2\partial_{r}u\partial_{r}f}{\sqrt{1+u^{2}|\nabla f|_{g}^{2}}} \\
&= \left( \frac{1}{\sqrt{1+r^{2}}} - \frac{u_{0}}{r\sqrt{1+r^{2}}} + O(r^{-3+\varepsilon})\right) \left( \frac{1}{r} - \frac{u_{0}+\mathcal{A}}{r^{2}}+ O(r^{-3+\varepsilon}) \right) \\
&=\frac{1}{r^{2}}-\frac{2u_{0}+\mathcal{A}}{r^{3}} + O(r^{-4+\varepsilon}),
\end{split}
\end{equation}
\begin{equation} \label{08}
\pi_{r \alpha} = \frac{u \nabla_{r\alpha}f + \partial_{r}u\partial_{\alpha}f + \partial_{\alpha}u\partial_{r}f}{\sqrt{1+u^{2}|\nabla f|_{g}^{2}}}
= o(1)\left( \frac{1}{r} - \frac{u_{0}+\mathcal{A}}{r^{2}}+ O(r^{-3+\varepsilon}) \right)
= o(r^{-1}),
\end{equation}
and
\begin{equation} \label{09}
\begin{split}
\pi_{\alpha \beta}=&= \frac{u\nabla_{\alpha\beta} f + \partial_{\beta}u\partial_{\alpha}f + \partial_{\alpha}u\partial_{\beta}f}{\sqrt{1+u^{2}|\nabla f|_{g}^{2}}} \\
&= \frac{1}{2}\left(u g^{rr}\partial_{r}f\partial_{r}g_{\alpha \beta} +O(1)\right)\left( \frac{1}{r} - \frac{u_{0}+\mathcal{A}}{r^{2}}+ O(r^{-3+\varepsilon}) \right)\\
&= r^{2}\sigma_{\alpha \beta} + O(r^{\varepsilon}).
\end{split}
\end{equation}
Substituting \eqref{07}, \eqref{08}, and \eqref{09} into the expression for $q$ yields
\begin{equation} \label{010}
\begin{split}
q(e_{r}) =& q(\sqrt{1+r^{2}}\partial_{r})+o(r^{-1}) \\
=& \frac{u f^{r}(\pi_{rr}-k_{rr})\sqrt{1+r^{2}}+u f^{\alpha}(\pi_{r\alpha}-k_{r\alpha})\sqrt{1+r^{2}}}{\sqrt{1 + u^{2}|\nabla f|_{g}^{2}}}
+o(r^{-1})\\
=& \sqrt{1+r^{2}}\left(r^{2}+r(u_{0}+\mathcal{A})+O(r^{\varepsilon}) \right) \left( -\frac{2u_{0}+\mathcal{A}}{r^{3}}+O(r^{-4+\varepsilon})\right)\\
&\cdot\left( \frac{1}{r} - \frac{u_{0}+\mathcal{A}}{r^{2}}+ O(r^{-3+\varepsilon}) \right) +o(r^{-1}) \\
=& -\frac{2u_{0}+\mathcal{A}}{r} + o(r^{-1}),
\end{split}
\end{equation}
and
\begin{equation} \label{011}
q(\nabla_{S} f)= \sum_{\alpha, \beta = \theta, \phi} \frac{u g^{\alpha \beta}f_{\alpha}f^{j}(\pi_{\beta j}-k_{\beta j})}{\sqrt{1 + u^{2}|\nabla f|_{g}^{2}}}
= o(r^{-2}).
\end{equation}
The desired result now follows, and \eqref{010101} vanishes when $m=\overline{m}_{adm}$ since $\mathcal{A}=2m$ and by Lemma \ref{lemma1}, $\overline{m}_{adm}=2m+u_{0}$.
\end{proof}

%We will now show that the boundary integral in \eqref{56} vanishes.
\begin{lemma}\label{lemma4}
Under the hypotheses and notation of Theorem \ref{thm7}
\begin{equation}
 \lim_{r\rightarrow\infty}\int_{\overline{S}_{r}} u \overline{g}(Q,\nu_{\overline{g}}) = 0.
\end{equation}
This relies on the fact that $m=\overline{m}_{adm}$.
\end{lemma}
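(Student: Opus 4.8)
The plan is to follow the proof of Lemma \ref{lemma3} essentially verbatim, with the $1$-form $Q$ of \eqref{47} playing the role of $q$. As in the proof of Theorem \ref{thm7}, the coordinate spheres $\overline{S}_{r}$ of $(\overline{M}^{\,+}_{end},\overline{g})$ are well-behaved exhausting surfaces in the asymptotically flat end, and using the decomposition \eqref{01} of $\nu_{\overline{g}}$ from Appendix C of \cite{ChaKhuri1} together with the relation between the area forms of $\overline{S}_{r}$ and the coordinate sphere $S_{r}\subset(M,g)$, one rewrites
\[
\int_{\overline{S}_{r}}u\,\overline{g}(Q,\nu_{\overline{g}})
=\int_{S_{r}}\frac{u}{\sqrt{\volg}}\Big[(1+u^{2}|\nabla_{S}f|^{2})\,Q(e_{r})-u^{2}e_{r}(f)\,Q(\nabla_{S}f)\Big],
\]
exactly as in \eqref{03}; it then suffices to control $Q(e_{r})$ and $Q(\nabla_{S}f)$ in $M^{+}_{end}$.

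First I would discard the pieces of \eqref{47} that are manifestly of low order. Since $Y^{\phi}=O(r^{-3})$ by \eqref{43}, both $\overline{Y}^{j}\overline{\nabla}_{ij}f$ and the discrepancy $w^{i}-v^{i}=u^{-1}\overline{Y}^{i}(\volg)^{-1/2}$ contribute nothing to the limit, and $\overline{k}_{ij}=O(r^{-2})$ by \eqref{0001} kills $u\overline{g}^{jl}f_{l}\overline{k}_{ij}$ as well. The surviving terms $w^{j}(k-\pi)_{ij}$ and $uf_{i}w^{l}w^{j}(k-\pi)_{lj}\sqrt{\volg}$ have the same structure as in Lemma \ref{lemma3}, so the estimate reduces to the radial asymptotics of $k-\pi$. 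Using the expansions \eqref{52}, \eqref{013} (so in particular the pieces already computed in \eqref{018-1}--\eqref{018-3}) together with the fact that $f$ solves \eqref{48.1} --- whence $Tr_{g}\pi=Tr_{g}k$ by axial symmetry --- one finds $\pi_{rr}=r^{-2}-(2\mathcal{C}_{1}+\mathcal{A})r^{-3}+h_{4}(\theta,\phi)\,r^{-4}+O(r^{-5})$, and a direct calculation shows that after substituting the defining relations \eqref{014-1} for $\mathcal{D}_{1},\mathcal{D}_{2}$ the coefficient collapses to $h_{4}=\mathcal{A}(2\mathcal{C}_{1}+\mathcal{A})-1$. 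Since $k_{rr}=r^{-2}-r^{-4}+O(r^{-5})$ by \eqref{31}, this gives $(k-\pi)_{rr}=(2\mathcal{C}_{1}+\mathcal{A})r^{-3}-\mathcal{A}(2\mathcal{C}_{1}+\mathcal{A})r^{-4}+O(r^{-5})$, i.e. the first two coefficients are proportional to $2\mathcal{C}_{1}+\mathcal{A}$; similarly the mixed and angular contractions $v^{r}v^{\alpha}(k-\pi)_{r\alpha}$ and $v^{\alpha}v^{\beta}(k-\pi)_{\alpha\beta}$ are $O(r^{-4})$ or better. By Lemma \ref{lemma31} and $\mathcal{C}_{1}=-\overline{m}_{adm}$ the hypothesis $m=\overline{m}_{adm}$ is equivalent to $2\mathcal{C}_{1}+\mathcal{A}=0$, so all these quantities acquire an extra order of decay; feeding this back yields $w^{l}w^{j}(k-\pi)_{lj}=O(r^{-3})$, and hence $Q(e_{r})$ and $Q(\nabla_{S}f)$ decay fast enough that the bracketed integrand above is $o(r^{-2})$ against an area element of size $O(r^{2})$, so the limit is $0$.

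The main obstacle is exactly the bookkeeping in the previous paragraph: the expansions of $\pi_{ij}$ and $k_{ij}$ must be carried one order beyond what was needed in Lemma \ref{lemma3}, and it is at that order that the otherwise opaque formulas \eqref{014-1} do their work, forcing the relevant coefficients of $k-\pi$ to be proportional to $2\mathcal{C}_{1}+\mathcal{A}$ and hence to vanish under the hypothesis $m=\overline{m}_{adm}$. A secondary point is that this computation uses $\widetilde{f}=O_{2}(r^{-3})$ (that is \eqref{014}), whose proof, as noted in the Remark following Theorem \ref{thm6}, is expected to require the rescaling argument of \cite{Sakovich} rather than the sub/supersolution construction used there. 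Alternatively, one could try to avoid part of this bookkeeping by re-expressing the asymptotic end of the Jang graph over the barrier graph as in Remark \ref{remark001} and reducing directly to the asymptotically flat boundary computation carried out in \cite{ChaKhuri1}.
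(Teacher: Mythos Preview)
Your approach matches the paper's: strip off the $Y$-dependent pieces of $Q$ (the paper does this by noting $\overline{k}(u^{2}\overline{\nabla}f,\nu_{\overline{g}})=0$ from axisymmetry and $(\operatorname{Hess}_{\overline{g}}f)(\overline{Y},\nu_{\overline{g}})=O(r^{-3})$), then reduce the surviving boundary integral to a quantity governed by $(k-\pi)_{rr}$. Your identification of the $r^{-3}$ and $r^{-4}$ coefficients, including $h_{4}=\mathcal{A}(2\mathcal{C}_{1}+\mathcal{A})-1$, is correct.

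The gap is in the order-counting at the final step. After the $Y$-terms are removed, the integral reduces (either by the paper's algebraic simplification to $\int_{S_{r}}u\sqrt{1+u^{2}|\nabla f|_{g}^{2}}\,(k-\pi)(w,e_{r})$, or equivalently by your term-by-term estimate) to a quantity behaving like $r^{5}(k-\pi)_{rr}$. Hence one needs $(k-\pi)_{rr}=o(r^{-5})$, not merely $O(r^{-5})$. In your own bookkeeping: from $w^{l}w^{j}(k-\pi)_{lj}=O(r^{-3})$ the dominant term $uf_{r}\,w^{l}w^{j}(k-\pi)_{lj}\sqrt{1+u^{2}|\nabla f|^{2}}$ contributes $O(r^{-2})$ to $Q_{r}$, so $Q(e_{r})=O(r^{-1})$ and the bracketed integrand is only $O(r^{-2})$, which against area $\sim r^{2}$ gives $O(1)$, not $0$. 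Relative to Lemma~\ref{lemma3} you must carry $(k-\pi)_{rr}$ \emph{two} orders further, not one: the extra term $uf_{i}\,w^{l}w^{j}(k-\pi)_{lj}\sqrt{1+u^{2}|\nabla f|^{2}}$ in $Q$ (absent from $q$) effectively multiplies the integrand by $1+u^{2}|\nabla f|^{2}\sim r^{2}$.

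The missing $r^{-5}$ cancellation is exactly the role of $\mathcal{D}_{2}$ and of the assumption \eqref{31}. Carrying the expansions \eqref{007-1}--\eqref{007-3} one order further and substituting the full definition of $\mathcal{D}_{2}$ in \eqref{014-1}---in particular its $b^{(5)}_{rr}$ term---yields $\pi_{rr}=r^{-2}-r^{-4}+b^{(5)}_{rr}r^{-5}+O(r^{-6})$, which matches $k_{rr}=(1+r^{2})^{-1}+b^{(5)}_{rr}r^{-5}+o(r^{-5})$ to the required order. Thus $m=\overline{m}_{adm}$ (i.e.\ $2\mathcal{C}_{1}+\mathcal{A}=0$) kills the $r^{-3}$ and $r^{-4}$ coefficients as you observed, but the $r^{-5}$ cancellation is a separate consequence of how $\mathcal{D}_{2}$ was engineered, and without invoking it the argument does not close.
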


\begin{proof}
Recall that
\begin{align} \label{401}
\begin{split}
Q (\cdot)
&= (\operatorname{Hess}_{\overline{g}} f)(\by, \cdot)-  \bk(u \barna f, \cdot)
+ (k-\pi)(w, \cdot)+  (k-\pi)(w,w)\frac{u  df}{\sqrt{\volbarg}}, \\
\pi_{ij} &= \frac{u \nabla_{ij}f + u_{i}f_{j} + u_{j}f_{i} + \frac{1}{2}(g_{i \phi}Y^{\phi}_{,j}+g_{j \phi}Y^{\phi}_{,i})}{\sqrt{1+ u^{2}|\nabla f|_{g}^{2}}}, \\
w &= \frac{u \overline{\nabla}f + u^{-1}\byp }{\sqrt{1-u^{2}|\overline{\nabla} f|_{\bg}^{2}}}
      = \frac{u \nabla f + u^{-1}\byp }{\sqrt{1+u^{2}|\nabla f|_{g}^{2}}}.
\end{split}
\end{align}
Therefore \eqref{01} implies
\begin{align} \label{003}
\begin{split}
&\int_{\overline{S}_{\infty}} u \overline{g}(Q,\nu_{\overline{g}})  \\
=&\int_{\overline{S}_{\infty}} u(\operatorname{Hess}_{\overline{g}}  f)(\by, \nu_{\overline{g}})-  \bk(u^{2} \barna f, \nu_{\overline{g}})  \\
&+ \int_{S_{\infty}}  \frac{u(1 + u^{2}|\nabla_{S}f|^{2})}{\sqrt{1+u^{2}|\nabla f|_{g}^{2}}}
\left(\widetilde{Q}(e_{r}) - \frac{u^{2}e_{r}(f)}{1+ u^{2}|\nabla_{S}f|^{2}}\widetilde{Q}(\nabla_{S}f) - \frac{e_{r}(f)}{1+ u^{2}|\nabla_{S}f|^{2}} \widetilde{Q}(Y) \right)
\end{split}
\end{align}
where
\begin{equation}
\widetilde{Q}=(k-\pi)(w, \cdot)+ \sqrt{1+u^{2}|\nabla f|_{g}^{2}} (k-\pi)(w,w) u df.
\end{equation}
Since $\nu_{\overline{g}}$ is the unit normal for an axisymmetric surface we have that
$\bk(u^{2} \barna f, \nu_{\overline{g}}) =0$, and since $\byp = O(r^{-3})$ we have that
$(\operatorname{Hess}_{\overline{g}}  f)(\by, \nu_{\overline{g}})=O(r^{-3})$. It follows that the first integral on the right-hand side of \eqref{003} vanishes.
Furthermore, the integrand of the second integral simplifies to
\begin{align}
\begin{split}
&\frac{(1 + u^{2}|\nabla_{S}f|^{2})}{\sqrt{1+u^{2}|\nabla f|_{g}^{2}}} \left(\widetilde{Q}(e_{r}) - \frac{u^{2}e_{r}(f)}{1+ u^{2}|\nabla_{S}f|^{2}}\widetilde{Q}(\nabla_{S}f) - \frac{e_{r}(f)}{1+ u^{2}|\nabla_{S}f|^{2}} \widetilde{Q}(Y)\right) \\
=& \frac{(1 + u^{2}|\nabla_{S}f|^{2})}{\sqrt{1+u^{2}|\nabla f|_{g}^{2}}} (k-\pi)(w, e_{r})- (k-\pi)\left(w, \frac{u^{2}e_{r}(f)\nabla_{S}f+e_{r}(f)Y}{\sqrt{1+u^{2}|\nabla f|_{g}^{2}}} \right)
+ u(k-\pi)(w,w) e_{r}(f) \\
=&\frac{(1 + u^{2}|\nabla_{S}f|^{2})}{\sqrt{1+u^{2}|\nabla f|_{g}^{2}}} (k-\pi)(w, e_{r}) + (k-\pi)\left(w, \frac{u^{2}e_{r}(f)^{2}e_{r}}{\sqrt{1+u^{2}|\nabla f|_{g}^{2}}}\right) \\
=& \sqrt{1+u^{2}|\nabla f|_{g}^{2}}(k-\pi)(w,e_{r}).
\end{split}
\end{align}
Hence
\begin{align} \label{004}
\int_{\overline{S}_{\infty}} u \overline{g}(Q,\nu_{\overline{g}})  = \int_{S_{\infty}} u\sqrt{1+u^{2}|\nabla f|_{g}^{2}}(k-\pi)(w,e_{r}).
\end{align}

We now compute asymptotics of the tensor $\pi$. Using \eqref{43}, \eqref{52}, \eqref{013}, and \eqref{014} produces
\begin{align} \label{008}
\begin{split}
\pi_{r \alpha} &= \left(\nabla_{r\alpha}f + \frac{\partial_{r}u\partial_{\alpha}f}{u} + \frac{\partial_{\alpha}u\partial_{r}f}{u} + \frac{1}{2u}(g_{\alpha \phi}\partial_{r}Y^{\phi}+g_{r \phi}\partial_{\alpha}Y^{\phi})\right)\left(u^{-2}+|\nabla f|_{g}^{2}\right)^{-1/2}  \\
&= \left(\partial_{\alpha}u- \frac{1}{2}g^{\beta \gamma}\partial_{\beta}f\partial_{r}g_{\gamma \alpha}+O(r^{-2}) \right)\left( \frac{1}{r} + O(r^{-2})) \right) \\
&= O(r^{-2}),
\end{split}
\end{align}
and
\begin{align} \label{009}
\begin{split}
\pi_{\alpha \beta}&= \left(\nabla_{\alpha\beta}f + \frac{\partial_{\beta}u\partial_{\alpha}f}{u} + \frac{\partial_{\alpha}u\partial_{\beta}f}{u}+\frac{1}{2u}(g_{\alpha \phi}\partial_{\beta}Y^{\phi}+g_{\beta \phi}\partial_{\alpha}Y^{\phi})\right)\left(u^{-2}+|\nabla f|_{g}^{2}\right)^{-1/2}\\
&= \left( \frac{1}{2}g^{rr}\partial_{r}f\partial_{r}g_{\alpha \beta} + O(1) \right) \left( \frac{1}{r} + O(r^{-2}) \right) \\
&= r^{2}\sigma_{\alpha \beta} + O(r^{-1}).
\end{split}
\end{align}
Thus with the help of
\begin{equation}\label{000}
e_{r}= \left(\sqrt{1+r^{2}} +O(r^{-2}) \right) \partial_{r} + O(r^{-5}) \partial_{\theta}
+ O(r^{-5}) \partial_{\phi},
\end{equation}
\begin{equation}\label{0000}
e_{\theta}= \left(\frac{1}{r}+O(r^{-4})\right)\partial_{\theta} + O(r^{-5})  \partial_{\phi},  \text{ }\text{ }\text{ }\text{ }\text{ }\text{ }\text{ }
e_{\phi} = \left(\frac{1}{r\sin{\theta}} +O(r^{-4}) \right) \partial_{\phi},
\end{equation}
it follows that
\begin{align} \label{0010}
\begin{split}
\int_{\overline{S}_{\infty}} u \overline{g}(Q,\nu_{\overline{g}})
&= \int_{S_{\infty}} u\sqrt{1+u^{2}|\nabla f|_{g}^{2}}(k-\pi)(w(e_{r})e_{r},e_{r})  \\
&= \int_{S_{\infty}} u^{2}e_{r}(f)(r^{2}+O(1)) (k-\pi)_{rr} .
\end{split}
\end{align}

It is straightforward to check that \eqref{0010} vanishes if $(k-\pi)_{rr}=o(r^{-5})$. To see that this is the case, observe that
\begin{equation} \label{007-1}
\nabla_{rr}f = \partial^2_r f - \Gamma_{rr}^{r}\partial_{r}f + O(r^{-5})
= \frac{1}{r} + \frac{2\mathcal{D}_{1}-1}{2r^{3}} + \frac{3 \mathbf{m}^{r}-2\mathcal{A}+8\mathcal{D}_{2}}{2r^{4}} + O(r^{-5}),
\end{equation}
\begin{align} \label{007-2}
\begin{split}
\frac{2\partial_{r}u\partial_{r}f}{u}
= & -\frac{2\mathcal{C}_{1}}{r^{2}}
  - \frac{4\mathcal{C}_{2}+2\mathcal{A}\mathcal{C}_{1}-2\mathcal{C}_{1}^{2}}  {r^{3}} \\
  &+ \frac{\mathcal{C}_{1}- 6 \mathcal{C}_{3} -4\mathcal{A} \mathcal{C}_{2}+6 \mathcal{C}_{1} \mathcal{C}_{2}+ 2\mathcal{C}_{1}\mathcal{D}_{1}
  +2 \mathcal{A} \mathcal{C}_{1}^{2} - 2 \mathcal{C}_{1}^{3}}{r^{4}} + O(r^{-5}),
%&= \frac{2 \overline{m}}{r^{2}} + \frac{6 \overline{m}^{2}-4C_{2}}{r^{3}} + \frac{-\overline{m}-18\overline{m}C_{2}-6C_{3}+12\overline{m}^{3}}{r^{4}} + O(r^{-5}),%
\end{split}
\end{align}
and
\begin{align} \label{007-3}
\begin{split}
\left(u^{-2}+|\nabla_{g}f|^{2} \right)^{-1/2}
=& \frac{1}{r}
 - \frac{\mathcal{A}}{r^{2}}+ \frac{2\mathcal{D}_{1}+2\mathcal{A}^{2}-1}{2r^{3}}  \\
 &+ \frac{\tfrac{1}{2}\mathbf{m}^{r} + \mathcal{A}+\mathcal{C}_{1}-2 \mathcal{A} \mathcal{D}_{1} + 2 \mathcal{D}_{2}-\mathcal{A}^{3}}{r^{4}} + O(r^{-5}).
\end{split}
\end{align}
Definitions of the coefficients in \eqref{014-1} then imply that
\begin{align}\label{007}
\begin{split}
\pi_{rr}
&= \left(\nabla_{rr}f + 2\frac{\partial_{r}u\partial_{r}f}{u}+g_{r \phi}\partial_{r}Y^{\phi}\right)
\left(u^{-2}+|\nabla_{g}f|^{2} \right)^{-1/2}\\
%&= \frac{1}{r^{2}} - \frac{A+2C_{1}}{r^{3}} + \frac{2D_{1}+A^{2}-4C_{2}+2C_{1}%^{2}-1}{r^{4}} \\
%&+ \frac{6D_{2}-3AD_{1}+A/2-A^{3}+3C_{1}-6C_{3}+6C_{1}C_{2}-2C_{1}^{3}}{r^{5}} %+ O(r^{-6}) \\
&= \frac{1}{r^{2}} -\frac{1}{r^{4}} + \frac{b^{(5)}_{rr}}{r^{5}} +O(r^{-6}) ,
\end{split}
\end{align}
Moreover
\begin{equation}
k_{rr} = \frac{1}{1+r^{2}}  + \frac{b^{(5)}_{rr}}{r^{5}} + o(r^{-5}),
\end{equation}
which leads to the desired conclusion.
%\begin{equation}
%\begin{split}
%&\int_{S_{\infty}} u \sqrt{1+u^{2}|\nabla f|_{g}^{2}}(k-\pi)(w(e_{r})e_{r}, %e_{r}) dS  \\
%&= \int_{S_{\infty}} u^{2}(f_{,r})(1+r^{2})^{3/2}(k-\pi)_{rr} dS \\
%&= 0,
%\end{split}
%\end{equation}
%if and only if $m=\overline{m}$.
\end{proof}

\end{document}